\newtheorem{example}{Example}
\newtheorem{definition}{Definition}
\newtheorem{theorem}{Theorem}
\newtheorem{lemma}{Lemma}
\newtheorem{corollary}{Corollary}
\newcommand{\Z}{\mathbb{Z}}
\newcommand{\R}{\mathbb{R}}
\newcommand{\ning}{\Gamma(V,\alpha,\beta,k) }
\newcommand{\turing}{\Gamma(2S,T_{2\theta},T,3) }
\DeclareMathOperator*{\vol}{vol}
\title{On the decoding of lattices constructed via a single parity check}
\author{Vincent Corlay, Joseph J. Boutros, Philippe Ciblat, and Lo\"ic Brunel 
\thanks{V. Corlay is with Mitsubishi Electric R\&D Centre Europe, Rennes, France, and Telecom Paris,  Palaiseau, France (v.corlay@fr.merce.mee.com). 
J. J. Boutros is with the Department of Electrical and Computer Engineering, Texas A\&M University at Qatar, Doha, Qatar (boutros@tamu.edu).
P.~Ciblat is with Telecom Paris, Palaiseau, France (philippe.ciblat@telecom-paris.fr). 
L. Brunel is with Mitsubishi Electric R\&D Centre Europe, Rennes, France (l.brunel@fr.merce.mee.com).
}
\thanks{Part of the section on Barnes-Wall lattices was presented at the IEEE International Symposium on Information Theory, Los Angeles, USA, July 2020.}
}
\begin{document}

\maketitle



\begin{abstract}
This paper investigates the decoding of a remarkable set of
lattices:
We treat in a unified framework the Leech  lattice in  dimension  24, 
the  Nebe lattice  in dimension 72, and the Barnes-Wall  lattices.
A new interesting lattice is constructed as a simple application of single parity-check principle on the Leech lattice.
The common aspect of these lattices is that they can be obtained via a single parity check or via the $k$-ing construction.
We exploit these constructions to introduce a new efficient paradigm for decoding.
This leads to efficient list decoders and 
quasi-optimal decoders on the Gaussian channel.
Both theoretical and practical performance (point error probability and complexity) of the new decoders are provided. 
\end{abstract}

\begin{IEEEkeywords}
Single parity check, Leech lattice, Nebe lattice, Barnes-Wall lattices, bounded-distance decoding, list decoding.
\end{IEEEkeywords}


\section{Introduction}

The Leech  lattice  was discovered at  the dawn  of the
communications  era~\cite{Leech1967}.  Recently,  it  was proved  that
the Leech lattice is  the  densest   packing  of  congruent  spheres  in
24 dimensions~\cite{Cohn2017}. Between these  two major events, it has
been  subject to  countless  studies. This  24-dimensional lattice  is
exceptionally  dense  for its  dimension  and has  a
remarkable  structure.  For  instance, it  contains the  densest known
lattices  in many  lower  dimensions and  it can  be  obtained in
different ways from these lower dimensional lattices. In fact, finding
the simplest structure for efficient decoding of the Leech lattice has
become  a  challenge  among  engineers.  Forney  even  refers  to  the
performance   of    the   best    algorithm   as   a    world   record~\cite{Forney1989}.
Of course, decoding the Leech lattice is not just an
amusing game between engineers as it has many practical interests: Its
high fundamental  coding gain of  6~dB makes it  a good candidate  for high
spectral  efficiency  short  block   length  channel  coding  and  its
spherical-like  Voronoi region  of  16969680 facets~\cite{Conway1999}
enables  to get  state-of-the-art performance  for operations  such as
vector  quantization or  lattice shaping. \\
Recently, Nebe solved a
long standing open problem when  she found an extremal even unimodular
lattice  in dimension~72~\cite{Nebe2012}.
The construction she used to
obtain  this new  lattice  involves  the Leech  lattice  and  Turyn's construction~\cite{Turyn1967} \cite[Chap. 18, Sec~7.4]{MacWilliams1977}.
This 72-dimensional extremal lattice (referred to as the Nebe lattice)  is likely to have better property than  the Leech
lattice for the operations mentioned above.  However, unlike the Leech
lattice, its  decoding aspect has  not been  studied much and,  to the
best  of our  knowledge,  no efficient  decoding  algorithm is known in the literature.
Moreover, none of the existing decoding algorithms for the Leech lattice
seems to scale to the Nebe lattice. The primary motivation of this work
was to propose a new decoder for this lattice\footnote{An efficient decoder was found, see Section~\ref{sec_deco_turyn}.}. 

In this paper, the Leech lattice and the Nebe lattice are presented as special instances of general constructions:
the $k$-ing construction $\ning$
and the single parity-check $k$-lattices $\Gamma(V,\beta,k)_{\mathcal{P}}$, where $ \Gamma(V,\beta,k)_{\mathcal{P}} \subseteq \ning $. 
As examples, the set of lattices obtained as $\ning$ for $k=3$ (known as Turyn's construction \cite[Chap. 18, Sec~7.4]{MacWilliams1977}) include the Leech lattice and the Nebe lattice. 
Regarding the single parity-check $k$-lattices, Barnes-Wall lattices are part of the case $k=2$. 


This framework enables to jointly investigate the construction of several lattices and to present a new decoding paradigm for all of them. 
The paradigm can either be used for bounded-distance decoding (BDD), for list decoding, or for (quasi or exact)-maximum likelihood decoding (MLD) on the additive white Gaussian channel. For regular list decoding (i.e. enumerating all the lattice points in a sphere whose radius is greater than half the minimum distance\footnote{In this paper, we consider squared distances. Therefore, for consistency we should have stated: Greater than a $quarter$ of the minimum distance.} of the lattice), the paradigm can be combined with a technique called the splitting strategy which enables to reduce the complexity. 
Regarding quasi-optimal decoding on the Gaussian channel, our analysis reveals that regular list decoding is not the best choice with our decoding paradigm from a complexity point of view. 
A modified version of the regular list decoder is therefore presented. 
Formulas to predict the performance of these algorithms on the Gaussian channel are provided.


The paper is organized as follows.
Section~\ref{sec_prelem} gives preliminaries.
The $k$-ing construction and the single parity-check $k$-lattices are introduced in Section~\ref{sec_constru_main}.
It is then shown how famous lattices are obtained from these constructions, as well as the parity lattices.
The decoding paradigms are presented in Section~\ref{sec_deco_para}. 
Section~\ref{sec_parity_lat} is dedicated to the study of parity lattices.
Formulas to assess the performance of these algorithms on the Gaussian channel are then provided. 
In Section~\ref{sec_pari_k2}, we further investigate the recursive list-decoding algorithms for the parity lattices with $k=2$ (Barnes-Wall lattices). 
Section~\ref{sec_deco_turyn} focus on the decoding of the Leech and Nebe lattices as special cases of the $k$-ing construction.
Section~\ref{sec_simu_gauss} presents additional numerical results: 
A benchmark of the performance of existing schemes is provided. 
Finally, we draw the conclusions in Section~\ref{sec_conclu} and the appendices are located in Section~\ref{sec_app_main}.



The main contributions of this paper are:

\begin{itemize}
\item A new decoding paradigm to decode $\Gamma(V,\beta,k)_{\mathcal{P}}$ is summarized within Algorithm~\ref{main_alg_pari}. Two list decoding versions of this first algorithm (without and with a technique called the splitting strategy) are then presented.
Moreover, Algorithm~\ref{main_alg_turing} is a direct application of Algorithm~\ref{main_alg_pari} to decode $\ning$. 
See Section~\ref{sec_deco_para}. 
\item
A recursive version of the algorithm of Section~\ref{sec_deco_para} is presented to decode the parity lattices recursively built as $\Gamma(V,\beta,k)_{\mathcal{P}}$.
A modified list-decoding algorithm is proposed for the Gaussian channel. 
Analytic expressions to assess the performance are provided, along with examples.
See Section~\ref{sec_parity_lat}.
\item We show that the parity lattice $L_{3\cdot 24}= \Gamma(V,\beta,3)_{\mathcal{P}}$, as sublattice of $\mathscr{N}_{72}$, has performance only 0.2 dB apart from the Nebe lattice $\mathscr{N}_{72}$ on the Gaussian channel. Moreover, the decoding complexity of $L_{3\cdot 24}$ is significantly reduced. See Section~\ref{sec_parity72}. This is a remarkable result in finding a complexity-performance trade-off. 
\item The case $\Gamma(V,\beta,2)_{\mathcal{P}}$, which includes Barnes-Wall ($BW$) lattices, is also investigated. 
We achieve a lower decoding complexity than the one of existing list decoders for $BW$ lattices.
The modified list-decoding algorithm yields quasi-optimal decoding performance of $BW$ lattices over the Gaussian channel, at a reasonable complexity, up to dimension 128. See Section~\ref{sec_parity_lat}.
\item 
New decoding algorithms for $\Lambda_{24}$ and $\mathscr{N}_{72}$ are developed as an application of our decoding paradigm.
See Section~\ref{sec_deco_turyn}. 
\item These new decoding algorithms uncover the performance of several lattices on the Gaussian channel. 
For instance, Barnes-Wall lattices, the Nebe lattice, and the $3$-parity-Leech lattice $L_{3 \cdot 24}$ are very competitive in their respective dimension: We observe that they have performance similar to known lattices whose dimension is an order of magnitude larger. See Section~\ref{sec_simu_gauss}.
\end{itemize}

\section{Preliminaries}
\label{sec_prelem}

\noindent \textbf{Lattice.} We define a lattice  as a free $J$-module, where
the possible rings $J$ considered in this paper are $\mathbb{Z}$, $\mathbb{Z}[i]$,
and $\mathbb{Z}[\lambda]$, $\lambda = \frac{1+i\sqrt{7}}{2}$.
If $J$ is the regular ring of integers $\mathbb{Z}$, the
lattice $\Lambda$ is a discrete additive subgroup of $\R^n$.
If $J$ is a complex ring of integers, $\Lambda$ is a discrete additive subgroup of $\mathbb{C}^n$ and we say that the lattice is complex.
Given a lattice $\Lambda$ of rank-$n$ in $\R^n$, the rows of a $n\times n$ generator matrix $G$ constitute
a basis of the lattice and any lattice point $x$ is obtained via $x=z \cdot G$, where $z \in \Z^n$.
If it is of rank-$n$ in $\mathbb{C}^n$,
a generator matrix for the corresponding real lattice in $\mathbb{R}^{2n}$ can be obtained as follows.
Map each component $a+ib$ of the complex generator matrix to
\begin{align}
\label{eq_comp_to_real}
\left[
\begin{matrix}
\ \ a &b \\
-b &a
\end{matrix}
\right]
\text{ or }
\left[
\begin{matrix}
\ \ a &b \\
(a-\sqrt{7}b)/2 & (b+\sqrt{7}a)/2
\end{matrix}
\right],
\end{align}
if $J$ is respectively $\mathbb{Z}[i]$ and $\mathbb{Z}[\lambda]$. \\
Given a complex lattice $\Lambda^{\mathbb{C}}$ with generator matrix $G^{\mathbb{C}}$, the lattice generated by
\small
\begin{align}
\label{eq_mat_rot}
\theta \cdot G^{\mathbb{C}}
\end{align}
\normalsize
 is denoted $\theta \Lambda^{\mathbb{C}}$.
Let $\Lambda$, with generator matrix $G$, be the real lattice obtained via \eqref{eq_comp_to_real} 
from the complex lattice $\Lambda^{\mathbb{C}}$.
The real version of $\theta \Lambda^{\mathbb{C}}$, denoted by $\theta \Lambda$, can be either obtained using \eqref{eq_comp_to_real} on
\eqref{eq_mat_rot} or
from $G$ as follows. 
Let $R(2,\theta)$ be the $2\times2$ matrix obtained from $\theta$ via \eqref{eq_comp_to_real}, e.g.
\begin{align}
\small 
R(2, \lambda)=
\left[
\begin{matrix}
1/2   &   \sqrt{7}/2 \\
-\sqrt{7}/2 &  1/2
\end{matrix}
\right] 
\text{ and } 
R(2,\phi)=
\left[
\begin{matrix}
\ \ 1   &   1 \\
-1 &  1
\end{matrix}
\right],
\end{align}
where $\phi = 1+i$.
The scaling-rotation operator $R(n,\theta)$ in dimension $n$ is defined by the application of $R(2,\theta)$ on each pair of components. 
I.e. the scaling-rotation operator is $R(n,\theta)= I_{n/2} \otimes R(2,\theta)$, where $I_{n}$ is the $n\times n$ identity matrix and
$\otimes$ is the Kronecker product.
Then, the real version of $\theta \Lambda^{\mathbb{C}}$ is generated by $G\cdot R(n,\theta)$. \\ 
For a $\mathbb{Z}$-lattice $\Lambda$, the Gram matrix is $G\cdot G^T$. 
The Voronoi cell of $x \in \Lambda$ is:
\begin{equation}
\mathcal{V}(x)=\{ y \in \R^n : \|y-x\| \le \|y-x'\|, \forall x' \in \Lambda \}.
\end{equation}
The fundamental volume of $\Lambda$,
i.e. the volume of its Voronoi cell and its fundamental parallelotope, is denoted by
$\vol(\Lambda)$.
The minimal distance (or minimal squared norm)
of $\Lambda$ is denoted $d(\Lambda)$ and the packing radius is $\rho(\Lambda) = \sqrt{d(\Lambda)}/2$.
We also use $R(\Lambda)$ for the covering radius of $\Lambda$, defined as 
\begin{align}
R(\Lambda)= \underset{y \in \mathbb{R}^n}{\text{max    }} \  \underset{x \in \Lambda}{\text{min}} \ \sqrt{d(y,x)},
\end{align}
where $d(x,y)$ represents the squared Euclidean norm between two elements $x,y \in \R^n$.
The number of lattice points located at a distance $d(\Lambda)$ from the origin is the kissing number $\tau(\Lambda)$.
The fundamental coding gain~$\gamma$ of a lattice $\Lambda$ is given by the following ratio
\vspace{-2mm}
\begin{equation}
\gamma(\Lambda) = \frac{d(\Lambda)}{\text{vol}(\Lambda)^{\frac{2}{n}}}.
\end{equation}
We say that an integral lattice (i.e. the Gram matrix has integer entries) is even
if $\|x\|^2$ is even for any $x$ in $\Lambda$.
Moreover, an integral lattice with $\text{vol}(\Lambda)=1$ is called a unimodular or a self-dual lattice.
Two lattices are equivalent $\Lambda' \cong \Lambda$ if their generator matrices, respectively $G'$ and $G$, are related by $G'=cUGB$,
where $c$ is a non zero constant, $U$ a unimodular matrix, and $B$ an orthogonal matrix. 
If the constant $c$ should be explicit, we write $\Lambda' \cong c\Lambda$. 

Let $\Lambda$ and $\Lambda'$ be lattices where $\Lambda' \subseteq \Lambda$.
If the order of the quotient group $\Lambda/\Lambda'$ is $q$, then $\Lambda$ can be expressed as the union of $q$ cosets of $\Lambda'$.
We denote by $[\Lambda/\Lambda']$ a system of coset representatives for this quotient group. It follows that
\begin{align*}
\Lambda = \bigcup_{x_i \in   [\Lambda/\Lambda']} \Lambda' + x_i = \Lambda'+ [\Lambda/\Lambda'].
\end{align*}
It is simple to prove that \cite[Lem.~1]{Forney1988_1}
\begin{align}
\label{equ_vol}
|\Lambda/\Lambda'| = \frac{\text{vol}(\Lambda')}{\text{vol}(\Lambda)}.
\end{align}

Let $B_r(y)$ be a ball of radius $r$ centered at $y \in \R^n$. The set $\Lambda \cap B_r(y)$, $\Lambda \subset \R^n$, represents the elements $x \in \Lambda$ where $d(x,y) \leq r$.
Let $L(\Lambda,r,y)=|\Lambda \cap B_r(y)|$ be the number of elements in the set $\Lambda \cap B_r(y)$.
The quantity 
\begin{align}
\label{eq_list_nota}
L(\Lambda,r)= \max_{y \in \R^n}|\Lambda \cap B_r(y)|
\end{align}
denotes the maximum number of elements in the set $\Lambda \cap B_r(y)$, for any $y \in \R^n$.
In most situations it will be convenient to consider the relative radius $\delta=r/d(\Lambda)$, which enables to define $l(\Lambda, \delta,y)=L(\Lambda,r,y)$ and $l(\Lambda, \delta)=L(\Lambda,r)$. By abuse of notations, we set  $B_r(y)=B_\delta(y)$; it should be clear from the context whether the radius or relative radius is used.
We also define the relative distance: $\delta(x,y) = \frac{d(x,y)}{d(\Lambda)}$. 

The following Johnson-type bound on the list size for arbitrary
lattices is proved in \cite[Chapter~5]{Micciancio2002}.

\begin{theorem}
\label{lemma_johnson_ori}
Let $\Lambda$ be a lattice in $\R^n$. The list size $L(\Lambda,r)$, defined by \eqref{eq_list_nota}, is bounded as:  
\begin{itemize}
\item $L(\Lambda,r) \leq \frac{1}{2\epsilon }$ if $r \leq d(\Lambda)(1/2 - \epsilon)$, $0<\epsilon \le 1/4$.
\item $L(\Lambda,r) \leq 2n$ if $r \leq d(\Lambda)/2$.
\end{itemize}
\end{theorem}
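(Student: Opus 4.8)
The plan is to run the classical Johnson-bound argument in its Euclidean incarnation. Fix a center $y \in \mathbb{R}^n$ attaining (or approaching) the maximum in \eqref{eq_list_nota}, and let $x_1,\dots,x_L$ be the distinct lattice points of $\Lambda \cap B_r(y)$; this set is finite because $\Lambda$ is discrete, and the case $L=0$ is vacuous. Translating to the origin, set $v_i = x_i - y$. Two elementary inequalities drive everything: each point lies in the ball, so $\|v_i\|^2 = d(x_i,y) \le r$, while distinct lattice points differ by a nonzero lattice vector, so $\|x_i - x_j\|^2 = d(x_i,x_j) \ge d(\Lambda)$ for $i \ne j$. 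First I would combine these through the polarization identity, using $v_i - v_j = x_i - x_j$,
\begin{align}
\langle v_i, v_j \rangle = \tfrac12\big(\|v_i\|^2 + \|v_j\|^2 - \|x_i - x_j\|^2\big) \le r - \tfrac{d(\Lambda)}{2}, \qquad i \ne j,
\end{align}
which converts the packing hypothesis into a uniform upper bound on all pairwise inner products.

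For the first bound I would exploit the nonnegativity of a squared norm. Expanding
\begin{align}
0 \le \Big\| \sum_{i=1}^L v_i \Big\|^2 = \sum_{i=1}^L \|v_i\|^2 + \sum_{i \ne j} \langle v_i, v_j\rangle \le L\,r + L(L-1)\Big(r - \tfrac{d(\Lambda)}{2}\Big),
\end{align}
then dividing by $L \ge 1$ and rearranging yields $L \le \frac{d(\Lambda)/2}{d(\Lambda)/2 - r}$, valid whenever $r < d(\Lambda)/2$. Since the right-hand side is increasing in $r$, substituting the hypothesis $r \le d(\Lambda)(1/2 - \epsilon)$ gives $d(\Lambda)/2 - r \ge \epsilon\, d(\Lambda)$ and hence $L \le 1/(2\epsilon)$, as claimed.

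The second bound is the boundary case $r = d(\Lambda)/2$, where the denominator above degenerates; here the inner product inequality only gives $\langle v_i, v_j\rangle \le 0$. If $y \in \Lambda$ then every other lattice point sits at squared distance $\ge d(\Lambda) > r$, so the ball contains $y$ alone and $L = 1$; otherwise all $v_i$ are nonzero vectors with pairwise nonpositive inner products. The crux is then the classical fact that $\mathbb{R}^n$ admits at most $2n$ nonzero vectors with pairwise obtuse (nonpositive) inner products, which gives $L \le 2n$. I would prove this lemma by induction on $n$: fixing $v_1$ and writing $v_i = c_i v_1 + w_i$ with $w_i \perp v_1$, the coefficients $c_i = \langle v_i, v_1\rangle/\|v_1\|^2$ are nonpositive, so $\langle w_i, w_j\rangle = \langle v_i, v_j\rangle - c_i c_j \|v_1\|^2 \le 0$; at most one $v_i$ can be a negative multiple of $v_1$ (two such would have positive mutual inner product), so at least $m-2$ of the $w_i$ are nonzero, lie in $v_1^\perp \cong \mathbb{R}^{n-1}$, and still have pairwise nonpositive inner products, whence $m - 2 \le 2(n-1)$ by induction.

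The routine estimates are all mild; the only genuinely non-mechanical ingredient is the obtuse-vectors lemma, so I expect the main obstacle to be establishing (or carefully citing) that $2n$ bound and disposing cleanly of the degenerate sub-cases, namely vectors collinear with $v_1$ and a center lying on the lattice.
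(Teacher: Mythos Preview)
Your proof is correct. The paper does not supply its own proof of this theorem; it simply cites \cite[Chapter~5]{Micciancio2002} and states the result. Your argument---bounding all pairwise inner products via polarization, expanding $\bigl\|\sum_i v_i\bigr\|^2\ge 0$ to obtain $L\le \frac{d(\Lambda)/2}{d(\Lambda)/2-r}$, and handling the boundary $r=d(\Lambda)/2$ with the classical lemma that $\mathbb{R}^n$ admits at most $2n$ nonzero vectors with pairwise nonpositive inner products---is exactly the standard Johnson-type proof found in that reference, so nothing further is required.
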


Let $\Lambda_n \in \R^n$ be part of a family of lattices with instances in several dimensions $n$.
If we want to specify the list size for the lattice in a given dimension $n$,
we simplify the notations as follows. We let $L(n,r)=L(\Lambda_n,r)$ and $l(n,\delta)=l(\Lambda_n,\delta)$. \\

\noindent \textbf{BDD, list decoding, optimal and quasi-optimal decoding (with Gaussian noise).} 
Given a lattice $\Lambda$, a radius $r>0$, and any point $y \in \R^n$,
the task of a list decoder is to determine all points $x \in \Lambda$ satisfying
$d(x,y) \leq r$: i.e. compute the set $\Lambda \cap B_r(y)$.
If $r < \rho^2(\Lambda)$, there is either no point or a unique point found and the decoder is known as BDD. In this paper, BDD means that we consider a decoding radius $r=\rho^2(\Lambda)$ where in case of a tie between several lattice points, one of them is randomly chosen by the decoder.
When $d(x,y) < \rho^2(\Lambda)$, we say that $y$ is within the guaranteed (or unique) error-correction radius of the lattice. 
If $r \ge \rho^2(\Lambda)$, there may be more than one point in the sphere.
In this case, the process is called list decoding rather than BDD.\\
Note that a modified list decoder may output a set of lattice points $\mathcal{T}\neq \Lambda \cap B_r(y)$. 
Therefore, we may refer to list decoders where $\mathcal{T} =\Lambda \cap B_r(y)$ as ``regular'' list decoders. \\
Optimal decoding simply refers to finding the closest lattice point in $\Lambda$ to any point $y \in \R^n$. 
In the literature, it is usually said that an optimal decoder solves the closest vector problem (CVP).
If regular list decoding is used, it is equivalent to choosing a decoding radius equal to $R(\Lambda)$ 
and keeping the closest point to $y$ in the list outputted by the list decoder.

Let $x \in \Lambda$ and $w$ be a Gaussian vector where each component is i.i.d  with distribution $\mathcal{N}(0,\sigma^2)$. Consider the point $y$ obtained as
\small
\begin{align}
\label{eq_gauss_channel}
y = x + w.
\end{align}
\normalsize
Since this model is often used in digital communications, $x$ is referred to as the transmitted point, $y$ the received point, and the model described by \eqref{eq_gauss_channel} is called a Gaussian channel.
The point error probability under optimal decoding is $P_e(opt,\sigma^2) = P(y~\notin~\mathcal{V}(x))$. 
On the Gaussian channel, given equiprobable symbols, optimal decoding is also referred to as maximum likelihood decoding (MLD). 
Moreover, at a fixed dimension $n$, we say that a decoder is quasi-MLD (QMLD) if there exists $\sigma_0^2>0$ and $\epsilon \in (0,1)$ such that  $P_e(dec,\sigma^2) \leq P_e(opt,\sigma^2) \cdot (1 + \epsilon)$. \\
In the scope of (infinite) lattices, the transmitted information rate and the signal-to-noise ratio based on the second-order moment are meaningless. 
Poltyrev introduced the generalized
capacity \cite{Poltyrev1994}, the analog of Shannon
capacity for lattices. The Poltyrev limit corresponds to a noise variance of
$\sigma^2_{max}=\vol(\Lambda)^{\frac{2}{n}}/(2 \pi e)$. 
The point error rate on the Gaussian channel is therefore evaluated 
with respect to the distance to Poltyrev limit, also called the volume-to-noise ratio (VNR), i.e. $\Delta= \sigma^2_{max}/\sigma^2$. \\
The performance of the considered lattices with Gaussian noise, along with their decoders, are compared with the sphere lower bound (see Section~\ref{sec_gauss_BW}). 
For $\mathscr{N}_{72}$, we also plot an approximation of the MLD performance. If the MLD performance is far enough from the Poltyrev limit, the approximation can be based on a truncated \textit{union bound estimate}, which considers only the first lattice shell\footnote{A lattice
shell denotes the set of lattice points at a given distance from
the origin.}. However, as explained in \cite{Forney1998}, this approximation is not accurate if the MLD performance approaches the Poltyrev capacity\footnote{More precisely, since only finite-power constellations are discussed in \cite{Forney1998}, they state that the union bound estimate is not accurate beyond the cutoff rate.}.
Therefore, our estimate is obtained via a union bound computed from the first two shells of the lattice: 
\begin{align}
\label{eq_un_bouns_two_shells}
\tau \cdot Q \left( \sqrt{\frac{d(\Lambda)}{4 \sigma^2}}\right) + \tau' \cdot Q \left( \sqrt{\frac{d(\Lambda)'}{4 \sigma^2}}\right),
\end{align}
where $\tau'$ and $d(\Lambda)'$ are respectively the population and the squared norm of the second lattice shell, and $Q( \cdot )$ is the Gaussian tail function. The dropped terms in the theta series of a lattice \cite{Conway1999} are small o of the first two terms for small $\sigma^2$, so \eqref{eq_un_bouns_two_shells} is tight at high signal-to-noise ratio. 
 \\

\noindent \textbf{Complexity analysis.} 
The complexity of the algorithms is denoted by $\mathfrak{C}$ or $\mathfrak{C}_{A.i}$, where $i$ represents the index of the algorithm. The decoding complexity of a lattice $\Lambda$ is expressed as $\mathfrak{C}(\Lambda)$, where the decoding technique considered is clear from the context.
In general, $\mathfrak{C}$ denotes the worst-case running time. 
By abuse of notation, we use equalities (e.g. $\mathfrak{C}=X$) even though we only provide upper-bounds on the worst-case running time.
We adopt this approach to characterize the complexity of the proposed algorithms, which does not take into account the position of the point to decode $y$.
However, to assess the complexity of the algorithms on the Gaussian channel, 
we take advantage of the distribution of the point $y$ to decode and assess the average complexity $E_y[\mathfrak{C}]$ (warning: $E_y[\mathfrak{C}]$ does not denote the average worst-case complexity but the average complexity).

The complexity of decoding in a lattice $\Lambda$ with a specific decoder is denoted by $\mathfrak{C}^\Lambda_{dec}$, where ``dec" should be replaced by the name of the decoder: E.g. the complexity of BDD, optimal decoding, MLD, and quasi-MLD are $\mathfrak{C}^\Lambda_{BDD}$, $\mathfrak{C}^\Lambda_{opt}$, $\mathfrak{C}^\Lambda_{MLD}$, $\mathfrak{C}^\Lambda_{QMLD}$, respectively.
Moreover, we denote by $\mathfrak{C}_{\Lambda \cap B_{\delta}(y)}$, $\mathfrak{C}_{stor.}^\Lambda$, and $\mathfrak{C}_{clos.}(n)$, the complexity of computing the set $\Lambda\cap B_{\delta}(y)$, storing an element belonging to $\Lambda$, and finding the closest element to $y$ among $n$ elements, respectively. 
If not specified, the set $\Lambda \cap B_{\delta}(y)$ can be computed via the sphere decoding algorithm \cite{Viterbo1999}. 
In this case $\mathfrak{C}_{\Lambda \cap B_{\delta}(y)}=\mathfrak{C}_{Sph. dec.,\delta}^{\Lambda}$.
\\
In general, we assume that $\mathfrak{C}_{dec}^\Lambda >> \mathfrak{C}_{stor.}^\Lambda$ and that $k \mathfrak{C}_{dec}^\Lambda >> \mathfrak{C}_{clos.}(k)$. 
Hence, we have 
\begin{align}
\label{equ_simpli}
k \mathfrak{C}_{dec}^\Lambda + k\mathfrak{C}_{stor.}^\Lambda + \mathfrak{C}_{clos.}(k) \approx k\mathfrak{C}_{dec}^\Lambda.
\end{align}
Similarly, we also have:
\begin{align}
\label{equ_simpli_2}
\mathfrak{C}_{\Lambda\cap B_{\delta}(y)}+ l(\Lambda,\delta) \mathfrak{C}_{stor.}^\Lambda + \mathfrak{C}_{clos.}(l(\Lambda,\delta)) \approx \mathfrak{C}_{\Lambda\cap B_{\delta}(y)}.
\end{align}
By abuse of notations, we may write $k \mathfrak{C}_{dec}^\Lambda + k\mathfrak{C}_{stor.}^\Lambda + \mathfrak{C}_{clos.}(k) = k\mathfrak{C}_{dec}^\Lambda$ (e.g. if $\Lambda \in \R^{\frac{n}{k}}$, we sometimes write $k \mathfrak{C}_{dec}^\Lambda + O(n)=k \mathfrak{C}_{dec}^\Lambda$ if the $O(n)$ is not relevant in the context). 
When recursively decoding a lattice $\Lambda_n \in \R^n$, we simplify the notation $\mathfrak{C}(\Lambda_n,\delta)$ by $\mathfrak{C}(n,\delta)$. 

The $\widetilde{O}$ notations is used to ignore the logarithmic factors. The notation $f(n) = \widetilde{O}(h(n))$ is equivalent to $\exists k$ such that $f(n) = O(h(n) \log^k (h(n)))$ (since $\log^k(n)$ is always $o(n^\epsilon)$ for any $\epsilon>0)$. \\

\noindent \textbf{Extremal lattice.} The fundamental coding gain of an even unimodular lattice of dimension $n$ is at most $2 \lfloor \frac{n}{24} \rfloor +2$. Lattices achieving this coding gain are called extremal. \\
%

\section{Lattice construction}
\label{sec_constru_main}
\subsection{The $k$-ing construction and the single parity-check lattice}
\label{sec_king}
Consider lattices $S,T,V$,  where $V \subset T \subset S$.
Let us denote $\alpha=[S/T]$ and $\beta=[T/V]$, two groups of coset representatives.
The $k$-ing construction of a lattice $\Gamma$ is defined as
\small
\begin{flalign}
\label{equ_ning_2}
\begin{split}
 \ning  
		 = &\{ (m+t_1, m+ t_2,  ... , m+t_k),  m \in \underset{T^*}{\underbrace{V+\alpha}}, t_i \in \underset{T}{\underbrace{V+\beta}}, \sum_{i=1}^{k} t_i \in V \} \subseteq S^k,
\end{split}
\end{flalign}
\normalsize
since $V+\beta$ is the lattice $T$. 
We denote the lattice $V+\alpha$ by $T^*$. $\ning$ can alternatively be defined from $T$ and $T^*$, it is also denoted by  $ \Gamma(V,T^*, T,k)$.

An obvious sublattice of $\ning$ is the single parity-check lattice in $T^k$:
%

\small
\begin{flalign}
\label{equ_pari_constA}
\begin{split}
\Gamma(V,\beta,k)_{\mathcal{P}} = \Gamma(V,T,k)_{\mathcal{P}} =& \{ (t_1, t_2,  ... , t_k) \in T^k | \sum_{i=1}^k t_i \in V \} = \{ (t_1, t_2,  ... , v_{k} - \sum_{i \neq k } t_i),   v_k \in V, \  t_i \in T \}, \\
\end{split}
\end{flalign}
\normalsize
where the last expression is the most useful in practice.

Using $\Gamma(V,\beta,k)_{\mathcal{P}}$, the lattice $\ning$ can be represented as follows, an expression useful for decoding. 
\small
\begin{flalign}
\label{equ_ning_coset_m}
\begin{split}
\ning &= \bigcup_{m \in \alpha}  \{ \Gamma(V,\beta,k)_{\mathcal{P}} + m^k \}.
\end{split}
\end{flalign}
\normalsize
The set $V+\alpha$ for $m$ in \eqref{equ_ning_2} becomes $\alpha$ in \eqref{equ_ning_coset_m} after moving the $V$ components into the $t_i's$.
From \eqref{equ_pari_constA}, 
we easily see that $d(\Gamma(V,\beta,k)_{\mathcal{P}})=\min\{d(V), 2d(T)\}$. The minimum distance of  $\ning$ is provided by the next theorem, proved in \cite{Forney1988} \cite{Desideri2003}.
\normalsize
\begin{theorem}
\label{theo_min_dist}
The minimum distance of $\ning$ satisfies 
\small
\begin{align}
\begin{split}
\min \{d(V),2d(T)\} \ge d(\ning) \ge  \min \{d(V), 2d(T), kd(S)\}.
\end{split}
\end{align}
\normalsize
\end{theorem}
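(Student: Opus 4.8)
The plan is to treat the two inequalities separately, deriving the upper bound almost for free from the sublattice structure and concentrating the real effort on the lower bound through a case split on the coset representative $m$. For the upper bound $d(\ning) \le \min\{d(V),2d(T)\}$, I would invoke the inclusion $\Gamma(V,\beta,k)_{\mathcal{P}} \subseteq \ning$, which is exactly the $m=0$ term in the coset decomposition \eqref{equ_ning_coset_m}. Since the nonzero vectors of a sublattice form a subset of the nonzero vectors of the ambient lattice, minimizing the squared norm over the larger set can only decrease it, so $d(\ning) \le d(\Gamma(V,\beta,k)_{\mathcal{P}}) = \min\{d(V),2d(T)\}$, the last value being the one recorded just before the theorem. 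If an explicit witness is preferred, the points $(v,0,\dots,0)$ with $v$ a minimal vector of $V$ and $(t,-t,0,\dots,0)$ with $t$ a minimal vector of $T$ both lie in $\Gamma(V,\beta,k)_{\mathcal{P}}$ and attain the norms $d(V)$ and $2d(T)$ respectively.

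For the lower bound I would take an arbitrary nonzero $x \in \ning$ and write it via \eqref{equ_ning_coset_m} as $x = p + m^k$, where $p=(t_1,\dots,t_k) \in \Gamma(V,\beta,k)_{\mathcal{P}}$ (so $t_i \in T$ and $\sum_i t_i \in V$), $m \in \alpha = [S/T]$, and $m^k=(m,\dots,m)$; then I would split on whether $m$ is the trivial representative. If $m=0$, then $x=p$ lies in the parity lattice, and being nonzero it satisfies $\|x\|^2 \ge d(\Gamma(V,\beta,k)_{\mathcal{P}}) = \min\{d(V),2d(T)\}$. If $m \ne 0$, then $m \notin T$, and the key observation is that each coordinate $t_i+m$ belongs to $S$ (as $t_i \in T \subset S$ and $m \in S$) yet sits in the nontrivial coset $m+T$; hence $t_i+m \ne 0$ and so $\|t_i+m\|^2 \ge d(S)$ for every $i$, giving $\|x\|^2 = \sum_{i=1}^k \|t_i+m\|^2 \ge k\,d(S)$. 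In both cases $\|x\|^2 \ge \min\{d(V),2d(T),k\,d(S)\}$, which is the claimed bound.

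The step I expect to require the most care is the coordinate-wise argument when $m \ne 0$: one must check that $m \notin T$ forces $t_i+m \notin T$, hence $t_i+m \ne 0$, so that the bound $\|t_i+m\|^2 \ge d(S)$ (a minimum taken over nonzero vectors of $S$) is legitimate. This relies only on the chain $V \subset T \subset S$ and on $\alpha$ being a set of coset representatives of $T$ in $S$; notably the parity constraint $\sum_i t_i \in V$ is not used in this case, which explains why the bound degrades precisely to $k\,d(S)$. A minor preliminary worth stating cleanly is that $\alpha$ contains the trivial representative $0$, so that the dichotomy ``$m=0$ versus $m \notin T$'' is genuinely exhaustive and the two cases together cover every nonzero $x$.
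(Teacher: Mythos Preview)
Your argument is correct. Note, however, that the paper does not actually supply a proof of this theorem: it merely cites \cite{Forney1988} and \cite{Desideri2003}. Your case split on whether the coset representative $m \in \alpha$ is trivial is exactly the standard argument found in those references, so there is nothing substantively different to compare. The one point you flag as delicate---that $m \notin T$ forces every coordinate $t_i+m$ to be nonzero in $S$---is handled correctly, and your remark that the parity constraint plays no role in the $m \ne 0$ branch is a nice observation explaining why the bound degrades to $k\,d(S)$.
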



Families of single parity-check lattices can be built by recursively applying the single parity-check construction (Equation~\eqref{equ_pari_constA}).
For instance, a new family of lattices is obtained as follows. 
First, since $d(\Gamma(V,\beta,k)_{\mathcal{P}})=\min\{2d(T),d(V)\}$, we shall consider only lattices where $d(V)=2d(T)$. 
In order to find two lattices having this property, with $V\subset T$, we consider a lattice $T$ over a complex ring $J$, and rotate it by an element $\theta \in J$, with $|\theta| = \sqrt{2}$, to get $V$: i.e. $V=\theta \cdot T$. This yields $V \cong \sqrt{2}T$ and $d(\Gamma(\theta T,\beta,k)_{\mathcal{P}})=2d(T)$. The ring $J$ can for instance be $\mathbb{Z}[i]$ or $\mathbb{Z}[\lambda]$.
More formally, let $\Lambda_c^{\mathbb{C}} \in \mathbb{C}^{c/2}$ be a lattice over a complex ring $J$, where $J$ is either $\mathbb{Z}[i]$ or $\mathbb{Z}[\lambda]$.
We denote by $\Lambda_c~\in~\mathbb{R}^c$ the corresponding real lattice, with real dimension $c$. 
Let $L_{n}$ be the real lattice obtained via \eqref{eq_comp_to_real} 
from a complex lattice $L_{n}^{\mathbb{C}}$.
In the sequel, $\theta L_{n}$ is the notation for the real lattice obtained from $\theta \cdot L_{n}^{\mathbb{C}}$.
Also, $\beta = [L_{n}/\theta L_{n}]$. \\
\begin{definition}
\label{def_pari_lat}
Let $n=c \cdot k^t$, $t \ge 0$.
The parity lattices in dimension $kn$ are defined by the following recursion:
\small
\begin{align}
\begin{split}
L_{kn} &= \Gamma(\theta L_{n},\beta,k)_{\mathcal{P}}, \\
				  & = \{ (v_1+n_1,v_2 + n_2,...,v_k -  \sum_{i \neq k} n_i),  v_i \in \theta L_{n}, n_i \in \beta \}, \\
			&= \{ (t_1,t_2,...,v'_k -  \sum_{i \neq k} t_i),  v'_k \in \theta L_{n}, t_i \in L_{n} \}, \\
\end{split}
\end{align}
\normalsize
with initial condition $L_{c} =\Lambda_{c}$. The number of recursive steps $t$ should not be confused with $t_i \in L_n$. 
\end{definition}

As we shall see in the sequel, several famous lattices can be obtained via the $k$-ing construction or as parity lattices, including the Leech lattice and the Nebe lattice as well as the Barnes-Wall lattices.

Most papers study the $k$-ing construction for a fixed $k$. As a result, various names exist for this construction given a fixed~$k$:
If $k=3$, it is Turyn's construction \cite[Chap. 18, sec 7.4]{MacWilliams1977} (also known as the cubing construction \cite{Forney1988}). If $k=4$, it is the two-level squaring construction \cite{Forney1988}. If $k=5$, it is the quinting construction \cite{Ling2001}. If $k=7$, it is the septing construction \cite{Ling2001}. 

\noindent \textbf{Example.} 
Take $V=2 \mathbb{Z}$ and $T=\mathbb{Z}$. Then, $\beta=[\mathbb{Z}/2\mathbb{Z}]$. The checkerboard lattice is $D_n=\Gamma(V,\beta,n)_{\mathcal{P}}$.

\subsection{Parity lattices with $k=2$ ($BW$ lattices)}

We consider the parity lattices obtained with $k=2$, $L_c=\mathbb{Z}^2$ and $\theta= \phi=1+i$. 
They are called Barnes-Wall lattices in the literature \cite{Conway1999}. These lattices are recursively expressed as
\small
\begin{align}
 BW_{2n}= \Gamma(\phi BW_n,\beta,2)_{\mathcal{P}}  \text{ where } BW_2=\mathbb{Z}^2. 
\end{align}
\normalsize
In general, the lattice $\phi BW_n$ is denoted by $RBW_n$ \cite{Forney1988}. 
We adopt this latter notation for the rest of the paper. 
These BW lattices were one of the first series discovered
with an infinitely increasing fundamental coding gain~\cite{Barnes1959}: It increases as $\gamma(BW_{n}) = \sqrt{2} \cdot \gamma(BW_{n/2}) = \sqrt{n/2}$. 

\subsection{Leech lattice and Nebe lattice}
\label{sec_turyn}

Both Leech and Nebe lattices can be obtained via Turyn's construction, i.e. as $\Gamma(V,\alpha,\beta,3)$.


Among the three lattices $S,T,V$ used in the construction $\Gamma(V,\alpha,\beta,3)$, let us take $V = 2S$.
To build the Leech lattice, we have $S,T \cong E_8$ and to build the Nebe lattice we have 
$S,T \cong \Lambda_{24}$.
Moreover, to obtain these two lattices via the $k$-ing construction, the set of coset representatives $\alpha$ should be chosen such that $d(\Gamma(V,\alpha,\beta,3))>3d(S)$ (instead of $\ge$ as in Theorem~\ref{theo_min_dist}).
We already established via~\eqref{equ_ning_2} that choosing $\alpha$ is equivalent to choosing $T^*$.
In the next section, we explain how to get $T^*$ via lattice polarisation \cite{Nebe2012}.


\subsubsection{The polarisation of lattices}
\label{sec_polar}

%
Assume that the lattices $S,T,T^*,V=2S$ are of rank $n$.
Here, $T^*$ is a rotation
of $T$ by an angle of $2 \theta$. Therefore, it is denoted $T_{2 \theta}$.

\begin{definition}
Given a lattice $S$, we call $(T,T_{2 \theta})$ a polarisation of $S$ \cite{Nebe2012} if 
\small
\begin{align}
\label{eq_prop}
\begin{split}
&S\cong T \cong T_{2 \theta},  S=T_{2 \theta}+T, ~~~\text{and} ~~~T_{2 \theta} \bigcap T =2S.
\end{split}
\end{align}
\normalsize
\end{definition}

Let $G_{S}$ be a generator matrix of $S$.
Finding a polarisation of the lattice $S$ (if it exists) is equivalent to finding a scaling-rotation matrix $R$, $R\cdot R^{T} = 2I$, with
$G_{T}=G_{S} \cdot R  \text{   and   } G_{T_{2 \theta}}=G_{S} \cdot R^{T}$,
such that the basis vectors $g_{T}^{i}$ and $g_{T_{2 \theta}}^{i}$, $1\le i \le n$, 
are versions of the vectors $g_S^i$ scaled by a factor of $\sqrt{2}$  and rotated by an angle of $\pm \theta = \text{arctan}\sqrt{7}$. 
Indeed, consider two vectors $g_T^i$ and $g_{T_{2\theta}}^i$ of the same size and having an angle of $2\theta$. Summing these two vectors yields a vector $g_S^i$ having half the size of $g_T^i$: 
$||g_{S}^i||^2 = ||g_T^i+g_{T_{2\theta}}^i||^{2}= 0.5 \times ||g_{T}^i||^2$,
since $\cos \left( 2 \theta \right)=-3/4$. One would thus get $G_S = G_T + G_{T_{2\theta}}$.
The rotation matrix $R$ can be found via a $\mathbb{Z}[\lambda]$-structure of $S$;
Let $G_{S}^{\mathbb{C}}$  be a (complex) generator matrix of $S$ over the ring of integers $\mathbb{Z}[\lambda]$, $\lambda~=~\sqrt{2}e^{i \theta}= \frac{1 + i\sqrt{7}}{2}$.
Multiplying $G_{S}^{\mathbb{C}}$ by $\lambda$ yields a matrix whose rows are new vectors belonging to the lattice, scaled by $\sqrt{2}$, and having the desired angle with the basis vectors.
Hence, $G_T^{\mathbb{C}}$ can be obtained as $\lambda G_{S}^{\mathbb{C}}$ and $G_{T_{2 \theta}}^{\mathbb{C}}$ as $\psi G_{S}^{\mathbb{C}}$, where $\psi = \bar{\lambda}$ is the conjugate of $\lambda$.
Therefore, if we let $G_{S}$ be the real generator matrix obtained from $G_{S}^{\mathbb{C}}$ (via \eqref{eq_comp_to_real}), the real rotation matrix $R$ for polarisation is
$R(n,\lambda)=I_{n/2} \otimes R(2,\lambda).$

\subsubsection{Leech lattice and Nebe lattice}
\label{sec_constru_leech}

Given three lattices $S$, $T$ and $T_{2\theta}$, respecting properties $\eqref{eq_prop}$, we consider the lattice $\turing$. 
A generator matrix of $\turing$ is
\small
\begin{align}
\label{eq_Forney}
G_{\turing} & = 
\left[
\begin{matrix}
G_{(3,1)} \otimes G_{T_{2\theta}}\\
G_{(3,2)} \otimes G_{T} 
\end{matrix} 
\right]
\end{align}
\normalsize
where $G_{(3,1)}$ and $G_{(3,2)}$ are generator matrices for the $(3,1)$ binary repetition code and the $(3,2)$ binary single parity-check code, respectively. Obviously, $\mathbb{F}_2$ is naturally embedded into $\mathbb{Z}$ for the two binary codes.
From~\eqref{eq_Forney}, a generator matrix of $\turing$ over $\mathbb{Z[\lambda]}$ can be expressed as
\small
\begin{align}
\label{eq_complex_pb}
G^{\mathbb{C}}_{\turing} =
\underset{=Pb}{
\underbrace{
 \left[
\begin{matrix}
\lambda & \lambda & \lambda \\
\psi&  \psi& 0 \\
0&  \psi& \psi
\end{matrix}
\right]
}
}
 \otimes
G^{\mathbb{C}}_{S},
\end{align}
\normalsize
where $G^{\mathbb{C}}_{S}$ is a generator matrix of $S$ over $\mathbb{Z[\lambda]}$, $\lambda G^{\mathbb{C}}_{S}$ a generator matrix of $T$, and $\psi G^{\mathbb{C}}_{S}$ a generator matrix of $T_{2 \theta}$.

\begin{theorem}\label{th_turyn}
Let $S \cong E_8$ and $T$,$T_{2\theta}$ be two lattices respecting
properties $\eqref{eq_prop}$.
Then, $\turing$ is the Leech lattice \cite{Tits1978}\cite{Lepowsky1982}\cite{Quebbemann1984}. 
\end{theorem}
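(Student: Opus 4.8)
The plan is to recognise $\turing$ as a rescaled Leech lattice by computing three invariants and then appealing to the classical uniqueness theorem: the Leech lattice is the unique even unimodular lattice in $\R^{24}$ whose minimum norm is $4$ (equivalently, which has no vectors of norm $2$). Concretely, I would set $\Lambda=\frac{1}{\sqrt 2}\turing$ and show that $\Lambda$ is even, unimodular, and of minimum norm $4$. Since $S\cong E_8$ has $\vol(S)=1$ and $d(S)=2$, the polarisation gives $T\cong T_{2\theta}\cong\sqrt2 E_8$, so that $d(T)=d(T_{2\theta})=4$, $\vol(T)=\vol(T_{2\theta})=16$, while $V=2S$ has $d(V)=8$ and $\vol(V)=2^8$; these are exactly the numbers needed for $\turing\cong\sqrt2\,\Lambda_{24}$.

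First I would compute $\vol(\turing)$. Since $\turing\subseteq S^3$ and $\vol(S^3)=1$, it suffices to evaluate $[S^3:\turing]$ through the chain $V^3\subseteq\turing\subseteq S^3$. Here $[S^3:V^3]=[S:2S]^3=2^{24}$, while a coset of $V^3$ in $\turing$ is specified by choosing $m\bmod V\in T_{2\theta}/V$ and $t_1,t_2\bmod V\in T/V$, the third coordinate being forced by the parity condition $\sum_i t_i\in V$. As each of these three groups has order $16$, one gets $[\turing:V^3]=2^{12}$ and hence $\vol(\turing)=2^{24}/2^{12}=2^{12}$; the same value can be read off from the determinant of the generator matrix in \eqref{eq_complex_pb}. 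Rescaling, $\vol(\Lambda)=2^{-12}\vol(\turing)=1$, so $\Lambda$ is unimodular.

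The key step is to show that every $x\in\turing$ has norm divisible by $4$, which simultaneously makes $\Lambda$ even and, via Theorem~\ref{theo_min_dist}, pins down the minimum distance. The polarisation yields the internal direct sum $S/2S=(T/2S)\oplus(T_{2\theta}/2S)$ of $\mathbb{F}_2$-spaces, since $S=T+T_{2\theta}$ and $T\cap T_{2\theta}=2S$. Writing $x=(m+t_1,m+t_2,m+t_3)$ with $m\in T_{2\theta}$, $t_i\in T$ and reducing mod $2S$ gives coordinates $\bar s_i=\mu+\tau_i$ with a common $\mu\in T_{2\theta}/2S$ and $\tau_i\in T/2S$ satisfying $\sum_i\tau_i=0$. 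Introduce the mod-$2$ quadratic form $q(\bar s)=\tfrac12\|s\|^2\bmod 2$ on $S/2S$, with polarisation $B(a,b)=a\cdot b\bmod 2$, so that $\tfrac12\|x\|^2\equiv\sum_i q(\bar s_i)\pmod 2$. Because $T\cong T_{2\theta}\cong\sqrt2 E_8$ have all norms in $4\Z$, both $T/2S$ and $T_{2\theta}/2S$ are totally singular for $q$, whence $q(\mu)=q(\tau_i)=0$; expanding $q(\mu+\tau_i)=q(\mu)+q(\tau_i)+B(\mu,\tau_i)$ and summing, the cross terms collapse to $B(\mu,\sum_i\tau_i)=B(\mu,0)=0$. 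Hence $\sum_i q(\bar s_i)=0$, so $\|x\|^2\in 4\Z$ and $\Lambda$ is even. Theorem~\ref{theo_min_dist} gives $6\le d(\turing)\le 8$, and combined with $d(\turing)\in 4\Z$ this forces $d(\turing)=8$, i.e. $d(\Lambda)=4$, a value attained for instance by $(2e,0,0)$ with $e$ a minimal vector of $E_8$.

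Having exhibited $\Lambda$ as a $24$-dimensional even unimodular lattice of minimum norm $4$, the uniqueness theorem for the Leech lattice gives $\Lambda\cong\Lambda_{24}$, whence $\turing\cong\sqrt2\,\Lambda_{24}$ is the Leech lattice up to the overall scaling permitted by the paper's notion of equivalence. The main obstacle is the divisibility-by-$4$ argument above: it is precisely here that the polarisation hypotheses \eqref{eq_prop} are indispensable, since they supply both the direct-sum decomposition of $S/2S$ and the recognition of $T/2S$, $T_{2\theta}/2S$ as complementary maximal totally singular subspaces, while the single-parity condition $\sum_i t_i\in V$ is what annihilates the remaining bilinear cross term. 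The only non-elementary external input is the classical classification result identifying the rootless even unimodular $24$-dimensional lattice with $\Lambda_{24}$.
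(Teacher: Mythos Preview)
Your proof is correct and follows essentially the same route as the paper's: compute the volume by coset counting, establish evenness, use Theorem~\ref{theo_min_dist} to pin down the minimum distance, and invoke the uniqueness of $\Lambda_{24}$. The only differences are cosmetic: the paper scales $S=\tfrac{1}{\sqrt2}E_8$ up front so that $\turing$ comes out directly unimodular with $d=4$, whereas you keep $S=E_8$ and rescale at the end; and your evenness argument is cast abstractly via the $\mathbb{F}_2$-quadratic form $q$ on $S/2S$ with $T/2S$, $T_{2\theta}/2S$ as complementary totally singular subspaces, while the paper does the equivalent computation by hand on the inner products $\langle m,t_i\rangle$.
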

See Appendix~\ref{App_proof_Leech_turing} for a proof.
A similar proof can also be used to show that when $S \cong \Lambda_{24}$ and $T$,$T_{2\theta}$ are two lattices respecting
properties $\eqref{eq_prop}$, then the lattice $\turing$ has a fundamental coding gain equal to 6 or 8 \cite{Griess2010}.
In this case, the polarisation does not ensure $\turing > 3d(S)=6$. Additional work to choose
$T_{2 \theta}$ is needed. Nebe considers in \cite{Nebe2012} the following construction.
Let $S$ be the $\mathbb{Z}[\lambda]$-structure $\Lambda_{24}$ with automorphism group SL$_{2}(25)$. Set $T_{2\theta} = \lambda S $ and $T = \psi S$. 
The resulting lattice $\turing$ is named the Nebe lattice $\mathscr{N}_{72}$.
It is possible to check that $\mathscr{N}_{72}$ has no vector of length 6, which leads to the following theorem, obtained in \cite{Nebe2012}.





%
\begin{theorem}\label{turyn_nebe2}
$\mathscr{N}_{72}$  has a fundamental coding gain equal to $8$ \cite{Nebe2012}.
\end{theorem}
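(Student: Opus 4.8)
The aim is to prove $\gamma(\mathscr{N}_{72})=8$. The plan is to first reduce the coding gain to the minimum norm, and then trap the minimum norm at $8$ using Theorem~\ref{theo_min_dist} together with the absence of short vectors. First I would record that $\mathscr{N}_{72}$ is an even unimodular lattice of dimension $72$; this is part of Nebe's construction in \cite{Nebe2012}, and it can be read off from the generator matrix \eqref{eq_complex_pb} once one verifies $\vol(\mathscr{N}_{72})=1$ and that every norm is even. Consequently $\gamma(\mathscr{N}_{72}) = d(\mathscr{N}_{72})/\vol(\mathscr{N}_{72})^{2/72} = d(\mathscr{N}_{72})$, so it suffices to prove $d(\mathscr{N}_{72})=8$. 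The extremal bound already gives $\gamma(\mathscr{N}_{72}) \le 2\lfloor 72/24\rfloor + 2 = 8$, i.e.\ $d(\mathscr{N}_{72}) \le 8$.

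Next I would apply Theorem~\ref{theo_min_dist} to the instance $\turing$ with $S \cong \Lambda_{24}$, $T = \psi S$, $T_{2\theta}=\lambda S$ and $V = 2S$. Here $d(V) = 2d(T) = 8$ while $3d(S) = 6$, so the theorem yields $8 \ge d(\mathscr{N}_{72}) \ge \min\{8,8,6\} = 6$. Since the lattice is even, its minimum norm is an even integer, so $d(\mathscr{N}_{72}) \in \{6,8\}$ (equivalently $\gamma(\mathscr{N}_{72}) \in \{6,8\}$, which matches the remark preceding the statement). The theorem therefore reduces to excluding the value $6$.

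The heart of the argument, and the step I expect to be the main obstacle, is showing that $\mathscr{N}_{72}$ has no vector of norm $6$. I would use that $\mathscr{N}_{72} \subseteq S^3$ and locate where such a vector could arise. Writing a nonzero point as $(m+t_1,m+t_2,m+t_3)$ with $m \in T_{2\theta}$, $t_i \in T$ and $\sum_i t_i \in V$ as in \eqref{equ_ning_2}, a single nonzero coordinate forces $m = -t_j \in T \cap T_{2\theta} = 2S$ and hence that coordinate into $V$, giving norm $\ge d(V) = 8$; exactly two nonzero coordinates likewise lie in $T$ and contribute norm $\ge 2d(T) = 8$. Thus a putative norm-$6$ vector must have all three coordinates equal to minimal vectors of $S$, each of norm $2$. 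Setting $x_i = m+t_i$ with $m \in \lambda S$, $t_i \in \psi S$ and $\sum_i t_i \in 2S$, the constraints become $x_i \equiv m \pmod{\psi S}$ for all $i$ together with $m \equiv x_1+x_2+x_3 \pmod{2S}$, i.e.\ a system of coset/congruence conditions on a triple of minimal vectors of $\Lambda_{24}$.

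It then remains to prove that this system is unsolvable. I would reduce the conditions modulo $2S$, note that the three minimal vectors must share a common coset of $\psi S = T$ in $S$ (which pins $m \bmod 2S$), and exploit the $\mathbb{Z}[\lambda]$-module structure of $S$ and its automorphism group $SL_2(25)$ to cut the $196560$ minimal vectors down to a handful of orbits; checking these orbits shows that no admissible choice of $(m,t_1,t_2,t_3)$ yields three simultaneously minimal coordinates. This orbit analysis is precisely the ``it is possible to check'' step and is the genuinely delicate part; in practice it is the computation Nebe performs in \cite{Nebe2012}, and it can equally be confirmed by direct enumeration. Once norm $6$ is excluded, the two alternatives of the previous paragraph collapse to $d(\mathscr{N}_{72}) = 8$, and therefore $\gamma(\mathscr{N}_{72}) = 8$.
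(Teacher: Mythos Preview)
Your proposal is correct and follows essentially the same route the paper sketches: establish that $\mathscr{N}_{72}$ is even unimodular (so $\gamma=d$), use Theorem~\ref{theo_min_dist} together with evenness to pin $d\in\{6,8\}$, and then exclude norm~$6$ by deferring to Nebe's verification in \cite{Nebe2012}. The paper itself gives no self-contained argument here---it merely says ``it is possible to check that $\mathscr{N}_{72}$ has no vector of length~6'' and cites \cite{Nebe2012}---so your case analysis on the number of nonzero coordinates (reducing to triples of minimal vectors of $\Lambda_{24}$) is already more detail than the paper provides, and your honest acknowledgment that the final orbit check is Nebe's computation is exactly in line with how the paper treats it.
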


%




\section{Decoding paradigm for the single parity-check lattice and the $k$-ing lattice}
\label{sec_deco_para}


\subsection{The existing decoding algorithm for $\Gamma(V,\alpha,\beta,k)$ (and $\Gamma(V,\beta,k)_{\mathcal{P}}$)}
\label{sec_trellis_dec}

To the best of the authors' knowledge, there exists only one ``efficient" optimal algorithm for the $k$-ing construction called trellis decoding \cite{Forney1988}.
This decoding algorithm uses a graph-based representation to efficiently explore all the cosets of $V^k$ in $\ning$.
As an example, the trellis for $\Gamma(V,\alpha,\beta,3)$ is illustrated on Figure~\ref{fig_polar} with $|\alpha|=3$ and $|\beta|=2$. Each path in this three sections trellis corresponds to a coset of $V^3$ in $\Gamma(V,\alpha,\beta,3)$. Each edge is associated with a coset of $V$ in $S$: E.g. given $\alpha = \{m_1,m_2,m_3\}$ and $\beta=\{n_1,n_2\}$, the two upper edges on the left should be labeled $m_1+n_1$ and $m_1+n_2$, respectively. All the edges in the upper part of the trellis correspond to the same $m_1$ and the sub-trellis formed by these edges is a standard single parity-check trellis. This sub-trellis is repeated three times for $m_1$,$m_2$, and $m_3$. Standard trellis algorithms, such as the Viterbi algorithm, can then be used to decode.

\begin{figure}[H]
\centering
\includegraphics[width=0.25\columnwidth]{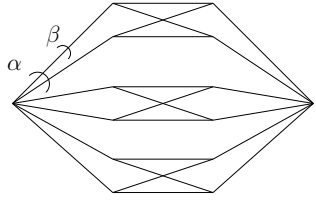}
\caption{Trellis representing a $\Gamma(V,\alpha,\beta,3)$ with $|\alpha|=3$ and $|\beta|=2$. The edges labeled with the same $\alpha$ are associated with the same $m_1 \in \alpha$.}
\label{fig_polar}
\end{figure}
Trellis decoding of $\Gamma(V,\alpha,\beta,k)$ involves decoding in $V$ for each edge in the trellis. The number of edges in the trellis is $2  |\alpha| |\beta| + (k-2)  |\alpha||\beta|^2$.
Therefore, the complexity is dominated by the quantity $ |\alpha|| \beta|^2 \mathfrak{C}^{V}_{dec}$.
For more information on trellis decoding, the reader should refer to \cite{Forney1988} or \cite{Desideri2008}.

Of course, trellis decoding can also be used to decode the single parity-check $k$-lattices $\Gamma(V,\beta,k)_{\mathcal{P}}$. 
The number of edges in the standard single parity-check trellis is $2 |\beta| + (k-2) |\beta|^2$. 

\subsection{Decoding paradigm for $\Gamma(V,\beta,k)_{\mathcal{P}}$}
\label{sec_main_deco}


Set $y\in \R^{kn}$ and let $x=(t_1,t_2, ... , t_k) \in \Gamma(V,\beta,k)_{\mathcal{P}}$ be the closest element to $y$. 
The minimum distance of the lattice $V$ is (in general) larger than the one of $T$. Consequently, decoding $y_j$ in the coset of $V$ to which the element $t_j$ belongs is safer than decoding in $T$.
Moreover, any set of $k-1$ $t_j$'s is enough to know in which coset of $V^k$ in $\Gamma(V,\beta,k)_{\mathcal{P}}$ is located $x$. 
Hence, given $t_1,t_2,...t_{k-1}$, the element $t_k$ can be recovered by decoding $y_k-(-\sum_{j=1}^{k-1}t_j)$ in $V$ (and adding back $-\sum_{j=1}^{k-1}t_j$ on the decoded element), as shown by Algorithm~\ref{main_alg_pari}. 
It is easily seen that the complexity of Algorithm~\ref{main_alg_pari} is 
\begin{align}
\label{equ_complex_alg1}
\mathfrak{C}_{A.\ref{main_alg_pari}}= k \mathfrak{C}^T_{dec} + k \mathfrak{C}^V_{dec}, 
\end{align}
where we used the simplification of Equation~\eqref{equ_simpli}.
\begin{algorithm}
\caption{Decoder for $\Gamma(V,\beta,k)_{\mathcal{P}}$}
\label{main_alg_pari}
\textbf{Input:} $y=(y_1,y_2, ... , y_k) \in \mathbb{R}^{kn}$.   
\begin{algorithmic}[1]
\STATE Decode $y_1,y_2,...,y_k$ in $T$ as $t_1,t_2,...,t_k$.
 \FOR{$1 \leq i \leq k$} 
\STATE {
Decode $y_i-(- \sum_{j \neq i}t_j)$ in $V$ as $v_i$. 
Add $(t_1,...,t_{i-1},v_i+(-\sum_{j \neq i}t_j),t_{i+1}, ..., t_k)$ to the list $\mathcal{T}$.
} 
\ENDFOR
\STATE \textbf{Return} the closest element of $\mathcal{T}$ to $y$.
\end{algorithmic}
\end{algorithm}

Algorithm~\ref{main_alg_pari} can be adapted to perform list decoding as follows. 
For the sake of simplicity, 
we assume that $V \cong \sqrt{2}T$. 
We recall that regular list decoding consists in computing the set $\Gamma(V,\beta,k)_{\mathcal{P}} \cap B_{\delta}(y)$:
i.e. finding all lattice points $x \in \Gamma(V,\beta,k)_{\mathcal{P}}$ where $d(y,x)\leq r$, $y \in \R^{kn}$. 
The parameter $\delta=r/d(\Gamma(V,\beta,k)_{\mathcal{P}})$ is the relative decoding radius. 
Remember that $d(\Gamma(V,\beta,k)_{\mathcal{P}})=d(V)=2d(T)$. 
The list decoding of $\Gamma(V,\beta,k)_{\mathcal{P}}$ with a radius $r$ 
consists in list decoding each $y_j$  (Step~1 of Algorithm~\ref{main_alg_pari}) in $T$ with a radius $r/2$ and each $y_i-(-\sum_{j\neq i} t_j)$ (Step~3) in $V$ with a radius $r$ (see the proof of Lemma~\ref{theo_worst} for explanations on this choice). 
In both cases the relative radius is $\delta = \frac{r}{2d(T)}=\frac{r}{d(V)}$ and the maximum number of elements in each list is $l(T,\delta)=l(V,\delta)=L(T,\frac{r}{2})=L(V,r)$  (see Section~\ref{sec_prelem} for the definitions of $L(\cdot,\cdot)$ and $l(\cdot,\cdot)$). 
As a result, Step~3 (of Algorithm~\ref{main_alg_pari}), for a given $i$, should be executed for any of the combinations of candidates (for each $t_{j\neq i}$) in the $k-1$ lists: i.e. $l(T,\delta)^{k-1}$ times. The resulting maximum number of stored elements (for this given $i$) is $l(T,\delta)^{k-1}\cdot l(V,\delta)$. 
Consequently, the number of elements in $\mathcal{T}$ is bounded from above by 
\begin{align}
k \cdot l(T,\delta)^{k-1}\cdot l(V,\delta) =k \cdot  l(T,\delta)^{k}.
\end{align}
The list-decoding version of Algorithm~\ref{main_alg_pari} is presented in Algorithm~\ref{main_alg_pari_no_sp}. 
\begin{lemma}
\label{theo_worst}
Algorithm~\ref{main_alg_pari_no_sp} outputs the set $\Gamma(V,\beta,k)_{\mathcal{P}} \cap B_{\delta}(y)$ in worst-case time
\begin{align}
\label{equ_complex_list_without}
\small
\begin{split}
\mathfrak{C}_{A.\ref{main_alg_pari_no_sp}} = &
k \mathfrak{C}_{T\cap B_{\delta}(y)} +  k \cdot l(T,\delta)^{k-1}  \mathfrak{C}_{V\cap B_{\delta}(y)}.
\end{split}
\end{align}
\end{lemma}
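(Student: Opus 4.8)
The plan is to establish the lemma in two stages: first prove that Algorithm~\ref{main_alg_pari_no_sp} outputs exactly the set $\Gamma(V,\beta,k)_{\mathcal{P}} \cap B_\delta(y)$, and then read off the worst-case running time \eqref{equ_complex_list_without} by counting the decoding calls. Throughout I would use that $d(\Gamma(V,\beta,k)_{\mathcal{P}}) = d(V) = 2d(T)$ and that $V \cong \sqrt{2}T$, so that a radius $r/2$ in $T$ and a radius $r$ in $V$ both correspond to the same relative radius $\delta = r/d(V) = (r/2)/d(T)$; in particular $l(T,\delta) = l(V,\delta)$, and each list produced in Step~1 has at most $l(T,\delta)$ entries.

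The heart of the argument is the inclusion $\Gamma(V,\beta,k)_{\mathcal{P}} \cap B_\delta(y) \subseteq \mathcal{T}$. Fix $x = (t_1, \ldots, t_k)$ in the target set, so that $\sum_{j=1}^k d(y_j, t_j) = d(y,x) \leq r$. Since the summands are nonnegative, at most one index can contribute more than $r/2$: if two indices both satisfied $d(y_j, t_j) > r/2$, their sum alone would exceed $r$. Call the exceptional index $i$ (choosing any index if none is exceptional). For every $j \neq i$ we then have $d(y_j, t_j) \leq r/2$, so $t_j$ belongs to the list $T \cap B_{r/2}(y_j)$ computed in Step~1. When the main loop reaches this $i$ and selects precisely the tuple $(t_j)_{j \neq i}$ among the $l(T,\delta)^{k-1}$ combinations, it decodes $y_i + \sum_{j\neq i} t_j$ in $V$ with radius $r$. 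Writing $v = \sum_{j=1}^k t_j \in V$, the shift identity $v - \sum_{j\neq i} t_j = t_i$ gives $d\!\left(y_i + \sum_{j\neq i} t_j,\, v\right) = d(y_i, t_i) \leq r$, so $v$ lies in the computed list $V \cap B_r(y_i + \sum_{j\neq i} t_j)$; subtracting back $\sum_{j\neq i} t_j$ from $v$ recovers $t_i$ and hence the whole point $x$, which is thus added to $\mathcal{T}$.

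For the reverse inclusion, I would observe that every tuple assembled by the algorithm has its reconstructed coordinate of the form $v - \sum_{j\neq i} t_j$ with $v \in V$ and the remaining entries in $T$, so its coordinate sum equals $v \in V$ and the tuple automatically lies in $\Gamma(V,\beta,k)_{\mathcal{P}}$; a final pass discarding the candidates with $d(y,\cdot) > r$ (and removing duplicates) trims $\mathcal{T}$ to exactly $\Gamma(V,\beta,k)_{\mathcal{P}} \cap B_\delta(y)$. This also pins down the radius bookkeeping: the radius $r/2$ in $T$ is the largest for which the $k-1$ non-exceptional coordinates are guaranteed to appear in the Step~1 lists, while the radius $r$ in $V$ is forced because the exceptional coordinate may carry almost the entire budget $d(y_i, t_i) \leq r$.

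Finally, the complexity is a direct count. Step~1 performs $k$ list-decodings in $T$, contributing $k\,\mathfrak{C}_{T\cap B_\delta(y)}$. The main loop runs over the $k$ choices of $i$ and, for each, over at most $l(T,\delta)^{k-1}$ combinations of the Step~1 candidates, triggering one list-decoding in $V$ per combination; this yields $k\, l(T,\delta)^{k-1}\,\mathfrak{C}_{V\cap B_\delta(y)}$. The candidate storage, the filtering, and the closest/duplicate bookkeeping are dominated and absorbed through the simplification~\eqref{equ_simpli_2}, giving \eqref{equ_complex_list_without}. I expect the only delicate point to be the pigeonhole-plus-radius bookkeeping of the second paragraph—in particular justifying that a single radius-$r$ decoding in $V$ at the exceptional coordinate suffices to capture $v$—since once that is in place the counting is routine.
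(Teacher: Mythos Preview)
Your proposal is correct and follows essentially the same route as the paper: the pigeonhole observation that at most one coordinate can satisfy $d(y_i,t_i)>r/2$, hence the other $k-1$ coordinates appear in the Step~1 lists and a single radius-$r$ decoding in $V$ at the exceptional index recovers the missing coordinate, after which the complexity is read off from the algorithm. You add an explicit reverse-inclusion/filtering remark that the paper leaves implicit, but the core argument and complexity count are identical.
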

\begin{proof}
We first prove that all points  $x=(x_1,x_2,...,x_k) \in \Gamma(V,\beta,k)_{\mathcal{P}} \cap B_{\delta}(y)$ are outputted by Algorithm~\ref{main_alg_pari_no_sp}. 
If $d(y_i,x_i)>r/2$ then $d(y_j,x_j)<r/2$ for all $j\neq i$.
Hence, among the $k$ lists $\mathcal{T}_1,\mathcal{T}_2,..., \mathcal{T}_k$ computed at Step~1 of the algorithm, 
at least $k-1$ of them contain the correct $t^*_j=x_j$. \\
Assume (without loss of generality) that all $\mathcal{T}_j$, $1\leq j\neq i \leq n$, contain $t^*_j$. 
Since $d(y,x) \leq r$, one has $d(y_i,x_i) \leq r$. 
Therefore, $\mathcal{V}_i = V \cap B_{\delta}(y_i-(-\sum_{j \neq i} t^*_j))$ contains $v^*_i=x_i-(-\sum_{j \neq i} t^*_j)$. \\
As a result, all $x \in \Gamma(V,\beta,k)_{\mathcal{P}} \cap B_{\delta}(y)$ are outputted by the algorithm. 
The complexity is obtained by reading Algorithm~\ref{main_alg_pari_no_sp},
with the simplification of Equation~\eqref{equ_simpli}.
\end{proof}
Note that if $\delta<\frac{1}{4}$ (the relative packing radius), there is only one element in each of the set $\mathcal{T}_j$ computed in Algorithm~\ref{main_alg_pari_no_sp}. Algorithm~\ref{main_alg_pari_no_sp} in this case is equivalent to Algorithm~\ref{main_alg_pari}. 

%
%
\begin{algorithm}
\caption{List dec. for $\Gamma(V,\beta,k)_{\mathcal{P}} \in \R^{kn}$ (without the splitting strategy) }
\label{main_alg_pari_no_sp}
\textbf{Input:} $y=(y_1,y_2, ... , y_k) \in \R^{kn}$, $\delta \geq 0$.   
\begin{algorithmic}[1]
\STATE Compute the sets $\mathcal{T}_1 = T \cap B_\delta(y_1), \mathcal{T}_2 = T \cap B_\delta(y_2), ..., \mathcal{T}_k= T \cap B_\delta(y_k)$.
 \FOR{$1 \leq i \leq k$} 
\STATE Set $j_1<j_2<...<j_{k-1}$, where $\{ j_1,j_2,...,j_{k-1} \} = \{1,2,...,k\} \backslash \{i\}$.
 \FOR{each $(t_{j_1},...,t_{j_{k-1}}) \in \mathcal{T}_{j_1} \times \mathcal{T}_{j_2}\times ... \times \mathcal{T}_{j_{k-1}}$}
\STATE {
Compute the set $\mathcal{V}_i = V \cap B_{\delta}(y_i-(- \sum_{j'=1}^{k-1}t_{j_{j'}}))$.
} 
\FOR{$v_i \in \mathcal{V}_i$}
\STATE Add $(t_{j_1},...,t_{j_{i-1}},v_i+(-\sum_{j'=1}^{k-1}t_{j_{j'}}), t_{j_{i}}, ..., t_{j_{k-1}})$ to the list $\mathcal{T}$.
\ENDFOR
\ENDFOR
\ENDFOR
\STATE \textbf{Return} $\mathcal{T}$.
\end{algorithmic}
\end{algorithm}

In some cases, the complexity of Algorithm~\ref{main_alg_pari_no_sp} can be reduced via a technique we call the splitting strategy. 
It exploits the following observation: Let $x=(x_1,x_2,...,x_k) \in \Gamma(V,\beta,k)_{\mathcal{P}}$.
Assume (without loss of generality) that $d(y_i,x_i)> \frac{r}{2}$ (and thus $\sum_{j\neq i} d(x_j,y_j) \leq \frac{r}{2}$). This case can be split into two sub-cases. 
Let $0 \leq a' \leq \frac{r}{2}$.
\begin{itemize}
\item If $a' \le \sum_{j\neq i} d(x_j,y_j) \leq \frac{r}{2}$ then $\frac{r}{2}  < d(x_i,y_i)\le r-a'$: \\
Then, each $y_j$ should be list decoded in $T$ with a radius $\frac{r}{2}$ 
and $y_i-(-\sum_{j\neq i} t_j)$, for all resulting combinations of $t_j$, list decoded in $V$ with a radius $r-a'$. 
\item Else $0 \leq \sum_{j\neq i} d(x_j,y_j)  < a' $ and $\frac{r}{2}  < d(x_i,y_i) \le r$: \\
Then, each $y_j$ should be list decoded in $T$ with a radius $a'$, 
and $y_i-(-\sum_{j\neq i} t_j)$, for all resulting combinations of $t_j$, list decoded in $V$ with a radius~$r$. 
\end{itemize}
The number of stored elements (when computing each sub-case) is bounded by 
\begin{itemize}
\item $l(T,\delta)^{k-1}\cdot l(V,a_1=\frac{r-a'}{d(V)})$, for the first sub-case,
\item $l(T,a_2=\frac{a'}{d(T)})^{k-1}\cdot l(V,\delta)$ for the second sub-case.
\end{itemize}
Consequently, if we choose $a_1=a_2=\frac{2}{3}\delta$, the number of elements in the list $\mathcal{T}$, outputted by a list decoder with this splitting strategy, is bounded from above by
\small
\begin{align}
\label{equ_split_first}
k \big[ l(T,\delta)^{k-1}  l(V,\frac{2}{3}\delta) + l(T,\frac{2}{3}\delta)^{k-1} l(V,\delta) \big],
\end{align} 
\normalsize
which is likely to be smaller than $k \cdot  l(T,\delta)^{k}$, the bound obtained without the splitting strategy.  

Similarly, we can also split the case $0 \le d(x_j,y_j) \leq \frac{r}{2}$, $j \neq i$, into several sub-cases.
Let $0 \leq a' \leq \frac{r}{2}$. We recall that we have $\sum_{j\neq i} d(x_j,y_j) \leq \frac{r}{2}$.
\begin{itemize}
\item If $a' \le d(x_j,y_j) \leq \frac{r}{2}$ then $0 \le d(x_l,y_l)\le \frac{r}{2}-a'$, $\forall l$ where $1 \leq l \neq j \neq i \leq k$: \\
Then, $y_j$ should be list decoded in $T$ with a radius $\frac{r}{2}$ 
and each $y_l$ list decoded in $T$ with a radius $\frac{r}{2}-a'$. 
\item Else $0 \le d(x_j,y_j) < a'$ and for one $l$, $1 \leq l \neq j \neq i \leq k$, one may have  $a' \le d(x_l,y_l)\le \frac{r}{2}$: \\
Then, $y_j$ should be list decoded in $T$ with a radius $a'$, 
$y_l$ list decoded in $T$ with a radius $\frac{r}{2}$, and all the remaining $y's$ 
list decoded in $T$ with a radius $a'$. 
\end{itemize} 
Of course, since it is not possible to know the index $l$ where\footnote{One may not have $a' \leq d(x_l,y_l)$, $\forall l$, $1\leq l \neq i \leq k$. It is not an issue as we would then simply decode with a radius greater than necessary.} $a' \leq d(x_l,y_l)\le \frac{r}{2}$, all $k-2$ possibilities should be computed (which yields $k-1$ possibilities if we include the first sub-case $a' \le d(x_j,y_j) \leq \frac{r}{2}$).
If we choose $a'=\frac{r}{2}-a'=\frac{r}{4}$, the product of the maximum list size of each $k-1$ case is $l(T,\delta)l(T,\frac{\delta}{2})^{k-2}$.
As a result, the maximum number of possibilities to consider for $\sum_{j\neq i}t_j$ is 
\small
\begin{align}
\label{eq_sec_split}
(k-1)l(T,\delta)l(T,\frac{\delta}{2})^{k-2},
\end{align}
\normalsize
instead of $l(T,\delta)^{k-1}$ without this strategy. 

Substituting \eqref{eq_sec_split} in \eqref{equ_split_first},
the number of element in a list $\mathcal{T}$, outtputed by a list decoder with these two splitting strategies, is bounded from above by  
\begin{align}
\label{eq_complex}
\begin{split}
\small
\begin{split}
&k(k-1)\big[ l(T,\delta)l(T,\frac{\delta}{2})^{k-2}  l(V,\frac{2}{3}\delta)+ l(T,\frac{2}{3}\delta)l(T,\frac{\delta}{3})^{k-2} l(V,\delta) \big], \\
&=k(k-1) l(T,\delta)  l(T,\frac{2}{3} \delta) \big[l(T,\frac{\delta}{2})^{k-2} + l(T, \frac{\delta}{3})^{k-2}\big],
\end{split}
\end{split}
\end{align}
where we used $l(T,\delta)=l(V,\delta)$.

We shall refer to these two splitting strategies as the first and second splitting strategy, respectively.
The first splitting strategy is listed in Algorithm~\ref{main_alg_pari_split} and can be used without or with the second splitting strategy. The function $SubR_1$ or $SubR_2$, listed in Algorithm~\ref{algo_subr_m_0}, is used accordingly.

\begin{lemma}[Complexity with the splitting strategy]
\label{theo_set_split}
Algorithm~\ref{main_alg_pari_split}, with the subroutine $SubR_2$ listed in Algorithm~\ref{algo_subr_m_0}, outputs the set $\Gamma(V,\beta,k)_{\mathcal{P}} \cap B_{\delta}(y)$ in worst-case time
\begin{align}
\label{equ_complex_big_algo_2}
\small
\begin{split}
\mathfrak{C}_{A.\ref{main_alg_pari_split}} =&k \mathfrak{C}_{T\cap B_{\delta}(y)} + (k^2-k) \Big[l(T,\delta) l(T,\frac{\delta}{2})^{k-2} \mathfrak{C}_{V\cap B_{\frac{2}{3}\delta}(y)}  +  l(T,\frac{2}{3} \delta) l(T,\frac{\delta}{3})^{k-2} \mathfrak{C}_{V\cap B_{\delta}(y)} \Big] .
\end{split}
\end{align}
\end{lemma}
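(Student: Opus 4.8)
The plan is to follow the two-part structure of the proof of Lemma~\ref{theo_worst}: first establish \emph{completeness} (every $x\in\Gamma(V,\beta,k)_{\mathcal{P}}\cap B_\delta(y)$ is produced), which is the only nontrivial inclusion since exactness then follows by discarding any candidate with $d(y,\cdot)>r$, a filtering cost that is absorbed via \eqref{equ_simpli_2}; and then read the worst-case running time off the algorithm. Throughout I write $r=\delta\, d(V)=2\delta\, d(T)$ for the absolute radius, and I use the hypothesis $V\cong\sqrt2 T$ in the form $d(V)=2d(T)$ together with $l(T,\cdot)=l(V,\cdot)$.

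For completeness, fix $x=(x_1,\dots,x_k)$ with $\sum_{j=1}^k d(x_j,y_j)\le r$, and let $i^\ast$ be a coordinate maximising $d(x_j,y_j)$, with $D=d(x_{i^\ast},y_{i^\ast})$. The key observation, exactly as in Lemma~\ref{theo_worst}, is that at most one coordinate can have distance exceeding $r/2$ (two such coordinates would already force the total above $r$); hence $d(x_j,y_j)\le r/2$ for every $j\ne i^\ast$ and $\sum_{j\ne i^\ast}d(x_j,y_j)\le r/2$. I would then split on the value of $D$ against the threshold $r-a'=2r/3$ (i.e. $a'=r/3$), showing that the first splitting strategy captures $x$ at index $i=i^\ast$ in exactly one of its two sub-cases: if $D\le 2r/3$, the first sub-case succeeds because each $t_j$ ($j\ne i^\ast$) is list-decoded in $T$ with radius $r/2\ge d(x_j,y_j)$ and $x_{i^\ast}$ in $V$ with radius $2r/3\ge D$; if $D>2r/3$, then $\sum_{j\ne i^\ast}d(x_j,y_j)\le r-D< r/3$, so the second sub-case succeeds because each $t_j$ lies within the reduced $T$-radius $r/3$ and $x_{i^\ast}$ within the $V$-radius $r$. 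Within whichever sub-case applies, I would invoke the second splitting strategy: among the $k-1$ coordinates $j\ne i^\ast$ at most one can exceed the inner threshold ($r/4$ in the first sub-case, $r/6$ in the second), so iterating over the $k-1$ choices of which coordinate is ``large'', the designation matching the true maximal coordinate recovers all the $x_j$'s (the footnote's remark covers the degenerate case where none exceeds the threshold, which merely decodes with a larger-than-necessary radius).

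For the complexity I would read Algorithm~\ref{main_alg_pari_split} with subroutine $SubR_2$, translating each absolute radius into a relative one: decoding in $T$ at radii $r/2,r/3,r/4,r/6$ gives relative radii $\delta,\tfrac23\delta,\tfrac\delta2,\tfrac\delta3$, and decoding in $V$ at radii $2r/3,r$ gives $\tfrac23\delta,\delta$. The initial step computes $\mathcal{T}_1,\dots,\mathcal{T}_k$ once in $T$ at the maximal relative radius $\delta$, contributing $k\,\mathfrak{C}_{T\cap B_\delta(y)}$, and all smaller-radius $T$-lists are obtained by filtering these, incurring no further $T$-decoding. In the outer loop over $i$ (factor $k$) the second splitting strategy replaces the $l(T,\delta)^{k-1}$ combinations by $(k-1)\,l(T,\delta)\,l(T,\tfrac\delta2)^{k-2}$ in the first sub-case and $(k-1)\,l(T,\tfrac23\delta)\,l(T,\tfrac\delta3)^{k-2}$ in the second, each combination triggering a single $V$-decode at relative radius $\tfrac23\delta$ and $\delta$ respectively; collecting the $k(k-1)=k^2-k$ factor and absorbing the storage and closest-point costs via \eqref{equ_simpli} and \eqref{equ_simpli_2} yields exactly \eqref{equ_complex_big_algo_2}.

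I expect the main obstacle to be the completeness argument rather than the complexity count: one must verify that the specific thresholds $a'=r/3$ (outer) and $r/4,r/6$ (inner, per sub-case) make the \emph{union} of all sub-cases cover every admissible distance profile of a target point, and in particular that the single choice $i=i^\ast$ simultaneously satisfies the reduced $T$-radius and the reduced $V$-radius constraints in whichever sub-case is selected. Once the ``at most one coordinate above half the radius'' principle and the relative-radius bookkeeping are in place, the remaining steps are routine.
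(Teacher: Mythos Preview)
Your proposal is correct and follows essentially the same approach as the paper: the paper does not give a separate proof of Lemma~\ref{theo_set_split} but relies on the discussion of the two splitting strategies preceding the lemma together with the pattern of Lemma~\ref{theo_worst}, and your argument is precisely a formalisation of that discussion. Your treatment is in fact slightly more careful than the paper's in two respects: you do not assume $d(y_{i^\ast},x_{i^\ast})>r/2$ (covering the degenerate case where every coordinate lies within $r/2$, which falls under the first outer sub-case), and you make explicit that the inner threshold of the second splitting strategy is $r/4$ in the first outer sub-case but $r/6$ in the second (the paper only spells out the $r/4$ choice, leaving the analogous $r/6$ implicit in the $\delta_1/2$ of $SubR_2$).
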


\begin{algorithm}
\caption{List dec. for $\Gamma(V,\beta,k)_{\mathcal{P}} \in \R^{kn}$ with the splitting strategy}
\label{main_alg_pari_split}
\textbf{Input:} $y=(y_1,y_2, ... , y_k) \in \R^{kn}$, $\delta \geq 0$.  \\
// The sets $\mathcal{T}^{\frac{\delta}{2}}_i$ and $\mathcal{T}^{\frac{\delta}{3}}_i$ are computed only if used by the subroutine.
\begin{algorithmic}[1]
\FOR{$\eta \in \{\delta, \frac{2}{3} \delta, \frac{\delta}{2}, \frac{\delta}{3} \}$} 
\STATE Set $\mathcal{T}_1^\eta, \mathcal{T}_2^\eta,...,\mathcal{T}_k^\eta $ as global variables.
\STATE Compute the sets $\mathcal{T}_1^\eta = T \cap B_\eta(y_1), \mathcal{T}_2^\eta = T \cap B_\eta(y_2), ..., \mathcal{T}_k^\eta= T \cap B_\eta(y_k)$.
\ENDFOR
 \FOR{$1 \leq i \leq k$}

\STATE $\mathcal{T}_1 \leftarrow SubR(y_1,y_2,...,y_k,\delta,\frac{2}{3} \delta, i)$. 
\STATE $\mathcal{T}_2 \leftarrow SubR(y_1,y_2,...,y_k,\frac{2}{3}\delta,\delta, i)$.
\\
// Use $SubR_1$ or $SubR_2$ (listed in Algorithm~\ref{algo_subr_m_0}) if the (second) splitting strategy is used or not, respectively. 

\ENDFOR
\STATE \textbf{Return}  $\mathcal{T}=\{\mathcal{T}_1,\mathcal{T}_2\}$. \\ 

\end{algorithmic}
\end{algorithm}

\begin{algorithm}
\caption{Subroutines of Algorithm~\ref{main_alg_pari_split}}
\label{algo_subr_m_0}
\small

\textbf{Input:} $y=(y_1,y_2, ... , y_k) \in \R^{kn}$, $t \geq 1$, $\delta_1,\delta_2 \geq 0$, $1\leq i \leq k$. \\
%
\textbf{Function} $SubR_1(y_1,y_2,...,y_k,\delta_1,\delta_2, i)$ // no second splitting strategy
\begin{algorithmic}[1]

\STATE Set $j_1<j_2<...<j_{k-1}$, where $\{ j_1,j_2,...,j_{k-1} \} = \{1,2,...,k\} \backslash \{i\}$.
 \FOR{each $(t_{j_1},...,t_{j_{k-1}}) \in \mathcal{T}_{j_1}^{\delta_1} \times \mathcal{T}_{j_2}^{\delta_1} ... \times \mathcal{T}_{j_{k-1}}^{\delta_1}$ }
\STATE Compute the sets $\mathcal{V}_i = V \cap B_{\delta_2}\big(y_i-(- \sum_{j'}t_{j_{j'}})\big)$
\FOR{$v_i \in \mathcal{V}_i $}
\STATE Add $(t_{j_1},...,t_{j_{i-1}},v_i+(-\sum_{j'}t_{j_{j'}}), t_{j_{i}}, ..., t_{j_{k-1}})$ to the list $\mathcal{T}$.
\ENDFOR
\ENDFOR
\STATE \textbf{Return} $\mathcal{T}$.

\end{algorithmic}
\textbf{Function} $SubR_2(y_1,y_2,...,y_k,\delta_1,\delta_2, i)$ // with the second splitting strategy
\begin{algorithmic}[1]
\FOR{$1 \leq l \neq i \leq k$}
\STATE Set $j_1<j_2<...<j_{k-2}$, where $\{ j_1,j_2,...,j_{k-2} \} =  \{1,2,...,k\} \backslash \{i,l\}$.
 \FOR{each $(t_{l},t_{j_1},...,t_{j_{k-2}}) \in \mathcal{T}_l^{\delta_1} \times \mathcal{T}_{j_1}^{\delta_1/2} \times \mathcal{T}_{j_2}^{\delta_1/2} ... \times \mathcal{T}_{j_{k-2}}^{\delta_1/2}$ }
\STATE Compute the sets $\mathcal{V}^{\delta_2}_i = V \cap B_{\delta_2}\big(y_i-(- t_l-\sum_{j'}t_{j_{j'}})\big)$
\FOR{$v_i \in \mathcal{V}^{\delta_2}_i $}
\STATE Add $(t_{j_1},...,t_l,...t_{j_{i-1}},v_i+(- t_l-\sum_{j'}t_{j_{j'}}), t_{j_{i}}, ..., t_{j_{k-1}})$ to the list $\mathcal{T}$.
\ENDFOR
\ENDFOR
\ENDFOR
\STATE \textbf{Return} $\mathcal{T}$.
\vspace{1mm}
\end{algorithmic}
\end{algorithm}

\subsection{Decoding paradigm for $\Gamma(V,\alpha,\beta,k)$}

The proposed decoding algorithm simply uses representation~$\eqref{equ_ning_coset_m}$ of the $k$-ing construction:
$\Gamma(V,\alpha,\beta,k)$ is decoded via $|\alpha|$ use of Algorithm~\ref{main_alg_pari}, as described in Algorithm~\ref{main_alg_turing}.
The complexity of Algorithm~\ref{main_alg_turing} is
\begin{align}
\label{equ_complex_alg2}
\mathfrak{C}_{A.\ref{main_alg_turing}}=|\alpha| (k\mathfrak{C}^T_{dec} + k \mathfrak{C}^V_{dec}). 
\end{align}

\begin{algorithm}
\caption{Decoder for $\Gamma(V,\alpha,\beta,k)$}
\label{main_alg_turing} 
\textbf{Input:} $y=(y_1,y_2, ... , y_k) \in \mathbb{R}^{kn}$.   
\begin{algorithmic}[1]
\FOR{$m \in \alpha$}
\STATE $y' \leftarrow y-m^k$
\STATE Use Algorithm~\ref{main_alg_pari} with $y'$ as input.
\ENDFOR
\STATE \textbf{Return} the closest element of $\mathcal{T}$ to $y$.
\end{algorithmic}
\end{algorithm}

\newpage
\section{Decoders for the parity lattices}
\label{sec_parity_lat}





\begin{algorithm}
\caption{Recursive list dec. for $L_{kn}=\Gamma(\theta L_{n},\beta,k)_{\mathcal{P}}\in \R^{kn}$, $n=c \cdot k^{t-1}$} 
\label{main_alg_pari_split_recu_noSp}
\textbf{Function} $ListRecL(y,t,\delta)$ \\
\textbf{Input:} $y=(y_1,y_2, ... , y_k) \in \R^{kn}$, $0 \leq t$, $ 0 \leq \delta$.  
\begin{algorithmic}[1]
\IF{$t=0$}
\STATE $\mathcal{T} \leftarrow$ The set $\Lambda_c \cap B_{\delta}(y)$.
\ELSE
\STATE  $\mathcal{T}_1 \leftarrow ListRecL(y_1,t-1,\delta)$, $\mathcal{T}_2 \leftarrow ListRecL(y_2,t-1,\delta)$,..., 
$\mathcal{T}_k \leftarrow ListRecL(y_k,t-1,\delta)$.
 \FOR{$1 \leq i \leq k$} 
 \STATE Set $j_1<j_2<...<j_{k-1}$, where $\{ j_1,j_2,...,j_{k-1} \} =  \{1,2,...,k\} \backslash \{i\}$.
 \FOR{each $(t_{j_1},...,t_{j_{k-1}}) \in \mathcal{T}_{j_1} \times \mathcal{T}_{j_2} ... \times \mathcal{T}_{j_{k-1}}$ }
\STATE $\mathcal{V}_i \leftarrow ListRecL( [y_i-(- \sum_{j'}t_{j_{j'}})] \cdot R(c \cdot k^t,\theta)^T,t-1,\delta) \cdot R(c \cdot k^t,\theta)$.
\FOR{$v_i \in \mathcal{V}_i $}
\STATE Add 
$(t_{j_1},...,t_{j_{i-1}}, v_i+(-\sum_{j'}t_{j_{j'}}), t_{j_{i}}, ..., t_{j_{k-1}})$
in the list $\mathcal{T}$.
\ENDFOR
\ENDFOR
\ENDFOR
\normalsize
\STATE \textbf{Remove all elements in $ \mathcal{T}$ at a relative distance $> \delta$ from $y$}.
\STATE Sort the remaining elements in $\mathcal{T}$ in a lexicographic order and remove all duplicates.
\ENDIF
\STATE \textbf{Return} $\mathcal{T}$.
\end{algorithmic}
\end{algorithm}

\subsection{Recursive decoding}
\label{sec_recu_dec}
The decoding paradigms presented in the previous section can be adapted to decode lattices recursively built from the single parity-check construction.
As an example, we adapt Algorithm~\ref{main_alg_pari_no_sp} in the recursive Algorithm~\ref{main_alg_pari_split_recu_noSp} 
to decode the parity lattices $L_{kn}=\Gamma(\theta L_{n}, \beta,k)$. 
Hence, we have $T=L_n$ and $V=\theta L_n$.
Since $l(L_n, \delta) = l(\theta L_n, \delta)$, we set $l(n,\delta)=L(n,r)=l(L_n,\delta)$ to simplify the notations. Moreover,
we also write $\mathfrak{C}(\delta)$ for $\mathfrak{C}(\frac{n}{k},\delta)$ and $l(\delta)$ for $l(\frac{n}{k},\delta)$.


In Algorithm~\ref{main_alg_pari_split_recu_noSp} the ``removing"  steps   (Steps~14 and 15  
) are added to ensure 
that a list with no more than $l(n,\delta)$ elements 
is returned by each recursive call. 
This enables to control the complexity of the algorithm (see e.g. Section~\ref{sec_list_dec_BW}).
However, we shall see that the step in bold is not always necessary for the Gaussian channel.

Note that this  algorithm with $\delta = 1/4$ yields a recursive BDD whose complexity is provided by the next theorem.
%
\begin{theorem}
\label{theo_rec_BDD}
Let $n= c \cdot k^t$ and $y \in \R^n$. 
If $d(y,L_{n}) < \rho^2(L_{n})$, then
Algorithm~\ref{main_alg_pari_split_recu_noSp} with $\delta = 1/4$ outputs the closest lattice point to $y$ in time
\begin{align}
\label{equ_complex_big_algo_rec}
\small
\begin{split}
&\mathfrak{C}_{A.\ref{main_alg_pari_split_recu_noSp}}(n,\frac{1}{4}) = O(n^{1+\frac{1}{log_2 k}}).
\end{split}
\end{align}
\end{theorem}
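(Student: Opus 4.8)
The plan is to establish both correctness and the running time by induction on the number of recursive levels $t$, exploiting that the single parity-check structure lets us trade the one noisy coordinate against the denser sublattice $V=\theta L_{n}$. The base case $t=0$ is immediate: the decoder works in the fixed dimension $c$, so $\Lambda_c\cap B_{1/4}(y)$ is computed in $O(1)$ time, and since $d(y,\Lambda_c)<\rho^2(\Lambda_c)$ it contains exactly the closest point. For the inductive step I would write $n=c\,k^t$, $y=(y_1,\dots,y_k)$, let $x=(x_1,\dots,x_k)$ be the closest point, and record the single key identity $\sum_j d(y_j,x_j)=d(y,x)<\rho^2(L_n)=d(L_n)/4=d(L_{n/k})/2$, where $d(L_n)=2d(L_{n/k})$ because $\theta L_{n/k}\cong\sqrt2\,L_{n/k}$.

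For correctness the crucial bookkeeping is that \emph{at most one} coordinate can be ``bad''. Indeed, if two coordinates obeyed $d(y_j,x_j)\ge d(L_{n/k})/4$ their sum would already exceed $d(L_{n/k})/2$, a contradiction. Hence at least $k-1$ coordinates satisfy $d(y_j,x_j)<\rho^2(L_{n/k})$, so by the induction hypothesis the recursive calls of Step~4 return lists $\mathcal{T}_j$ containing the correct $x_j$. I would then replay the argument of Lemma~\ref{theo_worst}: in the loop iteration whose index $i$ is the (at most one) bad coordinate, the remaining $k-1$ lists supply the exact combination $(x_j)_{j\neq i}$, the $V$-target at Step~8 is $v_i=\sum_j x_j\in V$, and a translation shows its distance to the decoded input equals $d(y_i,x_i)<d(L_{n/k})/2=\rho^2(V)$, so the recursive BDD in $V$ returns $v_i$ and reassembles $x$. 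Every vector the algorithm builds automatically satisfies the parity check and hence lies in $L_n$, and since $y$ is inside the unique-decoding radius, $x$ is the only lattice point within relative distance $1/4$; the removal Step~14 therefore leaves exactly $\mathcal{T}=\{x\}$.

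For the complexity I would show that at $\delta=1/4$ every recursive call decodes within the packing radius of its sublattice. A ball of radius $\rho$ meets a lattice in at most one point (two points would be at mutual distance $\le 2\rho$, i.e. below the minimum distance, the equidistant case being a measure-zero tie settled by the BDD convention), so each returned list has size at most one, regardless of whether its input is near the lattice or far from it. Consequently every product of $k-1$ lists in the loop is a singleton, and the algorithm makes exactly $k$ recursive calls at Step~4 and $k$ at Step~8, each in dimension $n/k$, with $O(n)$ overhead for the sums, the rotations $R(\cdot,\theta)$, and the pruning of constant-size lists. This yields $\mathfrak{C}_{A.\ref{main_alg_pari_split_recu_noSp}}(n,\tfrac14)=2k\,\mathfrak{C}_{A.\ref{main_alg_pari_split_recu_noSp}}(n/k,\tfrac14)+O(n)$, and since $2k>k$ the master theorem gives $O(n^{\log_k 2k})=O(n^{1+1/\log_2 k})$, using $\log_k 2k=1+\log_k 2=1+1/\log_2 k$.

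The step I expect to be the real obstacle is precisely pinning the branching factor at $2k$ rather than the $k(1+2^{k-1})$ that a blind appeal to the Johnson bound $l(n,1/4)\le 2$ of Theorem~\ref{lemma_johnson_ori} would produce; that constant would inflate the exponent to $1+\log_k(1+2^{k-1})$ and destroy the claim. The remedy is the uniform ``list size one'' statement above, which I must justify also for the inputs generated by \emph{wrong} parity combinations (these are far from the lattice and return empty lists, yet still cost a full recursive call), together with a clean treatment of the measure-zero ties. I would finally check that the $O(n)$ per-node overhead, summed over the recursion tree as $\sum_j (2k)^j\,(n/k^j)=n\sum_j 2^j$, is dominated by its last term and so stays within $O(n^{1+1/\log_2 k})$.
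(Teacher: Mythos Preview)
Your proposal is correct and follows the same route as the paper: the identical recurrence $\mathfrak{C}(n,\tfrac14)=2k\,\mathfrak{C}(n/k,\tfrac14)+O(n)$, solved to $O(n^{1+1/\log_2 k})$. The paper's proof is in fact far terser than yours---it writes only that one line and leaves correctness implicit via Lemma~\ref{theo_worst}---so your inductive correctness argument and your explicit justification that each returned list has size at most one (hence the branching factor is $2k$ and not $k(1+2^{k-1})$) are welcome additions rather than deviations.
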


\begin{proof}
\small
\begin{align*}
\mathfrak{C}(n,\frac{1}{4}) = &2k\mathfrak{C}(\frac{n}{k},\frac{1}{4})+ O(n) = O(n)\sum_{i=0}^{\log_k n} \left( \frac{2k}{k} \right)^i =O(n^{1+\frac{1}{log_2 k}}).
\end{align*}
\normalsize
\end{proof}

\subsection{Decoding performance on the Gaussian channel}
\label{sec_dec_gaus}

\begin{lemma}
Let $x \in \Lambda \subset \R^n$ and let $y \in \R^n$ be the point to decode.
Let $\mathcal{T}$ denote the list outputted by a list-decoding algorithm.
The point error probability under list decoding is bounded from above by: 
\small
\begin{align}
\label{eq_error_deco}
\begin{split}
P_e(dec) \leq  &P_e(opt) + P(x \notin \mathcal{T}). \\
\end{split}
\end{align}
\normalsize
\end{lemma}
\begin{proof}
\begin{align}
\small
\begin{split}
P_e(dec) =& P(y \notin \mathcal{V}(x)) + P(x \notin \mathcal{T} \cap y \in \mathcal{V}(x)) \leq  P(y \notin \mathcal{V}(x)) + P(x \notin \mathcal{T}).
 \end{split}
 \normalsize
\end{align}
\end{proof}

In the sequel, we derive formulas to estimate the term $P(x \notin \mathcal{T})$.

\subsubsection{Choosing the decoding radius for regular list decoding on the Gaussian channel}

Consider the Gaussian channel where $y=x+w$, with $y\in \R^{kn}$, $x \in L_{kn}$, and $w \in \R^{kn}$ with i.i.d $\mathcal{N}(0,\sigma^2)$ components.
With a regular list decoder $\mathcal{T} = \Lambda \cap  B_{\delta}(y)$ and
\small
\begin{align}
\begin{split}
P(x \notin \mathcal{T}) = P(||w||^2> r).
\end{split}
\end{align}
\normalsize
Since $||w||^2$ is a Chi-square random variable with $n$ degrees of freedom,
$P(||w||^2> r)=F(n,r,\sigma^2)$, where, for $n$ even :
\small
\begin{align}
\label{equ_chi_squ}
F(n,r,\sigma^2) =  e^{-\frac{r}{2 \sigma^2}} \sum_{k=0}^{n/2 - 1} \frac{1}{k!} \left( \frac{r}{2 \sigma^2} \right)^k.
\end{align}
\normalsize
\begin{lemma}
Consider Algorithm~\ref{main_alg_pari_split_recu_noSp} with the following input parameters.
 The point $y=x+w$, where $y\in \R^{kn}$, $x \in L_{kn}$, and $w \in \R^{kn}$ with i.i.d $\mathcal{N}(0,\sigma^2)$ components. Moreover, $t\geq 0$ and $\delta=r/d(L_{kn})$.
We have  
\begin{align}
P(x \notin \mathcal{T}) = F(kn,r,\sigma^2).
\end{align}
\end{lemma}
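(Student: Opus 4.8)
The plan is to reduce the statement to a single structural fact: that Algorithm~\ref{main_alg_pari_split_recu_noSp} is a \emph{regular} list decoder, i.e.\ that $ListRecL(y,s,\delta)$ returns exactly $L_{ck^{s}}\cap B_{\delta}(y)$ for every input. Granting this, the transmitted point $x$ always lies in $L_{kn}$, so $x\in\mathcal{T}$ holds if and only if $x\in B_{\delta}(y)$, i.e.\ if and only if $d(x,y)=\|w\|^{2}\le r$. Hence $P(x\notin\mathcal{T})=P(\|w\|^{2}>r)$, and since the $kn$ components of $w$ are i.i.d.\ $\mathcal{N}(0,\sigma^{2})$, the normalized quantity $\|w\|^{2}/\sigma^{2}$ is a chi-square variable with $kn$ degrees of freedom, whose tail is $F(kn,r,\sigma^{2})$ by~\eqref{equ_chi_squ}. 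This delivers the claimed equality, so the whole content lies in the structural fact.

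I would prove that structural fact, $ListRecL(y,s,\delta)=L_{ck^{s}}\cap B_{\delta}(y)$, by induction on $s$, where the ball is understood at relative radius $\delta$ with respect to $d(L_{ck^{s}})$. The base case $s=0$ is immediate since the routine returns $\Lambda_{c}\cap B_{\delta}(y)$ directly. For the inductive step I would replay, at relative scale, the argument from the proof of Lemma~\ref{theo_worst}. Fix $x=(x_{1},\dots,x_{k})\in L_{ck^{s}}\cap B_{\delta}(y)$. Using $d(L_{ck^{s}})=2\,d(L_{ck^{s-1}})$, the inequality $\sum_{j=1}^{k}d(x_{j},y_{j})=d(x,y)\le 2\delta\,d(L_{ck^{s-1}})$ forces at most one index $i$ to satisfy $d(x_{i},y_{i})>\delta\,d(L_{ck^{s-1}})$, so the other $k-1$ blocks obey $x_{j}\in B_{\delta}(y_{j})$ at relative radius $\delta$ in $L_{ck^{s-1}}$, and by the induction hypothesis each such $x_{j}$ appears in the sublist $\mathcal{T}_{j}$. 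Feeding these $k-1$ correct blocks into the parity constraint, the last block is recovered by a decoding in $V=\theta L_{ck^{s-1}}$; because $d(x_{i},y_{i})\le 2\delta\,d(L_{ck^{s-1}})=\delta\,d(V)$, the sought point sits at relative radius $\delta$ and is returned by the recursive call $\mathcal{V}_{i}$. Thus $x$ is assembled and, having $d(x,y)\le\delta\,d(L_{ck^{s}})$, is retained by the removal step, giving $L_{ck^{s}}\cap B_{\delta}(y)\subseteq ListRecL(y,s,\delta)$; the reverse inclusion holds because Step~14 deletes every tuple outside $B_{\delta}(y)$ while each assembled tuple satisfies $\sum\in V$ and hence lies in $L_{ck^{s}}$.

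Two technical points must be discharged en route. First, the $V$-decoding is implemented through the scaling-rotation operator $R(\cdot,\theta)$ that realizes $V=\theta L_{ck^{s-1}}$ as a rotated copy of $L_{ck^{s-1}}$; since $R R^{T}=2I$ this map scales every squared distance by $|\theta|^{2}=2$, and because $d(V)=2\,d(L_{ck^{s-1}})$ it leaves the relative radius invariant, which is exactly why the same $\delta$ is passed unchanged down the recursion and why the induction hypothesis transfers to $\mathcal{V}_{i}$. Second, I would check that the removal and deduplication steps never discard a genuine point: Step~14 erases only out-of-ball elements and Step~15 only duplicates, and since $|L_{ck^{s}}\cap B_{\delta}(y)|\le l(ck^{s},\delta)$ by the definition of the list size (Theorem~\ref{lemma_johnson_ori}), the surviving list automatically respects the bound $l(ck^{s},\delta)$ without ever pruning a valid lattice point in the ball.

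The main obstacle is precisely this bookkeeping of radii across levels: one must verify that a single relative radius $\delta$ simultaneously encodes the absolute radii used for the $T$-decodings and the $V$-decoding, and that the factor-of-two growth of the minimum distance per level is exactly compensated by the similarity $R$, so the induction closes with no drift in the effective decoding radius. Everything else is a mechanical recursive invocation of Lemma~\ref{theo_worst}, together with the observation that Step~14 upgrades the ``all in-ball points are found'' inclusion into the equality $\{x\in\mathcal{T}\}=\{\|w\|^{2}\le r\}$ that the probability identity requires.
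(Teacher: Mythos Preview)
Your proposal is correct and follows essentially the same approach as the paper. The paper treats the lemma as an immediate consequence of the sentence preceding it: since Algorithm~\ref{main_alg_pari_split_recu_noSp} (with Step~14) is a regular list decoder, one has $\{x\notin\mathcal{T}\}=\{\|w\|^{2}>r\}$, and the chi-square tail gives $F(kn,r,\sigma^{2})$. You supply the detail the paper leaves implicit, namely the inductive verification (recursing on Lemma~\ref{theo_worst}) that the recursive algorithm indeed returns $L_{ck^{s}}\cap B_{\delta}(y)$ at every level, but the underlying argument is the same.
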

Based on \eqref{eq_error_deco}, quasi-optimal performance with regular list decoding is obtained
by choosing a decoding radius $r=E[||w||^2] (1 + \epsilon) =n\sigma^2  (1 + \epsilon)$ such that $F(n,r,\sigma^2)  < \eta \cdot~P_e(opt,\sigma^2)$ (in practice $\eta= 1/2$ is good enough).
Moreover, it is easy to show that $\epsilon \rightarrow 0$ when $n \rightarrow + \infty$.
We denote by $\delta^*$ the relative decoding radius corresponding to this specific $r$:
\small
\begin{align}
 \delta^* = \frac{n \sigma^2 (1 + \epsilon)}{d(\Lambda)}.
\end{align}
\normalsize
Of course, the greater $\delta^*$, the greater the list-decoding complexity.

\subsubsection{A modified list-decoding algorithm}

Notice that due to the ``removing step" 
(Steps 14, in bold, of Algorithms~\ref{main_alg_pari_split_recu_noSp}), if a point found at the last recursive step is at a distance greater than $r$ from $y$, even if it is the unique point found, it is not kept and an empty list is returned: The decoding radius is $r$  in $V$ and $r/2$ in $T$, but only the points at a distance less than $r$ from $y$ are kept. 


To avoid this situation, we remove Step~14 in Algorithm~\ref{main_alg_pari_split_recu_noSp}.
We will see in the rest of the paper that this enables to choose smaller decoding radii for QMLD than with regular list decoding
and reduce the complexity despite the absence of the removing step.
In terms of error probability, decoding in a sphere is the best choice given a finite decoding volume around the received point $y$. 
However, there may be larger non-spherical volumes that achieve satisfactory performance but that are less complex to explore. This is the main idea behind this modified list-decoding algorithm. 
This subsection concentrates on the analysis of the error probability of the modified algorithm. 

\begin{theorem}
\label{theo_error_proba}
Consider Algorithm~\ref{main_alg_pari_split_recu_noSp} without Step~14 with the following input parameters.
The point $y$ is obtained on a Gaussian channel with VNR $\Delta=\vol(L_{kn})^{2/kn}/2 \pi e \sigma^2$ as $y=x+w$, where $y\in \R^{kn}$, $x \in L_{kn}$, and $w \in \R^{kn}$ with i.i.d $\mathcal{N}(0,\sigma^2)$ components. Moreover, $t\geq 0$ and $\delta$ is the relative decoding radius.
We have
\small 
\begin{align}
P(x \notin \mathcal{T}) \leq U_{kn}(\delta,\Delta), 
\end{align}
\normalsize
where
\small 
\begin{align}
\label{eq_with_delta}
\begin{split}
U_n(\delta,\Delta) = &\min \Big\{ \binom{k}{2} U_{\frac{n}{k}}(\delta,\frac{\Delta}{2^{\frac{1}{k}}})^2 +   k  U_{\frac{n}{k}}(\delta, 2^{\frac{k-1}{k}}\Delta)(1- U_{\frac{n}{k}}(\delta , \frac{\Delta}{2^{\frac{1}{k}}}))^{k-1}, \ 1 \Big\}.
 \end{split}
\end{align}
\normalsize
The initial condition $U_c(\delta,\Delta)$ corresponds to the decoding performance in $L_c$: $U_c(\delta,\Delta)= P(x \notin \mathcal{T}_c)$, where $\mathcal{T}_c$ denotes the list of candidates obtained when list decoding in $L_c$.
\end{theorem}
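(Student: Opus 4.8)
The plan is to prove the recursive bound on $P(x \notin \mathcal{T})$ by analyzing when the modified algorithm (Algorithm~\ref{main_alg_pari_split_recu_noSp} without Step~14) fails to recover a transmitted point $x=(x_1,\dots,x_k) \in L_{kn}$. Since the algorithm is recursive, I would condition on the outcome of the $k$ sub-decodings of the components $y_i$ in $L_n$, and then on the subsequent decoding step in $V=\theta L_n$, exactly mirroring the structure of the correctness argument in the proof of Lemma~\ref{theo_worst}. The key probabilistic input is that each sub-list $\mathcal{T}_i$ fails to contain the correct $x_i$ with probability at most $U_{n/k}(\delta,\cdot)$ by the induction hypothesis, where the VNR argument must be tracked carefully because the effective noise variance changes when we pass between the $T$-decoding (relative radius tied to $d(T)$) and the $V$-decoding (relative radius tied to $d(V)=2d(T)$).

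First I would set up the induction on $t$, with base case $t=0$ given directly by the definition $U_c(\delta,\Delta)=P(x\notin\mathcal{T}_c)$. For the inductive step, I would identify the ``good'' event under which $x$ is guaranteed to be output. Following the logic already used in Lemma~\ref{theo_worst}, $x$ is recovered provided at least $k-1$ of the component sub-lists contain the correct $x_j$ and the remaining single coordinate is then recovered by the $V$-decoding of $y_i-(-\sum_{j\neq i} x_j)$. This suggests decomposing the failure event into two disjoint contributions: (i) \emph{two or more} of the component decodings fail, which by a union bound over the $\binom{k}{2}$ pairs and independence of the noise on disjoint blocks is bounded by $\binom{k}{2} U_{n/k}(\delta, \tfrac{\Delta}{2^{1/k}})^2$; and (ii) \emph{exactly one} coordinate $i$ fails at the $T$-level but is then not recovered at the $V$-level, giving the term $k\, U_{n/k}(\delta, 2^{(k-1)/k}\Delta)\,(1-U_{n/k}(\delta,\tfrac{\Delta}{2^{1/k}}))^{k-1}$. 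The factor $k$ counts the choice of the failing index, the $(1-U)^{k-1}$ factor captures that the other $k-1$ coordinates succeeded at the $T$-level, and the $U_{n/k}(\delta,2^{(k-1)/k}\Delta)$ factor bounds the failure of the $V$-decoding. The outer $\min\{\cdot,1\}$ is simply because a probability cannot exceed one.

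The main obstacle, and the step I would spend the most care on, is justifying the VNR arguments $\tfrac{\Delta}{2^{1/k}}$ and $2^{(k-1)/k}\Delta$. These arise from converting between the VNR at level $kn$ and the VNR at level $n$ for the two different lattices $T=L_n$ and $V=\theta L_n \cong \sqrt{2}L_n$. Concretely, $\Delta = \vol(L_{kn})^{2/kn}/(2\pi e \sigma^2)$, and I would compute $\vol(L_{kn})$ in terms of $\vol(L_n)$ using the index formula \eqref{equ_vol}: since $[L_{kn}]$ is a single parity-check sublattice of $T^k$ with $|T^k/L_{kn}| = |T/V| = 2^{n/2}$ (the index of $\theta L_n$ in $L_n$), one gets $\vol(L_{kn}) = 2^{n/2}\vol(L_n)^k$ up to the appropriate dimensional normalization. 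Substituting this and comparing the exponent $2/kn$ against $2/n$ yields the multiplicative correction $2^{1/k}$ or $2^{(k-1)/k}$ depending on whether the sub-decoding happens in $T$ (noise variance unchanged, but block dimension $n$) or in $V$ (where the scale-up by $\sqrt 2$ effectively rescales the lattice and hence the VNR). I would verify these two exponents explicitly by a clean volume/scaling computation, since an error there propagates through the whole recursion; the combinatorial union-bound portion is comparatively routine once the event decomposition is fixed.

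I would close by remarking that the bound is a worst-case union bound and is therefore an inequality (hence $\leq U_{kn}$ rather than equality), and that it remains valid in the modified algorithm precisely because Step~14 has been removed: without the spherical ``removing'' step a correctly-decoded coordinate at the $V$-level is never discarded for lying at distance greater than $r$ from $y$, so the only failure modes are the two enumerated above.
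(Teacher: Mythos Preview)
Your proposal is correct and follows essentially the same route as the paper's proof: decompose the failure event into ``at least two of the $T$-sublists miss their target'' plus ``the other $k-1$ sublists succeed but the $V$-decoding fails'', apply independence across blocks, and track the VNR shift via $\vol(L_{kn})^{2/kn}=2^{1/k}\vol(L_n)^{2/n}$ together with $d(V)=2d(T)$. One small wording issue: your description of case~(ii) as ``exactly one coordinate $i$ fails at the $T$-level'' is not quite the right event---the bound $k\,U_{\frac{n}{k}}(\delta,2^{\frac{k-1}{k}}\Delta)\bigl(1-U_{\frac{n}{k}}(\delta,\tfrac{\Delta}{2^{1/k}})\bigr)^{k-1}$ corresponds to ``the $k-1$ coordinates $j\neq i$ succeed at the $T$-level \emph{and} the $V$-decoding at position $i$ fails'', with no requirement that the $i$-th $T$-decoding itself fail (in particular this also covers the case where all $k$ $T$-decodings succeed but every $V$-decoding fails, which your phrasing would miss). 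The formula you wrote is the correct one, matching the paper; just adjust the verbal description accordingly.
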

The proof is provided in Appendix~\ref{app_proof_error_proba}.
For instance, a regular list decoder for $L_c$ with relative decoding radius $\delta$ is used in Algorithm~\ref{main_alg_pari_split_recu_noSp}. Consequently, the initial condition is
\small
\begin{align}
\label{equ_ini_cond}
U_c(\delta,\Delta)=F(c,f(\delta),f(\Delta)),
\end{align}
\normalsize
 with $f(\delta) = \delta \cdot d(L_c)$, and $f(\Delta) = \vol(L_c)^\frac{2}{c}/(2 \pi e \Delta)$.

As illustrated by the next example,  \eqref{eq_with_delta} means that the lattices of smaller dimensions are decoded with the same relative radius but with a VNR that is either greater, $2^{\frac{k-1}{k}}\Delta$, or smaller, $\Delta /  2^{\frac{1}{k}} $. 
This result is a consequence of the following properties of the parity lattices: $ \vol(L_n)^\frac{2}{n}= \vol(L_{kn})^{\frac{2}{kn}}/2^{\frac{1}{k}} $ and $d(L_n)=d(L_{kn})/2$ (see Appendix~\ref{app_proof_error_proba} for justifications). 
\begin{figure}
\centering
\includegraphics[width=0.55\columnwidth]{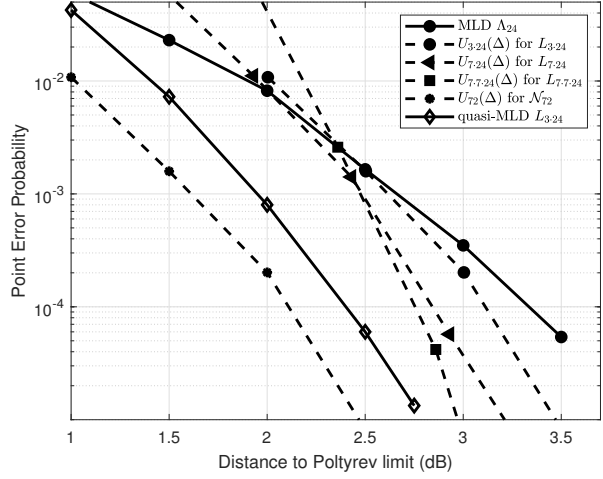}
\caption{Performance curves for Example~\ref{ex_pari_Leech}.}
\label{fig_example_pari}
\end{figure}
\begin{example}
\label{ex_pari_Leech}
Let $\Lambda_{24}$ be the Leech lattice  and let $\beta= [ \Lambda_{24} / \lambda  \Lambda_{24}]$, where $\lambda = \frac{1 + i\sqrt{7}}{2}$  (see Section~\ref{sec_polar} for more details on $\lambda \Lambda_{24}$).
The $k$-parity-Leech lattices are defined as $L_{kn} = \Gamma(\lambda L_n,\beta,k)_{\mathcal{P}}$ 
with initial condition $\Lambda_c = \Lambda_{24}$.
On the Gaussian channel and with Algorithm~\ref{main_alg_pari_split_recu_noSp} without Step~14, the probability that the transmitted lattice point is not in the outputted list is given by \eqref{eq_with_delta}.
We let the initial condition $U_{24}(\Delta)$ be the performance of the optimal decoder for $\Lambda_{24}$
($\delta$ is thus irrelevant in this case). It means that Step~2 of Algorithm~\ref{main_alg_pari_split_recu_noSp} is modified by using a  MLD decoder for $\Lambda_{24}$.

For the lattice $L_{k\cdot 24}$, the value of $U_{k\cdot 24}(\Delta)$ is obtained (using \eqref{eq_with_delta}) by adding the performance curves representing
\begin{itemize}
\item $\binom{k}{2}\cdot (P_e^{\Lambda_{24}}(opt,\Delta))^2$, shifted by $10\log_{10}(2^{\frac{1}{k}})$ dB to the right,
\item and $k \cdot P^{\Lambda_{24}}_e(opt,\Delta)$ shifted by $10\log_{10}(2^{\frac{k-1}{k}})$ dB to the left (assuming that $(1-U_{\frac{n}{k}}(\frac{\Delta}{2^{\frac{1}{k}}}))^{k-1} \approx 1$).
\end{itemize}
The results for $k=3$ and $k=7$ are shown by the dashed line in Figure~\ref{fig_example_pari}.
Since there is only one recursive step, Algorithm~\ref{main_alg_pari_split_recu_noSp} is equivalent to Algorithm~\ref{main_alg_pari_no_sp}. The associated decoding complexity is obtained from \eqref{equ_complex_list_without}
where the term $l(T,\delta)=1$ (note that \eqref{equ_complex_list_without} reduces to \eqref{equ_complex_alg1} in this case). Consequently, we get
\small
\begin{align}
\mathfrak{C}(L_{k \cdot 24}) = 2 k \mathfrak{C}_{MLD}^{\Lambda_{24}}+O( k\cdot 24).
\end{align}
\normalsize
The probability $U_{k \cdot k\cdot 24}(\Delta)$ is obtained in a similar manner from $U_{k\cdot 24}(\Delta)$. For instance, $U_{7 \cdot 7 \cdot 24}(\Delta)$ is also plotted in the figure. The corresponding decoding complexity of  $L_{k \cdot k\cdot 24}$ in this case is obtained from \eqref{equ_complex_list_without} (with $l(T,\delta)=k$, the number of candidates obtained at the previous recursive step):
\small
\begin{align}
\begin{split}
\mathfrak{C}(L_{k\cdot k \cdot 24}) =& k (1+k^{k-1}) \mathfrak{C}(L_{k \cdot 24}) +O( k\cdot k \cdot 24), \\
\approx & 2k^{k+1}\mathfrak{C}_{MLD}^{\Lambda_{24}} + O( k^2  24).
\end{split}
\end{align}
\normalsize
Figure~\ref{fig_example_pari} also depicts the QMLD performance of $L_{3\cdot 24}$ (obtained in Section~\ref{sec_parity72}) for comparison.

We shall see in the next section that the Nebe lattice $\mathscr{N}_{72}$, constructed as $\Gamma(\lambda \Lambda_{24},\alpha,\beta,3)$, has the following properties: $\vol(\mathscr{N}_{72})^{\frac{2}{n=72}}=\vol(T=\Lambda_{24})^{\frac{2}{n/3}}$ and $d(\mathscr{N}_{72})=2d(\Lambda_{24})$. Consequently, \eqref{eq_with_delta} becomes
\small
\begin{align}
\label{eq_with_delta_nebe}
\begin{split}
U_{n=72}(\Delta) = &\min \Big\{ 3 U_{\frac{n}{3}}(\Delta)^2 +  3  U_{\frac{n}{3}}(2 \Delta)(1- U_{\frac{n}{3}}(\delta , \Delta))^{2}, \ 1 \Big\}.
 \end{split}
\end{align}
\normalsize
Taking $U_{\frac{n}{3}}(\Delta)=P_e^{\Lambda_{24}}(opt,\Delta)$, we get a similar curve as $U_{3 \cdot 24}$ for $L_{3\cdot 24}$ but shifted by $10 \cdot \log_{10}(2^{1/3})=1$ dB to the left. The curve $U_{72}(\Delta)$ is shown in Figure~\ref{fig_example_pari}. See Section~\ref{sec_nebe_gaussian_channel} and Figure~\ref{fig_perfNebe_bis} for more details on the quasi-MLD performance of $\mathcal{N}_{72}$.
\end{example}

If the (first) splitting strategy is considered (e.g. in a recursive version of Algorithm~\ref{main_alg_pari_split}), the error probability is slightly greater due to specific cases, such as having simultaneously $\frac{2}{3}\frac{r}{2} < ||w_j||<\frac{r}{2}$ and  $\frac{2}{3} r < ||w_i||<r$, which are not correctly decoded (whereas they were without the splitting strategy).
For the case $k=2$, it is shown in Appendix~\ref{app_proof_error_proba} that with the splitting strategy we get the recursion
\small
\begin{align}
\label{eq_first_spStrat}
\begin{split}
U_n(\delta,\Delta) =  \min \Big\{ U_{\frac{n}{2}}(\delta,\frac{\Delta}{\sqrt{2}})^2 + &2\Big[\big( U_{\frac{n}{2}}(\frac{2}{3}\delta,\frac{\Delta}{\sqrt{2}}) -U_{\frac{n}{2}}(\delta,\frac{\Delta}{\sqrt{2}})\big) U_{\frac{n}{2}}(\frac{2}{3} \delta, \sqrt{2} \Delta) \\
&+ (1-U_{\frac{n}{2}}(\frac{2}{3}\delta, \frac{\Delta}{\sqrt{2}}) )U_{\frac{n}{2}}(\delta,  \sqrt{2} \Delta)  \Big], \ 1 \Big \}.
 \end{split}
\end{align}
\normalsize





\subsection{The 3-parity-Leech lattice in dimension 72}
\label{sec_parity72}

Consider the $3$-parity-Leech lattice $L_{3 \cdot 24} = \Gamma(\lambda \Lambda_{24},[\Lambda_{24}/\lambda \Lambda_{24}],3)_{\mathcal{P}}$ presented in Example~\ref{ex_pari_Leech}.
$L_{3 \cdot 24}$ has the same minimum distance as $\mathscr{N}_{72}$ and a volume $\vol(L_{3 \cdot 24})=\vol(\mathscr{N}_{72})\times|\alpha|$ (using~\eqref{equ_vol}).
Its fundamental coding gain is:
\small
\begin{align}
\gamma(L_{3 \cdot 24}) = \gamma(\Gamma(2S,T_{2\theta},T,3)) \times \frac{1}{(2^{12})^{\frac{2}{72}}} \approx 6.35.
\end{align}
\normalsize
\begin{lemma}
\label{lem_kiss_numb}
The kissing number of $L_{3 \cdot 24}$ is 28,894,320.
\end{lemma}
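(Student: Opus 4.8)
The plan is to count directly the minimal vectors of $L_{3\cdot 24}=\Gamma(\lambda\Lambda_{24},\beta,3)_{\mathcal{P}}$, i.e. the vectors of squared norm $d(L_{3\cdot24})=8$. Writing $T=\Lambda_{24}$ and $V=\lambda\Lambda_{24}$ (so $d(T)=4$, $d(V)=2d(T)=8$, whence $d(L_{3\cdot24})=\min\{d(V),2d(T)\}=8$ by the formula established after \eqref{equ_pari_constA}), a lattice point is $x=(t_1,t_2,t_3)$ with $t_i\in\Lambda_{24}$ and $t_1+t_2+t_3\in V$. Since the Leech lattice is even with minimum norm $4$, each $\|t_i\|^2\in\{0,4,6,8,\dots\}$, and the only ways to write $8=\|t_1\|^2+\|t_2\|^2+\|t_3\|^2$ are, up to permutation, the profiles $(8,0,0)$ and $(4,4,0)$ (three nonzero entries already cost at least $12$). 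I would count each profile separately and add.

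For the profile $(8,0,0)$: the unique nonzero entry $t$ satisfies $t_1+t_2+t_3=t\in V=\lambda\Lambda_{24}$ with $\|t\|^2=8$. Because $\lambda$ is a similarity of ratio $\sqrt2$, multiplication by $\lambda$ is a norm-doubling bijection from the norm-$4$ vectors of $\Lambda_{24}$ onto the norm-$8$ vectors of $\lambda\Lambda_{24}$; hence their number is $\tau(\Lambda_{24})=196560$. With $3$ choices for the nonzero coordinate, this profile contributes $3\cdot196560=589680$.

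For the profile $(4,4,0)$: fix the zero coordinate (three choices) and count ordered pairs $(t_1,t_2)$ of norm-$4$ Leech vectors with $t_1+t_2\in V$. The key reduction is that $\Lambda_{24}/\lambda\Lambda_{24}\cong(\mathbb{Z}[\lambda]/\lambda\mathbb{Z}[\lambda])^{12}\cong\mathbb{F}_2^{12}$ has exponent $2$, so $t_1+t_2\in\lambda\Lambda_{24}$ is equivalent to $t_1\equiv t_2\ (\mathrm{mod}\ \lambda\Lambda_{24})$. Writing $N(c)$ for the number of norm-$4$ vectors in a coset $c$, the number of admissible ordered pairs is therefore $\sum_c N(c)^2$. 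Since the minimum norm of $\lambda\Lambda_{24}$ is $8$, the zero coset contains no minimal vector, so all $196560$ minimal vectors lie in the $4095$ nonzero cosets. The crux is to show that they are \emph{equidistributed}, i.e. $N(c)=196560/4095=48$ for every nonzero $c$; this gives $\sum_c N(c)^2=4095\cdot48^2=9434880$ and hence $3\cdot9434880=28304640$ for this profile. Adding the two profiles yields $589680+28304640=28894320$, as claimed.

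The main obstacle is the equidistribution $N(c)=48$. By Cauchy–Schwarz, $\sum_c N(c)^2\ge(\sum_c N(c))^2/4095=196560\cdot48$ with equality \emph{iff} the distribution is uniform, so uniformity is exactly equivalent to the stated count and cannot be circumvented. I would establish it in one of two ways: either by exhibiting a group of $\mathbb{Z}[\lambda]$-linear automorphisms of $\Lambda_{24}$ — automorphisms commuting with multiplication by $\lambda$, hence stabilizing $\lambda\Lambda_{24}$, permuting the cosets and preserving norms — that acts transitively on the $4095$ nonzero cosets; or, more pragmatically, by direct enumeration, reducing each of the $196560$ minimal Leech vectors modulo $\lambda\Lambda_{24}$ and tallying coset occupancies to confirm that each nonzero coset receives exactly $48$. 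The first route is more conceptual but requires identifying a sufficiently large automorphism group (note that $\mathrm{SL}_2(25)$ alone is too small, as $4095\nmid|\mathrm{SL}_2(25)|$), whereas the second is routine but non-illuminating.
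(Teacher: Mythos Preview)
Your decomposition into the two shapes $(8,0,0)$ and $(4,4,0)$ and the resulting counts are exactly the paper's proof. For the equidistribution $N(c)=48$ that you flag as the obstacle, the paper does not use a transitive automorphism group or a computer enumeration; it simply invokes the well-known ``frame'' structure of the Leech minimal vectors (Theorem~2, Chap.~12 of Conway--Sloane \cite{Conway1999}): the $196560$ minimal vectors split into $4095$ classes of $48$, each class consisting of $24$ mutually orthogonal $\pm$-pairs.

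In fact you already have all the ingredients for a self-contained proof, with no need for either of your proposed routes. Since $2\Lambda_{24}\subset\lambda\Lambda_{24}$ (your exponent-$2$ observation), $v\equiv w\pmod{\lambda\Lambda_{24}}$ implies $v\equiv -w$ as well, so both $v\pm w\in\lambda\Lambda_{24}$. For minimal vectors $v\neq\pm w$ this gives $\|v\pm w\|^2\ge d(\lambda\Lambda_{24})=8$, i.e.\ $8\pm 2\langle v,w\rangle\ge 8$, hence $\langle v,w\rangle=0$. Thus the norm-$4$ vectors in any fixed coset are pairwise orthogonal up to sign, so there are at most $2\cdot 24=48$ of them; combined with $196560=4095\cdot 48$ and $N(0)=0$, this forces $N(c)=48$ for every nonzero $c$.
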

The proof is provided in Appendix~\ref{app_kissing}. 
The kissing number is about $2^{7.75}$ smaller than the kissing number of $\mathscr{N}_{72}$ (which is $6,218,175,600$).
As a result, one can state the following regarding the relative performance of these two lattices on the Gaussian channel:
1 dB is lost by the parity-Leech lattice due to a smaller $\gamma$, but using the rule of thumb that 0.1 dB is lost each time the kissing number is doubled \cite{Forney1998}, there is also an improvement of 0.8 dB.
Overall, we expect the performance of these two lattices to be only 0.2 dB apart but where the decoding complexity of the 3-parity-Leech lattice is significantly reduced compared to $\mathscr{N}_{72}$ (see Section~\ref{sec_deco_turyn}). The QMLD performance is shown in Figure~\ref{fig_perfNebe_bis} and it is indeed at 0.2 dB from the one of $\mathscr{N}_{72}$.

Consider Algorithm~\ref{main_alg_pari_no_sp} for decoding.
\eqref{eq_with_delta} yields

 \small 
\begin{align}
\begin{split}
U_{3\cdot 24}(\delta,\Delta) = &3 U_{24}(\delta,\frac{\Delta}{2^\frac{1}{3}})^2 + 3  U_{24}(\delta, 2^{\frac{2}{3}} \Delta)(1- U_{24}(\delta , \frac{\Delta}{2^\frac{1}{3}}))^{2}.
 \end{split}
\end{align}
\normalsize
A MLD decoder for $\Lambda_{24}$ as subroutine is not powerful enough to get QMLD performance (see the curve for $U_{3 \cdot 24}$ on Figure~\ref{fig_example_pari}). We can for instance consider a sphere decoder computing $\Lambda_{24} \cap B_{\delta \cdot d(\Lambda_{24})}(y)$.  Then $U_{24}(\delta,\Delta)=F(24,\delta \cdot d(\Lambda_{24}),\sigma^2)$ and the relative decoding radius $\delta^*$ should be chosen such that $3 \cdot F(24,\delta^* \cdot d(\Lambda_{24}),\sigma^2)^2 \approx 1/2 \cdot  P_e^{L_{3\cdot 24}}(opt,\sigma^2)$. We find $\delta^* \approx 25/64$. With Theorem~\ref{lemma_johnson_ori} we get that $l(\Lambda_{24},\delta^*)=4$.
The (worst-case) complexity of Algorithm~\ref{main_alg_pari_no_sp}, given by Lemma~\ref{theo_worst}, becomes

\small
\begin{align}
\begin{split}
\mathfrak{C}^{L_{3\cdot 24}}_{QMLD} =& 3 \mathfrak{C}_{\Lambda_{24}\cap B_{\delta^* \cdot d(\Lambda_{24})}(y) }+  3 l(\Lambda_{24},\delta^*)^2 \mathfrak{C}_{\Lambda_{24}\cap B_{\delta^* \cdot d(\Lambda_{24})}(y)}, \\
=& 51 \cdot \mathfrak{C}_{\Lambda_{24}\cap B_{\delta^* \cdot d(\Lambda_{24})}(y)}.
\end{split}
\end{align}
\normalsize

\subsection{Parity lattices with $k=2$: Barnes-Wall lattices}
\label{sec_pari_k2}

\subsubsection{Existing algorithms}

Several algorithms have been proposed to decode $BW$ lattices. \cite{Forney1988} uses the trellis representation of the two-level squaring construction to introduce an efficient MLD algorithm for the low dimension instances of $BW_n$. 
Nevertheless, the complexity of this algorithm is intractable
for $n>32$: The number edges in the trellis is  $2\cdot 2^{2n/8}+2\cdot2^{3n/8}$, 
e.g. decoding in $BW_{128}$ involves $2\cdot2^{48}+2\cdot 2^{32}$ decoders of $BW_{32}$. Forney states in \cite{Forney1988} : 
``The first four numbers in this sequence\footnote{Forney refers to the number of states per section of the trellis, which is $2^{2n/8}$.}, 
i.e., 2, 4, 16, and 256, are well behaved, but then a combinatorial explosion
occurs: 65 536 states for $BW_{64}$, which achieves a coding gain
of 7.5 dB, and more than four billion states for $BW_{128}$,
which achieves a coding gain of 9 dB. This explosion
might have been expected from capacity and $R_{0}$ (cut-off rate)
considerations''.

Later, \cite{Micciancio2008}~proposed the first BDDs running in polynomial time; 
a parallel version of complexity $O(n^2)$ and a sequential one of complexity $O(n\log^2 n)$.
The parallel decoder was generalized in~\cite{Grigorescu2017} to work beyond the packing radius, still in polynomial time. It is discussed later in the paper. 
The sequential decoder uses the $BW$ multilevel construction to perform multistage decoding:
Each of the $\approx \log n$ levels is decoded with a Reed-Muller decoder of complexity $n\log n$.
This decoder was also further studied, in \cite{Harsham2013}, to design practical schemes for communications over the AWGN channel.
However, the performance of this sequential decoder is far from MLD.
A simple information-theoretic argument explains 
why multistage decoding\footnote{Where only one candidate is decoded at each level.} of $BW$ lattices cannot be efficient: 
The rates of some component Reed-Muller codes
exceed the channel capacities of the corresponding levels \cite{Forney2000}\cite{Yan2013}.

As a result, no $BW$ decoder, being both practical and quasi-optimal on the Gaussian channel, 
have been designed and executed for dimensions greater than $32$.

\subsubsection{A new BDD}
\label{sec_BW_BDD}
Algorithm~\ref{main_alg_pari_split_recu_noSp} with $k=2$ and $\delta=1/4$ yields a new BDD for the Barnes-Wall lattices.
%
%
%
%
We apply Theorem~\ref{theo_rec_BDD} for the case $k=2$. 
\begin{corollary}
\label{theo_complex_1}
Let $n=2^{t+1}$ and $y\in \R^n$. If $d(y,BW_n)<\rho^2(BW_n)$, 
then Algorithm~\ref{main_alg_pari_split_recu_noSp} with $k=2$ and $\delta=1/4$ outputs the closest lattice point to $y$ in time $O(n^2)$.
\end{corollary}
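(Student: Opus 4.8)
The plan is to read the statement as the $k=2$ specialization of Theorem~\ref{theo_rec_BDD} and then evaluate the resulting complexity exponent. First I would observe that $BW_n$ is precisely a parity lattice in the sense of Definition~\ref{def_pari_lat}: it satisfies the recursion $BW_{2n}=\Gamma(\phi BW_n,\beta,2)_{\mathcal{P}}$ with initial condition $BW_2=\mathbb{Z}^2$, so in the notation of the definition we have $c=2$, $k=2$, and $\theta=\phi=1+i$. Hence $n=c\cdot k^t=2\cdot 2^t=2^{t+1}$ matches the hypothesis, and $BW_n$ plays the role of $L_n$.

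Next I would verify that the hypotheses of Theorem~\ref{theo_rec_BDD} hold. The assumption $d(y,BW_n)<\rho^2(BW_n)$ is exactly the condition $d(y,L_n)<\rho^2(L_n)$ required there, and the relative radius $\delta=1/4$ equals the relative packing radius, since $\rho^2(BW_n)=d(BW_n)/4$ while $\delta=r/d(BW_n)$. Theorem~\ref{theo_rec_BDD} then applies verbatim: it guarantees that Algorithm~\ref{main_alg_pari_split_recu_noSp} returns the unique closest lattice point (which exists and lies strictly inside the decoding sphere because $y$ is within the guaranteed error-correction radius), and it bounds the running time by $O(n^{1+1/\log_2 k})$.

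It then remains only to evaluate this exponent at $k=2$. Since $\log_2 k=\log_2 2=1$, the exponent is $1+1/1=2$ and the bound reads $O(n^2)$, as claimed. Equivalently, one can recover this directly from the recursion underlying Theorem~\ref{theo_rec_BDD}, which for $k=2$ collapses to $\mathfrak{C}(n,1/4)=4\,\mathfrak{C}(n/2,1/4)+O(n)$ (inside the packing radius the list size is $1$, so each call spawns $2k=4$ recursive calls plus $O(n)$ combining overhead); the master theorem with $a=4$, $b=2$, $f(n)=O(n)=O(n^{2-\epsilon})$ places us in the leaf-dominated regime and yields $\Theta(n^{\log_2 4})=\Theta(n^2)$.

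Since this is a direct corollary, I do not expect a genuine obstacle. The only substantive point to confirm is that $BW_n$ really instantiates the parity-lattice recursion with $c=k=2$, so that Theorem~\ref{theo_rec_BDD} is applicable---and this is exactly what the construction in the subsection on $k=2$ parity lattices establishes.
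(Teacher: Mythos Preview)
Your proposal is correct and matches the paper's approach exactly: the paper simply states that the corollary follows by applying Theorem~\ref{theo_rec_BDD} with $k=2$, which gives the exponent $1+1/\log_2 2=2$. Your additional master-theorem computation is just an explicit restatement of the recursion in the proof of Theorem~\ref{theo_rec_BDD} specialized to $k=2$, so there is no substantive difference.
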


\subsubsection{List decoding  $BW$ lattices}
\label{sec_list_dec_BW}

The recursive version of Algorithm~\ref{main_alg_pari_split}, with $k=2$, can be used to list decode the $BW$ lattices. For regular list decoding the removing and sorting steps, as in Algorithm~\ref{main_alg_pari_split_recu_noSp}, should also be added. Let us name it Algorithm~\ref{main_alg_pari_split}'. 

We investigate the complexity of the algorithm of \cite{Grigorescu2017} and Algorithm~\ref{main_alg_pari_split}'.
As mentioned at the beginning of the section, \cite{Grigorescu2017} adapts the parallel BDD of \cite{Micciancio2008}, which uses 
the automorphism group of $BW_n$, to output a list of all lattice points lying at a distance $r =d(BW_n)(1 - \epsilon)$, $0<\epsilon \leq 1$,  
from any $y\in \R^n$ in time 
\begin{align}
\label{complex_algo_grigo}
O(n^2) \cdot L(n,r^2)^2. 
\end{align}
A critical aspect regarding the complexity of this decoder is therefore the list size.
Theorem~\ref{lemma_johnson_ori} provides bounds on the list size when $r\leq d(BW_n)/2$. 
The following lemma, addressing $r> d(BW_n)/2$, is proved in \cite{Grigorescu2017}.
%
\begin{lemma}[Results from \cite{Grigorescu2017}]
\label{lemma_johnson}
The list size of $BW_n$ lattices is bounded as:  
\begin{itemize}
\item $L(n,r) = O(n^{\log_2 4 \lfloor \frac{3}{4\epsilon} \rfloor })$
 if $r \leq d(BW_n)(\frac{3}{4} - \epsilon)$, $0~<~\epsilon~<~\frac{1}{4}$.
\item $L(n,r) = O(n^{2 \log_2 24})$ if $r = \frac{3}{4} d(BW_n)$.
\item And
$L(n,r) = O( n^{8 \log_2 \frac{1}{\epsilon}})$,
if $ r \leq d(BW_n)(1 - \epsilon)$, $0~<~\epsilon~<~\frac{1}{4}$.
\end{itemize}
\end{lemma}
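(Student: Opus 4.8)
The plan is to prove the three bounds by induction on the dimension, exploiting the recursive single parity-check structure $BW_{2n} = \Gamma(\phi BW_n,\beta,2)_{\mathcal{P}}$ together with $d(BW_{2n}) = 2\,d(BW_n)$, and using Theorem~\ref{lemma_johnson_ori} both as the base case (the regime $r \le d(BW_n)/2$, where $L(n,r)$ is already bounded by a constant) and as the second-moment engine driving the inductive step. Fix a centre $y = (y_1,y_2) \in \R^{2n}$ and write a list element of $BW_{2n}$ as $x = (x_1,x_2)$ with $x_1,x_2 \in BW_n$ and $x_1 + x_2 \in \phi BW_n$; the splitting $d(x,y) = d(x_1,y_1) + d(x_2,y_2) \le r$ is what enables the decomposition.

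First I would index the list by its projection $x_1$ onto the first coordinate and bound the total by (number of distinct columns)~$\times$~(maximal fibre size). Each column satisfies $d(x_1,y_1) \le r$, so the columns form a sublist of $BW_n \cap B_{\sqrt r}(y_1)$; each fibre over a fixed $x_1$ is a subset of $(-x_1 + \phi BW_n) \cap B_{\sqrt r}(y_2)$, and since $\phi BW_n \cong \sqrt 2\, BW_n$ has minimum distance $d(BW_{2n})$, the fibre is itself a list-decoding instance at squared radius $r$ in a scaled copy of $BW_n$. This reduces a list-decoding problem in dimension $2n$ to two list-decoding problems in dimension $n$, the only asymmetry being a factor-$2$ union bound over which coordinate we project on.

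The delicate point -- and the main obstacle -- is that the naive product bound (columns $\times$ fibres) is quadratic in the lower-dimensional list size and would therefore give an exponential-in-$n$ estimate rather than a polynomial one. To avoid this, I would invoke the Johnson second-moment argument of Theorem~\ref{lemma_johnson_ori} twice: once on the columns (whose pairwise differences are nonzero vectors of $BW_n$, hence of squared norm $\ge d(BW_n)$, all contained in a ball of squared radius $4r$), and once inside each fibre. Crucially, a fibre can carry two distinct points $x_2,x_2'$ only when $x_2 - x_2' \in \phi BW_n$ has squared norm $\ge d(BW_{2n})$ while both lie in $B_{\sqrt r}(y_2)$; the parallelogram inequality then forces the shared column $x_1$ to lie in the strictly smaller ball of squared radius $r - \tfrac12 d(BW_n)$. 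Feeding this ``heavy-fibre'' constraint back into the column count is what should collapse the quadratic recursion into a linear one of the form $L(2n,\cdot) \le C(\epsilon)\,L(n,\cdot)$, whence $L(n,r) = O\big(n^{\log_2 C(\epsilon)}\big)$.

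Finally I would solve the recurrence and read off $C(\epsilon)$ in each regime. The thresholds $r = \tfrac12 d(BW_n)$, $r = \tfrac34 d(BW_n)$ and $r = d(BW_n)$ are precisely the radii at which the Johnson bounds underlying the column and fibre counts pass from ``bounded'' to ``unbounded,'' which is why the three cases carry qualitatively different exponents, with the isolated value $r = \tfrac34 d(BW_n)$ admitting the tighter constant $24^2$ from the extremal configuration of minimal vectors. Tracking the exact arithmetic that produces $4\lfloor 3/(4\epsilon)\rfloor$, $2\log_2 24$ and $8\log_2(1/\epsilon)$ is essentially bookkeeping, and for the precise constants I would follow the detailed analysis of \cite{Grigorescu2017}.
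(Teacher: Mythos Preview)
The paper does not prove this lemma at all: it is stated as a citation, introduced by ``The following lemma \ldots\ is proved in \cite{Grigorescu2017},'' and no argument is given here. So there is no in-paper proof to compare your proposal against.

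As a sketch of the Grigorescu--Peikert argument, your outline is in the right spirit---recurse on the $(u,u+v)$ structure $BW_{2n}=\Gamma(\phi BW_n,\beta,2)_{\mathcal P}$, project onto one half, and control columns and fibres separately with Johnson-type inputs. But the load-bearing step, where you claim the ``heavy-fibre'' constraint collapses the quadratic recursion $L(2n,\cdot)\le L(n,\cdot)^2$ into a linear one $L(2n,\cdot)\le C(\epsilon)\,L(n,\cdot)$, is asserted rather than established. Your parallelogram observation (a fibre with two points forces its column into a smaller ball) is correct, but you have not shown how to sum this over all columns without reintroducing a product of two list sizes; that bookkeeping is exactly where the nontrivial work in \cite{Grigorescu2017} lies, and you yourself defer to that reference for the constants. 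In short, your proposal is a plausible road map rather than a proof, and since the present paper simply imports the result, that is also all the paper provides.
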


Lemma~\ref{lemma_johnson} shows that the list size of $BW$ lattices is of the form $n^{O( \log \frac{1}{\epsilon})}$ and thus polynomial in the lattice dimension for any radius bounded away from the minimum distance. 
Combining the lemma with \eqref{complex_algo_grigo}, the list decoder complexity becomes $n^{O(\log \frac{1}{\epsilon})}$ for any $r<d(BW_n)(1-\epsilon)$, $\epsilon > 0$. 
This result is of theoretical interest: 
It proves that there exists a polynomial time decoding algorithm (in the dimension)
for any radius bounded away from the minimum distance. 
However, the quadratic dependence is a drawback:
As already explained, 
finding an algorithm with quasi-linear dependence in the list size is stated as an open problem in \cite{Grigorescu2017}.
%


In the following, we demonstrate that if we use Algorithm~\ref{main_alg_pari_split}', rather 
than the automorphism group of $BW_n$ for list decoding,
we get complexity linear in the list size. 
This enables to both improve the regular list-decoding complexity 
and get a practical quasi-optimal decoding algorithm on the Gaussian channel up to $n=128$. 

We compute below the complexity of our algorithm for $\delta~<~9/16$. The complexity analysis for larger $\delta$ (which is the proof of Theorem~\ref{theo_main_complex_sp}) is provided in Appendix~\ref{app_proof_list}. 

If $\delta < \frac{3}{8}$ then $\frac{2}{3}\delta < \frac{1}{4}$ and we have $l(\delta)=O(1)$, $l(\frac{2}{3}\delta)=1$. Moreover, $\mathfrak{C}(\frac{2}{3} \delta) \leq \mathfrak{C}(\frac{1}{4}) = O(n^{2})$ (Theorem~\ref{theo_complex_1}). 
The complexity becomes
\begin{align}
\label{equ_exple1}
\small
\begin{split}
\mathfrak{C}(n,\delta) = & 4 \mathfrak{C}(\delta) + l(\delta)O(n^2)= l(\delta) O(n^2 \log n) = \widetilde{O}(n^2).
\end{split}
\end{align}
 If $\frac{3}{8} \leq \delta <\frac{9}{16}$ and $\mathfrak{C}(\frac{2}{3} \delta) \leq \mathfrak{C}(\frac{3}{8}) = O(n^2\log n)$. We get
\begin{align}
\label{equ_exple2}
\small
\begin{split}
\mathfrak{C}(n,\delta)  =&  l(\delta) O(n^{2} \log n) \sum_{i=0}^{\log_2 n } \left( \frac{2 l(\frac{2}{3} \delta) + 2}{4} \right)^i = l(\delta) O(n^{1+\log_2[1+l(\frac{2}{3} \delta)]} \log n) = l(\delta) \tilde{O}(n^{1+\log_2[1+l(\frac{2}{3} \delta)]}), \\
\end{split}
\end{align}
which is $\tilde{O}(n^{1+\log_2 3})$ if $\delta<1/2$.

Note that for these cases ($\delta < 1/2$) the decoder of \cite{Grigorescu2017} is more efficient: Indeed, Theorem~\ref{lemma_johnson_ori} shows that when $\delta<1/2$ then $l(n,\delta)=O(1)$
and the decoding complexity, given by \eqref{complex_algo_grigo}, is $O(n^2)$. 
Nevertheless, the following theorem (proved in Appendix~\ref{app_proof_list}) shows that our decoder is better for larger values of $\delta$ and,
as we shall see in the next subsection, is useful even when $\delta<1/2$ for quasi-optimal decoding on the Gaussian channel.

\begin{theorem}
\label{theo_main_complex_sp}
Let $n= 2^{t+1}$, $y \in \R^n$. 
The set $BW_n \cap B_{\delta}(y)$ can be computed in worst-case time:
\begin{itemize}
\item $O(n^{2})$ if $\delta < \frac{1}{2}$ (algorithm of \cite{Grigorescu2017}).
\item $l(\delta) O(n^{2 + \log_2 [\frac{l(\frac{2}{3} \delta)+1}{2}]} )  \approx O(n^{1+\log_2 4\lfloor \frac{3}{4 \epsilon} \rfloor^2]})$ if $\delta=\frac{3}{4}-\epsilon$, $0<\epsilon$. 
\item  $O(n^{1+ \log_2 432})$ if $\delta = \frac{3}{4}$.
\item $l(\delta)O(n^2) = O(n^{8 \log_2 \frac{1}{\epsilon} +2})$ if $\delta=1-\epsilon$, $0<\epsilon<\frac{1}{4}$. 
\end{itemize}
\end{theorem}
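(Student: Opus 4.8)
The plan is to set up and solve a two-parameter recursion in the dimension $n$ and the relative radius $\delta$. First I would specialize the complexity formula of Lemma~\ref{theo_set_split} to $k=2$: the powers $l(T,\frac{\delta}{2})^{k-2}$ and $l(T,\frac{\delta}{3})^{k-2}$ collapse to $1$, and in the recursive Algorithm~\ref{main_alg_pari_split}' the two sublattices $T=BW_{n/2}$ and $V=RBW_{n/2}$ are both decoded by recursive calls in dimension $n/2$ (the rotation $R(\cdot,\theta)$ being an isometry, so it changes neither the cost nor the list sizes). Accounting the removing/sorting steps as lower-order overhead, this yields
\begin{align}
\mathfrak{C}(n,\delta) = 2\big(1+l(\tfrac{n}{2},\tfrac{2}{3}\delta)\big)\,\mathfrak{C}(\tfrac{n}{2},\delta) + 2\,l(\tfrac{n}{2},\delta)\,\mathfrak{C}(\tfrac{n}{2},\tfrac{2}{3}\delta).
\end{align}
The first bullet ($\delta<\frac12$) requires no work from this recursion: it is simply the cited $O(n^2)$ bound of \cite{Grigorescu2017}, which dominates what Algorithm~\ref{main_alg_pari_split}' would give in that range anyway.

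The key structural observation I would exploit is that multiplying the radius by $2/3$ lowers $\delta$ across the relevant thresholds, since $\frac{2}{3}\cdot\frac{3}{4}=\frac{1}{2}$. Hence for $\delta=\frac{3}{4}-\epsilon$ one has $\frac{2}{3}\delta<\frac12$, so by the Johnson bound (Theorem~\ref{lemma_johnson_ori}) the reduced-radius list size $l(\frac{2}{3}\delta)$ is a constant (depending only on $\epsilon$) and the reduced-radius subproblem costs $\mathfrak{C}(\frac{2}{3}\delta)=O(n^2)$ by the first bullet. In this regime the recursion has a \emph{constant} branching coefficient $A=2(1+l(\frac{2}{3}\delta))$, and I would solve it by unrolling exactly as in \eqref{equ_exple2}:
\begin{align}
\mathfrak{C}(n,\delta) \;\le\; l(\delta)\,O(n^2)\sum_{i=0}^{\log_2 n}\Big(\tfrac{A}{4}\Big)^i \;=\; l(\delta)\,O\!\Big(n^{\,2+\log_2[\frac{l(\frac{2}{3}\delta)+1}{2}]}\Big),
\end{align}
where I use monotonicity of the list size in the dimension to bound every intermediate $l(\cdot,\delta)$ by the top-level $l(\delta)$, and the geometric sum is dominated by its last term because $A/4=\frac{l(\frac{2}{3}\delta)+1}{2}\ge 1$. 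This is precisely the general form in the second bullet; substituting $l(\frac{2}{3}\delta)\le\frac{3}{4\epsilon}$ (Johnson) and $l(\delta)=O(n^{\log_2 4\lfloor\frac{3}{4\epsilon}\rfloor})$ (Lemma~\ref{lemma_johnson}) and combining the two exponents gives the stated $O(n^{1+\log_2 4\lfloor\frac{3}{4\epsilon}\rfloor^2})$.

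The remaining two bullets ($\delta=\frac34$ and $\delta=1-\epsilon$) are the hard part, and this is where the clean Master-theorem argument above breaks down: once $\frac{2}{3}\delta\ge\frac12$ the reduced-radius list size $l(\frac{2}{3}\delta)$ is itself polynomial in $n$, so the branching coefficient $A$ is no longer constant and the subproblem $\mathfrak{C}(\frac{2}{3}\delta)$ is no longer $O(n^2)$. I would resolve this by induction on the finite hierarchy of thresholds $\frac12<\frac34<1$: for $\delta=\frac34$ the reduced radius is $\frac{2}{3}\delta=\frac12$, handled by the Grigorescu base case, while for $\delta=1-\epsilon$ the reduced radius $\frac{2}{3}\delta$ lies in $[\frac12,\frac34)$ and is controlled by the already-established second bullet. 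Substituting these lower-regime costs for the inhomogeneous term and bounding the total number of candidate tuples explored over the whole recursion tree (rather than per level) yields bounds of the form $l(\delta)\,O(n^2)$; plugging the explicit list sizes from Lemma~\ref{lemma_johnson} then produces the constant $432$ at $\delta=\frac34$ and the exponent $8\log_2\frac1\epsilon+2$ at $\delta=1-\epsilon$. The main obstacle throughout is the coupling of the two recursion arguments: I expect the most delicate step to be the accounting in this super-constant-list regime, where the branching factor grows with the dimension and one must argue that the reduced-radius branches contribute only a lower-order correction, so that the output list size $l(\delta)$ times the quadratic base cost remains the dominant term.
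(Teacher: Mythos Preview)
Your treatment of the first two bullets is essentially the paper's proof: you specialize Lemma~\ref{theo_set_split} to $k=2$, note that $\tfrac{2}{3}\delta<\tfrac12$ forces $l(\tfrac{2}{3}\delta)=O(1)$ and $\mathfrak{C}(\tfrac{2}{3}\delta)=O(n^2)$, and unroll the recursion exactly as in \eqref{equ_exple2}. The paper does precisely this (see Appendix~\ref{app_proof_list}), including the same substitution of the Johnson bound and of Lemma~\ref{lemma_johnson} to get the stated exponent.

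There is, however, a genuine gap in your sketch for $\delta=\tfrac34$. You assert that the outcome is ``of the form $l(\delta)\,O(n^2)$'' and that plugging in Lemma~\ref{lemma_johnson} ``produces the constant $432$''. It does not: Lemma~\ref{lemma_johnson} gives $l(\tfrac34)=O(n^{2\log_2 24})$, so $l(\delta)O(n^2)=O(n^{2+2\log_2 24})=O(n^{\log_2 2304})$, whereas the theorem claims $O(n^{1+\log_2 432})=O(n^{\log_2 864})$, a strictly smaller exponent. The issue is structural, not arithmetic: at $\delta=\tfrac34$ the reduced radius is exactly $\tfrac12$, so $l(\tfrac{2}{3}\delta)\le 2\cdot\tfrac{n}{2}$ is \emph{linear} in the subproblem dimension, and your ``total candidates over the tree'' heuristic no longer collapses the recursion to $l(\delta)O(n^2)$. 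Getting the constant $432$ requires a finer argument than the one you outlined (the paper itself defers this case and the case $\delta=1-\epsilon$ to the extended version~\cite{Corlay2020}). Your claim for $\delta=1-\epsilon$ is numerically consistent with the theorem, since there $l(\delta)O(n^2)=O(n^{8\log_2\frac{1}{\epsilon}+2})$ does match, but you have not actually justified that the recursion collapses to that form when the branching factor is polynomial; the phrase ``contribute only a lower-order correction'' is doing a lot of unproven work.
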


\subsubsection{Decoding on the Gaussian channel}
\label{sec_gauss_BW}

We apply the analysis presented in Section~\ref{sec_dec_gaus} to the case $k=2$ to establish the smallest list-decoding radius $\delta$ required for quasi-optimal decoding.
The first element needed is the MLD performance $P_e^{BW_n}(opt,n,\sigma^2)$ of $BW_n$.
As mentioned earlier it is not known for $n>32$.
Nevertheless, $P_e^\Lambda(opt,n,\sigma^2)$ can be lower-bounded  for $any$ lattice $\Lambda$ in $n$ dimensions using the sphere lower bound \cite{Tarokh1999} (see also \cite{Forney2000} or \cite{Ingber2013}). Table~\ref{table_dist_polty} provides the sphere lower bound on the best performance achievable for $P_e^{\Lambda}(opt,n,\sigma^2)=10^{-5}$. 
With \eqref{equ_chi_squ}, we can compute the smallest $\delta^*$, for the corresponding values of $\sigma^2$, such that $P(x \notin \mathcal{T}) = P(||w||^2> r)\lessapprox 10^{-5}$ with regular list decoding. Using $\delta^*$ yields quasi-optimal decoding performance, regardless of the MLD performance of $BW_n$. 
The values of $\delta^*$ as a function of $n$ are presented in Figure~\ref{fig_rad_mu}. 
The corresponding (worst-case) decoding complexity is obtained with Theorem~\ref{theo_main_complex_sp}. 
It is super-quadratic for all $n\geq16$. 

\begin{table}
\begin{center}
\begin{tabular}{|c|c|c|c|c|c|}
 \hline
Dimension $n$  & 16 & 32 &  64 & 128 & 256  \\
  \hline
   &  &  &   & & \\
Dist. to Polt. (dB) sphere bound & 4.05 & 3.2 & 2.5 & 1.9 & 1.4 \\
  \hline
   &  &  &   & & \\
Dist. to Polt. (dB) MLD & 4.5 & 3.7 & 3.1 & 2.3 & ? \\
\hline

\end{tabular}
\end{center}
\caption{Sphere lower bound on the best performance achievable by any lattice $\Lambda$ for  $P_e^{\Lambda}(opt,n,\sigma^2)=10^{-5}$ and MLD performance of $BW_n$ for $P_e(opt,n,\sigma^2)=10^{-5}$.}
\label{table_dist_polty}
\end{table}

Running the simulations (with $\delta^*$ found at the sphere bound) enables to estimate the MLD performance of $BW_n$ lattices. The results are presented\footnote{These estimations were not performed with the regular list decoder, but with the algorithm presented in the rest of the section.} in Table~\ref{table_dist_polty}, and are at $\approx 0.5$ dB of the sphere bound\footnote{We have not yet investigated the case $n=256$.}.  
In Figure~\ref{fig_rad_mu} the corresponding values of $\delta^*$ (still with regular list decoding) are depicted by the diamonds.
For $n=128$, we have $\delta^*>3/4$. Even though the complexity of the regular list decoder is linear in the list size, it remains  $O(n^{8 \log_2 \frac{1}{\epsilon}+2})$ for $\delta = 1-\epsilon$, $0<\epsilon<1/4$. This is much too high to run on a computer in a reasonable time (linear in the list size doesn't mean low complexity in this case).  Consequently, we consider a slightly different strategy to get a practical algorithm for the Gaussian channel.

\begin{figure}
\centering
\includegraphics[width=0.45\columnwidth]{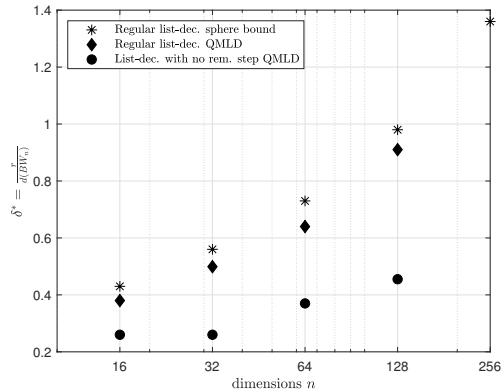}
\caption{Values of the list-decoding relative radius $\delta^*$, for $BW_n$, such that $P(x \notin \mathcal{T}) \approx 10^{-5}$.}
\label{fig_rad_mu}
\end{figure}


As explained in Section~\ref{sec_dec_gaus}, the error probability  of the list-decoding algorithm without the removing step can be estimated with Equation~\eqref{eq_with_delta} without the splitting strategy or \eqref{eq_first_spStrat} with the first splitting strategy.  
Hence, we can also compute the smallest $\delta^*$ such that with this algorithm $P(x \notin \mathcal{T}^{\delta}) \lessapprox 10^{-5}$ (at the MLD performance). 
However, the decoding complexity should be updated to take into account
the fact that there is no removing step in the algorithm even when $\delta<1/2$. 

We consider Algorithm~\ref{main_alg_pari_split}' without the removing step. 
To mitigate the complexity and simplify the analysis, 
whenever $\delta \leq 1/4$ we shall use the BDD presented in Section~\ref{sec_BW_BDD} (i.e. when $\delta \leq 1/4$, we fix it to $1/4$).
Hence, in \eqref{eq_first_spStrat}, $U_n(\delta\leq \frac{1}{4},\Delta)=P_e(BDD,n,\Delta)$. 
The error probability $P_e(BDD,n,\Delta)$ is shown in Figure~\ref{fig_BWperf}.
In the literature,  the performance of BDDs is often estimated via the ``effective error coefficient" \cite{Forney1996} \cite{Salomon2006}. Nevertheless, it it not always accurate, especially in high dimensions. We therefore rely on the Monte Carlo simulations presented in the figure for $P_e(BDD,n,\Delta)$.
The estimated $\delta^*$ with this decoder, shown in Figure~\ref{fig_rad_mu}, are significantly smaller than the ones obtained with the regular list decoder. In particular, $\delta^*<3/8$ for $n\leq 64$ and $\delta^* < 1/2$ for $n=128$.

\begin{figure}[t]
\centering
\includegraphics[width=0.6\columnwidth]{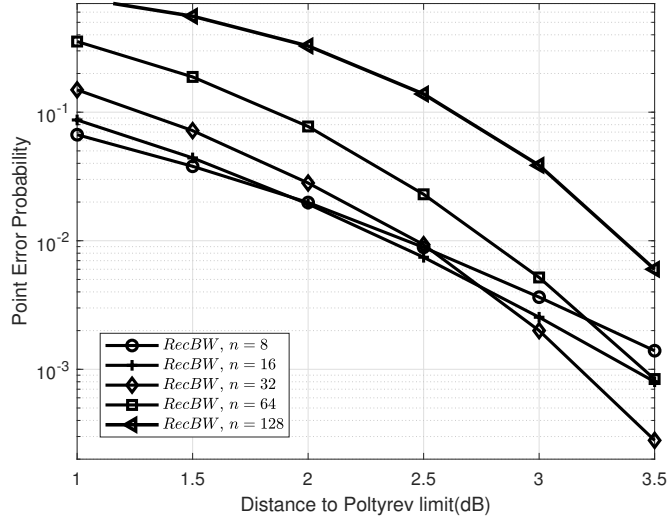}
\caption{Performance of the recursive BDD for the Barnes-Wall lattices on the Gaussian channel.}
\label{fig_BWperf}
\end{figure}

We now study the complexity of this latter algorithm.
We shall use the notation $l'(n,\delta,y)$ to denote the number of elements returned by the decoding algorithm (without a removing step). 
If $\delta \leq 3/8$, we get 
\small
\begin{align}
\begin{split}
l'(n,\delta)&\leq 2 \big[l'(\frac{1}{4})  l'(\delta ) + l'(\delta)  l'(\frac{1}{4})] =4 l'(\delta ) = 4^{\log_2 n} \cdot l(\mathbb{Z}_2,\delta) =O(n^2).
\end{split}
\end{align}
\normalsize
However, considering the average complexity, and taking into account the fact that we remove the duplicates at each recursive step, one has 
\small
\begin{align}
\begin{split}
E_y[l'(n,\delta,y)] \leq & 2 \big[l'(\frac{1}{4}) E_y[ l'(\delta )] + E_y[l'(\delta)]  l'(\frac{1}{4}) - l'(\frac{1}{4}) l'(\frac{1}{4}) ] -E_y^c. \\
=&4 E_y[ l'(\delta )] - 2 - E_y^c,
\end{split}
\end{align}
\normalsize
where $E_y^c$ denotes the average number of common elements in the lists returned by the recursive calls.  
We observed experimentally that for $\delta \leq 3/8$, $E_y[l'(n,\delta,y)]$ is close to 1. 
This observation is not taken into account in the next theorem,
 which bounds the average list size and the average complexity.
 It is however in the interpretation following the theorem.

\begin{theorem}
\label{theo_ave}
Let $E_{y}[l'(n,\delta,y)]$ be the average list size of Algorithm~\ref{main_alg_pari_split}' without the removing step. 
Let $\eta$ denote $E_y[l'(n,3/8,y)]$. 
If $3/8 <\delta \leq 9/16$, $E_{y}[l'(n,\delta,y)]$ is bounded from above as
\begin{align}
E_{y}[l'(n,\delta,y)] = O(n^{2 + \log_2 \eta }).
\end{align}
And the average complexity is bounded from above as:
\begin{itemize}
\item  $E_{y}[\mathfrak{C}(n,\delta)] = \eta \widetilde{O}(n^{2})$  if $\delta \leq 3/8$.
\item $E_{y}[\mathfrak{C}(n,\delta)] = E_{y}[l'(\delta,y)] \widetilde{O}(n^{1+ \log_2[1 + \eta ]})$  if $3/8 <\delta \leq 9/16$.
\end{itemize}
\end{theorem}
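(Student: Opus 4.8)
The plan is to read off, from the recursive structure of Algorithm~\ref{main_alg_pari_split}' with $k=2$ and the first splitting strategy, pointwise recurrences for $l'(n,\delta,y)$ and $\mathfrak{C}(n,\delta)$, take expectations over the Gaussian point $y$, and then solve the resulting recurrences exactly as the worst-case recurrences \eqref{equ_exple1} and \eqref{equ_exple2} were solved, with the averaged sizes $E_y[l'(\cdot)]$ replacing the worst-case sizes $l(\cdot)$. Each recursive call splits into two sub-cases: in one, the $T=BW_{n/2}$ components are list-decoded at relative radius $\delta$ and the $V=RBW_{n/2}$ residual at radius $\tfrac{2}{3}\delta$; in the other, the two radii are exchanged. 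Summing the resulting products of one $T$-list and one $V$-list over the $k=2$ parity positions gives the pointwise multiplicative bound $l'(n,\delta,y)\le 4\,l'(\tfrac{n}{2},\delta,y)\,l'(\tfrac{n}{2},\tfrac{2}{3}\delta,y)$, while accounting separately for the cost of building the $T$-lists and for running a $V$-decoding on every surviving combination of $T$-candidates yields the companion complexity recurrence.

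First I would treat the regime $\delta\le 3/8$. Here $\tfrac{2}{3}\delta\le \tfrac14$, so by the convention of Section~\ref{sec_BW_BDD} the $\tfrac{2}{3}\delta$-branch is replaced by the BDD, which by Theorem~\ref{theo_complex_1} costs $O(n^2)$ and returns a single point. Using $E_y[l'(\tfrac{n}{2},\delta,y)]\le E_y[l'(\tfrac{n}{2},\tfrac38,y)]\le\eta$ by monotonicity in $\delta$, the averaged complexity recurrence collapses to $E_y[\mathfrak{C}(n,\delta)] = 4\,E_y[\mathfrak{C}(\tfrac{n}{2},\delta)] + \eta\,O(n^2)$. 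This is the critical case of the master theorem ($a=4=b^2$ with $b=2$), which solves to $\eta\,\widetilde{O}(n^2)$ and gives the first complexity bullet.

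Next, for $3/8<\delta\le 9/16$ one has $\tfrac{2}{3}\delta\in(\tfrac14,\tfrac38]$, so the $\tfrac{2}{3}\delta$-branch now falls into the regime just analysed: its average list size is at most $\eta$ and its average cost is $\eta\,\widetilde{O}(n^2)$. Substituting these into the recurrences, the list-size recurrence becomes $E_y[l'(n,\delta,y)]\le 4\eta\,E_y[l'(\tfrac{n}{2},\delta,y)]$, which unrolls over $\log_2 n$ levels to $(4\eta)^{\log_2 n}=O(n^{2+\log_2\eta})$, proving the first displayed bound; and the complexity recurrence becomes $E_y[\mathfrak{C}(n,\delta)] = 2(1+\eta)\,E_y[\mathfrak{C}(\tfrac{n}{2},\delta)] + E_y[l'(\tfrac{n}{2},\delta,y)]\,\eta\,\widetilde{O}(n^2)$. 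Summing the induced geometric series $\sum_i\big(\tfrac{2(1+\eta)}{4}\big)^i$ exactly as in \eqref{equ_exple2}, the branching factor $2(1+\eta)$ contributes the exponent $\log_2\tfrac{1+\eta}{2}=\log_2(1+\eta)-1$ on top of the $n^2$ per-node cost, which yields $E_y[l'(\delta,y)]\,\widetilde{O}(n^{1+\log_2(1+\eta)})$, the second bullet.

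The main obstacle is the averaging step itself: the two lists produced at radii $\delta$ and $\tfrac{2}{3}\delta$ from the same received point are nested and hence correlated, so the expectation $E_y[l'(\tfrac{n}{2},\delta,y)\,l'(\tfrac{n}{2},\tfrac{2}{3}\delta,y)]$ is not literally the product of the two expectations. I would handle this by bounding the $\tfrac{2}{3}\delta$-factor by its averaged value $\eta$ at every level, absorbing the correlation into the constant $\eta$; this is exactly the step that linearises the recurrences and is the reason the statement is phrased through $\eta=E_y[l'(n,\tfrac38,y)]$. I would further note that the bound deliberately discards the gain from the duplicate-removal (sorting) step retained in Algorithm~\ref{main_alg_pari_split}', which keeps $\eta$ close to $1$ in practice but is not exploited in the upper bound; quantifying that effect would demand a finer, non-recursive accounting and is left to the empirical observation preceding the theorem.
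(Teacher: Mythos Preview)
Your proposal is correct and mirrors the paper's (very terse) proof: both adapt the worst-case recurrences \eqref{equ_exple1}--\eqref{equ_exple2} by replacing the worst-case list sizes $l(\cdot)$ with averaged sizes bounded uniformly by $\eta$, then solve the resulting recurrences in exactly the same way. The only slip is one of attribution: the inequality $E_y[l'(\tfrac{n}{2},\tfrac38,y)]\le\eta=E_y[l'(n,\tfrac38,y)]$ is monotonicity of the average list size in the \emph{dimension}, not in $\delta$ --- precisely the chain $E_y[l'(\tfrac{n}{2},\tfrac38,y)]\ge E_y[l'(\tfrac{n}{4},\tfrac38,y)]\ge\cdots$ that the paper invokes.
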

See Appendix~\ref{app_theo_ave} for the proof. \\
As a result, based on the observation that $\eta$ is close to 1, 
the average complexity is estimated as:
\begin{itemize}
\item  $E_{y}[\mathfrak{C}(n,\delta)] = \widetilde{O}(n^{2})$  if $\delta \leq 3/8$.
\item $E_{y}[\mathfrak{C}(n,\delta)] =  \widetilde{O}(n^{4})$  if $3/8 <\delta \leq 9/16$.
\end{itemize}

Since $\delta^*<3/8$ for $n<64$ and $\delta^* < 1/2$ for $n=128$ for QMLD, we conclude that the decoding complexity is quadratic for $n\leq 64$ and quartic for $n=128$.

For a practical implementation, we can bound the maximum number of points kept at each recursive step: 
I.e. at the end of each recursive call, 
the $\aleph(\delta)$ best candidates are kept at each recursive step.
The size of the list $\aleph(\delta)$, for a given $\delta$, is a parameter to be fine tuned:
For $n=$16, 32, 64, we set $\aleph(\delta)=$5, 10, 20, respectively. 
For  $n=128$, $\aleph(\delta)=1000$ ($<<$ than our bound in $O(n^2)$ on $E_y[l'(\delta,y)]$) and $\aleph(2/3 \delta)=4$ yields quasi-MLD performance. 
Figure~\ref{fig_perf_list} depicts the simulation results for $BW$ lattices up to $n=128$. \\


\begin{figure}[t]
\centering
\includegraphics[width=0.6\columnwidth]{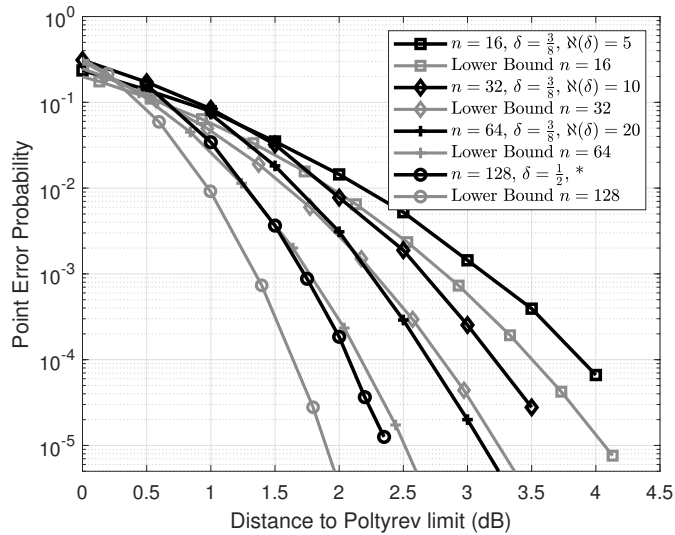}
\vspace{-2mm}
\caption{Simulation results for the $BW$ lattices up to $n~=~128$
and the universal bounds of \cite{Tarokh1999}. $^*$For $n=128$, $\aleph(\delta)=1000$ and $\aleph(2/3 \delta)=4$.}
\label{fig_perf_list}
\end{figure}

\section{Decoders for Leech and Nebe lattices}
\label{sec_deco_turyn}

\subsection{Existing decoding algorithms for $\Lambda_{24}$ and $\mathscr{N}_{72}$}
\label{sec_deco_algo}

\subsubsection{History of the decoders of $\Lambda_{24}$}

$\Lambda_{24}$ appeared under many different forms in the literature (which may be equivalent to Turyn's construction).
Among  others, $\Lambda_{24}$ can be  obtained as (i)~8192 cosets  of $4D_{24}$, (ii)
4096   cosets
of $(\sqrt{2}E_{8})^3$,  (iii)~2~cosets of  the  half-Leech lattice  $H_{24}$,
where  $H_{24}$ is  constructed  by applying  Construction~$B$ on  the
Golay code $C_{24}$,  and (iv) 4 cosets of  the quarter-Leech lattice,
where quarter-Leech lattice  is also built  with Construction~$B$ but applied  on a
subcode of $C_{24}$. Finally, one of the simplest constructions is due
to~\cite{Bonnecaze1995},  where the  Leech  lattice  is obtained  via
Construction A applied on the  quaternary Golay code.

The history of
maximum-likelihood decoding (MLD) algorithms   for  $\Lambda_{24}$  starts
with~\cite{Conway1984}, where  Conway and  Sloane used (i)  to compute
the second moment  of the Voronoi region of  $\Lambda_{24}$. The first
efficient  decoder  was presented  in~\cite{Conway1986}  by the  same
authors using construction  (ii). Two years later,  Forney reduced the
complexity of  the decoder by  exploiting the same  construction (ii),
which he rediscovered in the scope of the ``cubing construction", with
a 256-state  trellis diagram representation~\cite{Forney1988} (see Section~\ref{sec_trellis_dec} for a presentation of trellis).  A year
later, it was further improved in~\cite{Lang1989} and~\cite{Beery1989}
thanks   to   (iii)~combined    with   an   efficient   decoder   of
$C_{24}$. Finally, (iv)~along with the hexacode is used  to build the
fastest ever known MLD decoder by Vardy and Be'ery~\cite{Vardy1993}.

To further reduce the complexity, (suboptimal) BDD
were   also   investigated   based    on   the   same   constructions:
e.g.~\cite{Forney1989}    with    (iii)    and~\cite{Amrani1994}\cite{Vardy1995}\cite{Forney1996} with~(iv).
In these papers, it is shown that these BDD
do not change the  error exponent (i.e. the  effective minimum distance
is not diminished) but increase the ``equivalent error coefficient". The
extra loss is roughly 0.1 dB on the Gaussian channel compared to the optimal performance. \\
As we shall see in the sequel, our decoding  paradigm applied to the Leech lattice
is  more   complex  than  the   state-of-the-art  decoders   of  Vardy
\cite{Vardy1995}\cite{Forney1996} which requires only $\approx 300$
real operations.  But again,
this  latter decoder  is  specific  to the  Leech  lattice whereas  our
decoder is more  universal as it can be used,  among others, to decode
the Nebe lattice and the Barnes-Wall lattices. 

\subsubsection{The decoder of the Nebe lattice in \cite{Meyer2011}}
\label{sec_meyer_deco}
While the decoding of $\Lambda_{24}$ has been extensively studied, the literature on decoders for $\mathscr{N}_{72}$ is not as rich:
Only \cite{Meyer2011} studied this aspect, but the proposed decoder is highly suboptimal. 

First, notice that we can multiply (on the left) the matrix $Pb$ given in~\eqref{eq_complex_pb}
by a unimodular matrix to get the following matrix $Pb'$:
\begin{align}
Pb'=
\left[
\begin{matrix}
1 & 1 & \lambda \\
0 & \psi & \psi \\
0 & 0  & 2
\end{matrix}
\right]
=
\left[
\begin{matrix}
1 & 1 & 0 \\
0 & 0 & 1 \\
\psi & -\lambda & 0
\end{matrix}
\right]
\cdot
Pb.
\end{align}
Similarly to \eqref{eq_complex_pb},  $Pb'\otimes G_{S}^{\mathbb{C}}$, $S \cong \Lambda_{24}$, is a basis for the Nebe lattice which induces the following structure:
\begin{align}
\begin{split}
\mathscr{N}_{72} = & \{ (a,b,c) \in \mathbb{C}^{36}:  a \in S, b-a \in T, c - (b-a) - \lambda a \in 2S\},
\end{split}
\label{equ_succ}
\end{align}
where \eqref{equ_succ} is derived from the columns of $Pb'$.
A successive-cancellation-like  algorithm can thus be  considered: given
$y=(y_1,y_2,y_3)$ in $\mathbb{C}^{36}$, $y_1$  is first decoded in $S$
as  $t_1$,   $y_2-t_1$  is   then  decoded  in   $T$  as   $t_2$,  and
$y_3-t_2-\lambda t_1$ is   decoded   in   $2S$   as   $t_3$.    In
\cite{Meyer2011},  this successive-cancellation  algorithm is  proposed,
with  several  candidates for  $t_1$  which  are obtained  via  sphere
decoding with a given radius  $r$. Among all resulting approximations,
the closest to $y$ is kept.  It is proved in \cite{Meyer2011}, that
the lattice point $\hat{x}$ outputted  by the algorithm using a decoding
radius  $r=R(S)$, the covering radius of $S$, has  an  approximation  factor $||y-\hat{x}||  \leq
\sqrt{7} ||y-x_{opt}||$. Additionally, this  algorithm is guaranteed
to output the closest point $x_{opt} \in \turing$ to $y$
if $d(y,x_{opt}) \leq R(S)$, where $R(S)$ is unfortunately
smaller by a factor $\sqrt{2}$ than the packing radius $\rho(\mathscr{N}_{72})$.

\subsection{New decoders for $\turing$}
\label{BDD_k3}


%
%
%
%
%

We first adapt Algorithm~\ref{main_alg_turing} to $\turing$
by choosing the decoders for $T$ and $V=2S$ as BDDs.
We name it Algorithm~{\ref{main_alg_turing}'}.




\begin{theorem}
\label{th_BDD1_algo3}
Let  $\turing$ and $y$ be respectively a lattice and a point in $\mathbb{R}^{3n}$.
If $d(y,\turing)< \rho^2(\turing)$, then Algorithm~{\ref{main_alg_turing}'} outputs the closest lattice point $x \in \turing$ to $y$ in time
\small
\begin{align}
\label{complex_alg_2}
\mathfrak{C}_{A.\ref{main_alg_turing}'}=6 |\alpha|  \mathfrak{C}_{BDD}^{S}.
\end{align}
\normalsize
\end{theorem}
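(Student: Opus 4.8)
The plan is to establish the two assertions—correctness and running time—separately, treating Algorithm~{\ref{main_alg_turing}'} as the instance of Algorithm~\ref{main_alg_turing} in which both the $T$-decoder and the $V=2S$-decoder are bounded-distance decoders, and exploiting the disjoint coset decomposition \eqref{equ_ning_coset_m}.

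For the running time I would simply specialize the complexity formula \eqref{equ_complex_alg2} to $k=3$, which gives $|\alpha|(3\,\mathfrak{C}^T_{dec}+3\,\mathfrak{C}^V_{dec})$. The two inner complexities then collapse to $\mathfrak{C}^S_{BDD}$: since $T\cong S$ is a scaled rotation of $S$, a BDD in $T$ costs the same as a BDD in $S$, so $\mathfrak{C}^T_{dec}=\mathfrak{C}^S_{BDD}$; and since $V=2S$, decoding in $V$ reduces to decoding in $S$ after a free rescaling by $2$, so $\mathfrak{C}^V_{dec}=\mathfrak{C}^S_{BDD}$. Substituting yields $\mathfrak{C}_{A.\ref{main_alg_turing}'}=|\alpha|(3+3)\,\mathfrak{C}^S_{BDD}=6|\alpha|\,\mathfrak{C}^S_{BDD}$, which is \eqref{complex_alg_2}.

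For correctness, let $x$ be the closest lattice point, with $d(y,x)<\rho^2(\turing)$. By the disjoint decomposition \eqref{equ_ning_coset_m} there is a unique $m^\ast\in\alpha$ with $x':=x-(m^\ast)^3\in\Gamma(V,\beta,3)_{\mathcal{P}}$, where $V=2S$. When the loop of Algorithm~{\ref{main_alg_turing}'} reaches $m=m^\ast$ it feeds $y'=y-(m^\ast)^3$ to Algorithm~\ref{main_alg_pari}, and $d(y',x')=d(y,x)$. Because $\Gamma(V,\beta,3)_{\mathcal{P}}\subseteq\turing$ we have $d(\Gamma(V,\beta,3)_{\mathcal{P}})\ge d(\turing)$, hence $\rho^2(\turing)\le\rho^2(\Gamma(V,\beta,3)_{\mathcal{P}})$, so $y'$ lies strictly inside the packing radius of the parity sublattice. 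It then remains to check that Algorithm~\ref{main_alg_pari} with BDD subroutines is itself a correct BDD for $\Gamma(V,\beta,3)_{\mathcal{P}}$; granting this, it recovers $x'$, the final step of Algorithm~{\ref{main_alg_turing}'} shifts it back to $x$, and since $x$ is strictly closest to $y$ it survives the final comparison over all $m\in\alpha$.

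The main work—and the step I expect to be the crux—is verifying this nested BDD. Using the identity $d(\Gamma(V,\beta,3)_{\mathcal{P}})=\min\{d(V),2d(T)\}$ recorded just after \eqref{equ_pari_constA}, the hypothesis $d(y',x')<\min\{d(V),2d(T)\}/4$ splits the error budget: from $d(y',x')<d(T)/2$ at most one coordinate can satisfy $\|y'_j-t^\ast_j\|^2\ge\rho^2(T)$, so Step~1 of Algorithm~\ref{main_alg_pari} returns $t_j=t^\ast_j$ for at least two of the three indices. Taking $i$ to be the one possibly erroneous index, the point decoded in $V$ at Step~3 is $y'_i+\sum_{j\neq i}t^\ast_j$, whose target $\sum_j t^\ast_j\in V$ sits at squared distance $\|y'_i-t^\ast_i\|^2\le d(y',x')<d(V)/4=\rho^2(V)$, so the $V$-BDD returns it and the candidate becomes exactly $x'$. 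This is the $\delta=1/4$ specialization of the argument in the proof of Lemma~\ref{theo_worst}, and the only delicate point is that potential ties occur solely on the boundary sphere, which the strict hypothesis $d(y,x)<\rho^2(\turing)$ excludes.
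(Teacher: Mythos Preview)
Your proof is correct and follows essentially the same approach as the paper: for correctness you reduce to the coset containing the closest point and invoke the $\delta=1/4$ case of Lemma~\ref{theo_worst} (which you spell out in detail rather than merely citing), and for the running time you specialize \eqref{equ_complex_alg2} to $k=3$ with $\mathfrak{C}^T_{BDD}=\mathfrak{C}^{2S}_{BDD}=\mathfrak{C}^S_{BDD}$. The paper's own proof is more terse but structurally identical.
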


\begin{proof}
We first show that $x$ is the closest lattice point to $y$. 
Assume that, at Steps 1-2, $m$ corresponds to the coset of the closest lattice point to $y$.
Then, the result follows from Theorem~\ref{theo_worst} since Algorithm~{\ref{main_alg_turing}'} is a special case of Algorithm~\ref{main_alg_pari_no_sp} used $\alpha$ times. \\
Regarding the complexity, we use Equation~\eqref{equ_complex_alg2} with $k=3$ and where $\mathfrak{C}^S_{BDD}=\mathfrak{C}^{2S}_{BDD}$.
\end{proof}
It is insightful to compare Algorithm~{\ref{main_alg_turing}'} to trellis decoding. 
The complexity is reduced from $\approx |\alpha||\beta|^2\mathfrak{C}^S_{CVP}$ to $\approx |\alpha|\mathfrak{C}^S_{BDD}$ (but where trellis decoding is optimal unlike Algorithm~{\ref{main_alg_turing}'}). 

We name Algorithm~{\ref{main_alg_pari_split}''} the list-decoding version of Algorithm~{\ref{main_alg_turing}'}: It consists in repeating $|\alpha|$ times (once for each coset of $\Gamma(2S,\beta,3)_{\mathcal{P}}$) Algorithm~\ref{main_alg_pari_split}, with $k$=3, using the first splitting strategy and the second splitting strategy (i.e. the function $SubR_2$). We use Lemma~\ref{theo_set_split} to get the following theorem.


\begin{theorem}
\label{theo_complex_turing}
Let $\turing$ and $y$ be respectively a lattice and a point in $\mathbb{R}^{3n}$.
Algorithm~{\ref{main_alg_pari_split}''} outputs the set $\turing \cap B_{\delta}(y)$ in worst-case time
\small
\begin{align}
\label{equ_dec_complex_list}
\begin{split}
\mathfrak{C}_{A.\ref{main_alg_pari_split}''}(\delta) = &  |\alpha|  \big[3\mathfrak{C}_{T\cap B_{\delta}(y)}  +  6 l(T,\delta) l(T,\frac{\delta}{2}) \mathfrak{C}_{V\cap B_{\frac{2}{3}\delta}(y)}  + 6 l(T,\frac{2}{3} \delta) l(T,\frac{\delta}{3}) \mathfrak{C}_{V\cap B_{\delta}(y)} \big] .
\end{split}
\end{align}
\normalsize
\end{theorem}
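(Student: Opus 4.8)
The plan is to separate the statement into a correctness claim (the procedure returns exactly $\turing\cap B_\delta(y)$) and a worst-case running-time bound, and to obtain both by lifting the single-coset Lemma~\ref{theo_set_split} through the coset decomposition~\eqref{equ_ning_coset_m}. Nearly all of the work is charged to that lemma; the only genuinely new ingredient is the bookkeeping over the $|\alpha|$ cosets.

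For correctness, I would start from $\turing=\bigcup_{m\in\alpha}\{\Gamma(2S,\beta,3)_{\mathcal{P}}+m^3\}$, which is~\eqref{equ_ning_coset_m} specialised to $k=3$ and $V=2S$. Since translation by the fixed vector $m^3$ is an isometry of $\R^{3n}$, intersecting with the ball and distributing the union gives
\begin{align*}
\turing\cap B_\delta(y)=\bigcup_{m\in\alpha}\Big\{\big(\Gamma(2S,\beta,3)_{\mathcal{P}}\cap B_\delta(y-m^3)\big)+m^3\Big\}.
\end{align*}
Algorithm~{\ref{main_alg_pari_split}''} is exactly the procedure that, for each $m\in\alpha$, forms the shifted point $y-m^3$, runs Algorithm~\ref{main_alg_pari_split} with $k=3$ and the subroutine $SubR_2$ on it, and shifts the returned candidates back by $m^3$. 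Lemma~\ref{theo_set_split} certifies that each inner call outputs the \emph{complete} set $\Gamma(2S,\beta,3)_{\mathcal{P}}\cap B_\delta(y-m^3)$, so the union over the $|\alpha|$ cosets reproduces the displayed right-hand side, proving the output is correct.

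For the complexity, I would invoke the worst-case bound~\eqref{equ_complex_big_algo_2} of Lemma~\ref{theo_set_split} for each of the $|\alpha|$ inner calls, with $V=2S$ and $k=3$. Substituting $k=3$ makes $k^2-k=6$ and $k-2=1$, collapsing~\eqref{equ_complex_big_algo_2} to $3\mathfrak{C}_{T\cap B_\delta(y)}+6\,l(T,\delta)\,l(T,\tfrac{\delta}{2})\,\mathfrak{C}_{V\cap B_{\frac{2}{3}\delta}(y)}+6\,l(T,\tfrac{2}{3}\delta)\,l(T,\tfrac{\delta}{3})\,\mathfrak{C}_{V\cap B_\delta(y)}$. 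The only per-coset overhead not already accounted for is forming $y-m^3$, an $O(n)$ operation absorbed by the convention~\eqref{equ_simpli}; summing the identical worst-case bound over the $|\alpha|$ cosets contributes the leading factor $|\alpha|$ and yields exactly~\eqref{equ_dec_complex_list}.

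The step I expect to demand the most care is confirming that Lemma~\ref{theo_set_split} genuinely applies with $V=2S$. That lemma is derived under the simplifying assumption $V\cong\sqrt{2}\,T$, whereas here $V=2S\cong 2T$, so $d(V)=4d(T)$ while the parity lattice still satisfies $d(\Gamma(2S,\beta,3)_{\mathcal{P}})=\min\{d(2S),2d(T)\}=2d(T)$ (the identity noted just below~\eqref{equ_pari_constA}, using $V=2S\subset T\subset S$). I would therefore re-examine the splitting case analysis preceding~\eqref{eq_complex} to verify that completeness depends only on $d(\Gamma(2S,\beta,3)_{\mathcal{P}})=2d(T)$ --- which fixes the $T$-decoding radius at $r/2$ and guarantees that at most one coordinate can satisfy $d(y_j,x_j)>r/2$ --- and not on the stronger equality $d(V)=2d(T)$, and to check that the absolute $V$-decoding radii are read consistently in the complexity bound. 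Once this calibration is confirmed, the substitution $k=3$ and the $|\alpha|$-fold repetition are immediate.
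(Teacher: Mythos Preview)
Your proposal is correct and follows essentially the same route as the paper: the paper's entire justification is the single sentence ``We use Lemma~\ref{theo_set_split} to get the following theorem,'' relying on the coset decomposition~\eqref{equ_ning_coset_m} and specialising~\eqref{equ_complex_big_algo_2} to $k=3$, exactly as you do. Your last paragraph about the calibration $V=2S\cong 2T$ versus the blanket assumption $V\cong\sqrt{2}T$ is in fact more scrupulous than the paper, which does not pause on this point; the correctness argument indeed only needs $d(\Gamma(2S,\beta,3)_{\mathcal P})=2d(T)$, and the complexity expression~\eqref{equ_dec_complex_list} is written with the absolute list-computation costs $\mathfrak{C}_{V\cap B_{\cdot}(y)}$, so the formula carries over verbatim once you interpret those radii as absolute (i.e.\ relative to $d(\turing)=2d(T)$) rather than relative to $d(V)$.
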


\begin{corollary}
Let $\Lambda_{24}=\turing$ (constructed as in Lemma~\ref{th_turyn}). Algorithm~\ref{main_alg_pari_split}''  with a decoding radius $r=d(T)=d(E_8)$, i.e. $\delta = d(E_8)/d(\Lambda_{24})=1/2$, solves the CVP for $\Lambda_{24}$ with worst-case complexity
\small
\begin{align}
\begin{split}
\mathfrak{C}_{A.\ref{main_alg_pari_split}'}(\delta=\frac{1}{2}) = & |\alpha| \big[ 3\mathfrak{C}_{T\cap B_{\delta}(y)} + 6[ 2 n 2 \mathfrak{C}_{E_8\cap B_{\frac{1}{3}}(y)}  + 3 \mathfrak{C}_{E_8 \cap B_{\frac{1}{2}}(y)} ] \big], \\
 \lessapprox &  2^4 \cdot 6 \cdot 2 \cdot 8 \cdot 2 \cdot \mathfrak{C}_{E_8\cap B_{\frac{1}{2}}(y)} \approx 2^{11} \mathfrak{C}_{E_8\cap B_{\frac{1}{2}}(y)} . 
\end{split}
\end{align}
\normalsize
\end{corollary}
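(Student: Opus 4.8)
The plan is to separate the claim into a correctness part—that $\delta=1/2$ is large enough to guarantee an exact CVP solution—and a complexity part that is a direct specialization of Theorem~\ref{theo_complex_turing}. For correctness, I would first invoke Theorem~\ref{theo_complex_turing}, which guarantees that Algorithm~\ref{main_alg_pari_split}'' is a \emph{regular} list decoder: it returns the full set $\Lambda_{24}\cap B_\delta(y)$. As recalled in Section~\ref{sec_prelem}, a regular list decoder solves the CVP as soon as its decoding radius reaches the squared covering radius, after which one retains the closest returned point. Thus it suffices to verify that the chosen radius $r=d(T)=d(E_8)=2$ satisfies $r\ge R(\Lambda_{24})^2$. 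The crucial fact is that the Leech lattice has covering radius $R(\Lambda_{24})=\sqrt2$ \cite{Conway1999}, so $R(\Lambda_{24})^2=2=d(E_8)=r$, i.e. $\delta=r/d(\Lambda_{24})=2/4=1/2$. Consequently, for every $y$ the nearest lattice point lies at squared distance at most $r$ and therefore belongs to $\Lambda_{24}\cap B_r(y)$; selecting the closest element of this set yields the exact solution.

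For the complexity I would substitute $k=3$, $\delta=1/2$, $T\cong E_8$ and $V=2S\cong 2E_8$ into the formula of Theorem~\ref{theo_complex_turing}, which needs three ingredients. First, the index $|\alpha|=|S/T|$: since $T$ is obtained from $S\cong E_8$ by the polarisation scaling $\psi$ with $|\psi|^2=2$ acting on the four complex coordinates, one gets $|\alpha|=\vol(T)/\vol(S)=2^4$. Second, the $E_8$ list sizes, bounded via the Johnson-type bound of Theorem~\ref{lemma_johnson_ori}: at relative radius $1/2$ one has $l(E_8,1/2)\le 2n=16$; at the packing radius $1/4$, $l(E_8,1/4)\le 2$; at $1/3$ (taking $\epsilon=1/6$), $l(E_8,1/3)\le 3$; and at $1/6$, which lies strictly inside the packing radius, $l(E_8,1/6)=1$. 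Third, the reduction of each $V$-decoding to an $E_8$-decoding: because $V=2S$, a point $v=2s$ obeys $\|y-v\|^2=4\|y/2-s\|^2$, so computing $V\cap B_\rho(y)$ costs the same as computing $E_8\cap B_{\rho/4}(y/2)$; the two $V$-decodings performed by the subroutine $SubR_2$, at actual radii $r=2$ and $\tfrac23 r=\tfrac43$, therefore reduce to $E_8$-decodings at actual squared radii $1/2$ and $1/3$.

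Assembling these into Theorem~\ref{theo_complex_turing} produces the displayed expression, and the final estimate follows by isolating the dominant contribution. The term carrying the largest multiplicity is $6\,l(E_8,1/2)\,l(E_8,1/4)=6\cdot16\cdot2$ copies of $\mathfrak{C}_{E_8\cap B_{1/3}(y)}$; the three $T$-decodings and the $6\,l(E_8,1/3)\,l(E_8,1/6)=18$ copies in the last term are of far lower multiplicity, and $\mathfrak{C}_{E_8\cap B_{1/3}(y)}\le \mathfrak{C}_{E_8\cap B_{1/2}(y)}$ since the ball is smaller. Multiplying by $|\alpha|=2^4$ yields $\lessapprox 2^4\cdot 6\cdot 16\cdot 2\approx 2^{11}$ copies of $\mathfrak{C}_{E_8\cap B_{1/2}(y)}$, as claimed.

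The step I expect to be the main obstacle is the radius bookkeeping in the complexity part. Theorem~\ref{theo_complex_turing} was derived under the normalization $V\cong\sqrt2\,T$, whereas here $V=2S$ gives $d(V)=4\,d(T)\neq 2\,d(T)$; one must track whether each radius is expressed relatively (to $\Lambda_{24}$ or to $V$) or as an actual squared distance, and ensure the scaling $V=2S\to E_8$ is applied to the actual radii used by $SubR_2$. Getting this factor-of-two bookkeeping right is precisely what turns the $V$-decodings at Leech-radius $\delta=1/2$ and $\tfrac23\delta=1/3$ into the $E_8$-decodings $\mathfrak{C}_{E_8\cap B_{1/2}(y)}$ and $\mathfrak{C}_{E_8\cap B_{1/3}(y)}$ appearing in the statement. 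The correctness part, by contrast, hinges on a single external input, the exact value $R(\Lambda_{24})=\sqrt2$ of the Leech covering radius, which makes $\delta=1/2$ provably sufficient rather than merely adequate in practice.
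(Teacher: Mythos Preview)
Your approach matches the paper's: the only content of its proof is the observation that $d(T)=R^2(\Lambda_{24})$ (the Leech covering radius is $\sqrt{2}$), which is exactly your correctness argument, and the complexity formula is obtained by direct substitution into Theorem~\ref{theo_complex_turing}.

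One point needs correcting. Your anticipated ``main obstacle'' is not an obstacle at all: you write that $V=2S$ gives $d(V)=4\,d(T)$, but this is wrong. Here $T=\psi S$ with $|\psi|^2=2$, so $d(T)=2\,d(S)$, while $d(V)=d(2S)=4\,d(S)$; hence $d(V)=2\,d(T)$ and the hypothesis $V\cong\sqrt{2}\,T$ underlying Theorem~\ref{theo_complex_turing} \emph{does} hold. Consequently the relative decoding radii in $T$, $V$, and $\turing$ all coincide, and your detour through actual squared radii (and the factor-of-four rescaling from $V$ to $E_8$) is unnecessary: since $T\cong V\cong E_8$ as lattices, $\mathfrak{C}_{V\cap B_\eta(y)}=\mathfrak{C}_{T\cap B_\eta(y)}=\mathfrak{C}_{E_8\cap B_\eta(y)}$ at every relative radius $\eta$, and plugging $\delta=1/2$, $l(T,1/2)\le 2n$, $l(T,1/4)\le 2$, $l(T,1/3)\le 3$, $l(T,1/6)=1$ directly into Theorem~\ref{theo_complex_turing} gives the displayed expression. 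Your final numbers are right, but the bookkeeping is simpler than you feared.
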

\begin{proof}
If $S \cong E_8$, $d(T) = R^2(\turing)$ (the covering radius).
\end{proof}

To the best of our knowledge, the covering radius of $\mathscr{N}_{72}$ appears nowhere in the literature.
However, Gabriele Nebe showed in a private communication that it is greater than $\sqrt{2} \rho(\mathscr{N}_{72})$. The proof is available in Appendix~\ref{app_cov}. 
As a result, Algorithm~\ref{main_alg_pari_split}'' with $\delta = 1/2$ is not optimal for $\mathscr{N}_{72}$. The algorithm should be used with greater $\delta$ to ensure optimality.

\subsection{Decoding $\Lambda_{24}$ and $\mathcal{N}_{72}$ on the Gaussian channel}
\label{sec_nebe_gaussian_channel}
The analysis is similar to the one performed for $BW$ lattices in Section~\ref{sec_gauss_BW}.
We will therefore be brief on the explanations.

The sphere lower bound for $P_e^{\Lambda}(opt,n,\sigma^2)=10^{-4}$ in dimension 72 yields a distance to Poltyrev limit of 2.1~dB. The MLD performance of $\Lambda_{24}$ for this error probability is 3.3 dB.
Regarding the relative radius to ensure $P(x \notin \mathcal{T}^{\delta}) = P(||w||^2> r)\lessapprox 10^{-4}$ with regular list decoding, we find with Equation~\eqref{equ_chi_squ} $\delta^* \approx 0.57$ for $\mathcal{N}_{72}$ and $\delta^* \approx 0.41$  for $\Lambda_{24}$. 

An important observation (also made at the end of Example~1) when computing the performance of the modified list decoders on the Gaussian channel is the following. Let $T \in \R^n$. For $\Lambda_{24}$ and $\mathscr{N}_{72}$ constructed as $\turing$, we have (see e.g. the proof of Theorem~\ref{th_turyn}) $\vol(\turing)^{\frac{2}{3n}} = \vol(T)^\frac{2}{n}$,
whereas for the parity lattices, we have $\vol(L_{kn})^{\frac{2}{3n}} = 2^{\frac{1}{k}} \vol(T)^\frac{2}{n}$.
This means that the equivalent VNR $\Delta$ is the same when decoding in $\turing$ and in $T$. This will be taken into account in the next formulas to estimate~$\delta^*$.

\noindent \textbf{Decoding $\Lambda_{24}$.}\\
Considering the list-decoding version of Algorithm~{\ref{main_alg_turing}'} (without the splitting strategy), \eqref{eq_with_delta} becomes (see the proof of Theorem~\ref{theo_error_proba})
 \small 
\begin{align}
\label{eq_perf_leech}
\begin{split}
U_{24}(\delta,\Delta) = & \min \{ 3 U_{8}(\delta,\Delta)^2 + 3  U_{8}(\delta, 2 \Delta)(1- U_{8}(\delta , \Delta))^{2},1\}.
 \end{split}
\end{align}
\normalsize
Assume that $\delta^* \leq \frac{1}{4}$ with this algorithm. 
If this holds, $T,V\cong E_8$ can be decoded with the recursive BDD discussed in Section~\ref{sec_pari_k2} (since $E_8 \cong BW_8$). Hence,
$U_{8}(\frac{1}{4},\Delta)=P_e(BDD,\Delta)$ is given by the curve $n=8$ in Figure~\ref{fig_BWperf}.
With \eqref{eq_perf_leech}, for $\Delta=3.3$ dB we find $U_{24}(\delta=\delta^*,\Delta) \leq 10^{-4}$, which confirms that
$\delta^*<1/4$. 
As a result, we can use Algorithm~{\ref{main_alg_turing}'} for quasi-MLD decoding of $\Lambda_{24}$. 
The complexity  of Algorithm~{\ref{main_alg_turing}'} is 
\small
\begin{align}
\label{equ_bloublou}
\begin{split}
\mathfrak{C}_{QMLD}^{\Lambda_{24}}= \mathfrak{C}_{A.{\ref{main_alg_turing}'}}(\Lambda_{24},\delta =\delta^* )  = & 2^4  (3 \mathfrak{C}(E_8) + 3 \mathfrak{C}(RE_8)) = 96 \mathfrak{C}(E_8) .
\end{split}
\end{align}
\normalsize

\noindent \textbf{Decoding $\mathscr{N}_{72}$.}\\
Regarding $\mathscr{N}_{72}$, with the first decoder we have (similar to \eqref{eq_perf_leech})
\small 
\begin{align}
\label{eq_perf_nebe}
\begin{split}
U_{72}(\delta,\Delta) = & \min \{ 3 U_{24}(\delta,\Delta)^2 + 3  U_{24}(\delta, 2 \Delta)(1- U_{24}(\delta , \Delta))^{2},1\}.
\end{split}
\end{align}
\normalsize
Consider a MLD decoder for $\Lambda_{24}$ such that $U_{24}(\delta,\Delta)=P_e^{\Lambda_{24}}(opt,\Delta)$. Then, when $\Delta>1$, $U_{72}(\delta,\Delta) \approx 3 (P_e^{\Lambda_{24}}(opt,\Delta))^2$. The performance of this decoder for $\mathcal{N}_{72}$ is shown by the curve $U_{72}(\Delta)$ on Figure~\ref{fig_example_pari} in Example~\ref{ex_pari_Leech}. Unlike for the parity lattices, the curve for $P_e^{\Lambda_{24}}(opt,\Delta)$ should not be shifted to the right before squaring, as explained in Example~\ref{ex_pari_Leech}. We easily see that this decoder is powerful enough to get quasi-MLD performance for $\mathscr{N}_{72}$. 
 The complexity is then
 \small 
\begin{align}
\label{eq_perf_leech_frfr}
\begin{split}
\mathfrak{C}_{QMLD}^{\mathscr{N}_{72}}  = &2^{12} \cdot [3\mathfrak{C}_{MLD}^{\Lambda_{24}}+ 3\mathfrak{C}_{MLD}^{\Lambda_{24}} ] = 2^{12} \cdot 6 \cdot \mathfrak{C}_{MLD}^{\Lambda_{24}}.
 \end{split}
\end{align}
\normalsize

The curve of quasi-optimal performance of $\mathscr{N}_{72}$ on the Gaussian channel is depicted in Figure~\ref{fig_perfNebe_bis}.  The figure also shows the performance of $L_{3 \cdot 24}$ (discussed in Section~\ref{sec_parity72}). The performance of  $\mathscr{N}_{72}$ is at a distance of 2.6 dB only from Poltyrev limit at around $10^{-5}$ of error per point. 


\begin{figure}
    \centering
	\includegraphics[width=0.6\columnwidth]{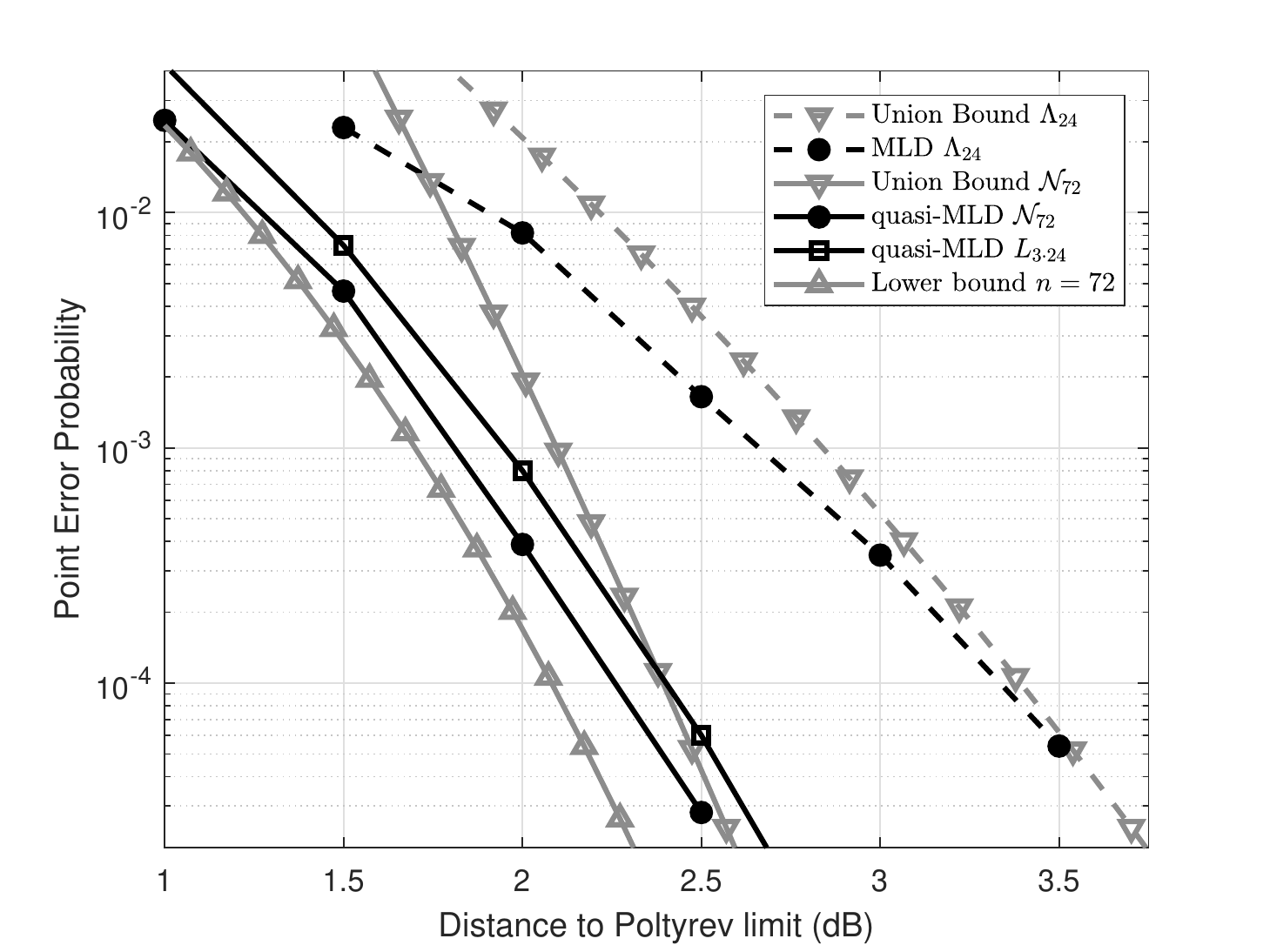}
	\caption{Performance of $\mathscr{N}_{72}$ and $L_{3 \cdot24}$ on the Gaussian channel. The union bound is computed from the two first lattice shells of $\mathscr{N}_{72}$.
The curves for $\Lambda_{24}$ are also provided for comparison.
}
\label{fig_perfNebe_bis}
\end{figure}
\section{Lattice decoding benchmark}
\label{sec_simu_gauss}

\begin{figure}[t]
\centering
\includegraphics[width=0.6\columnwidth]{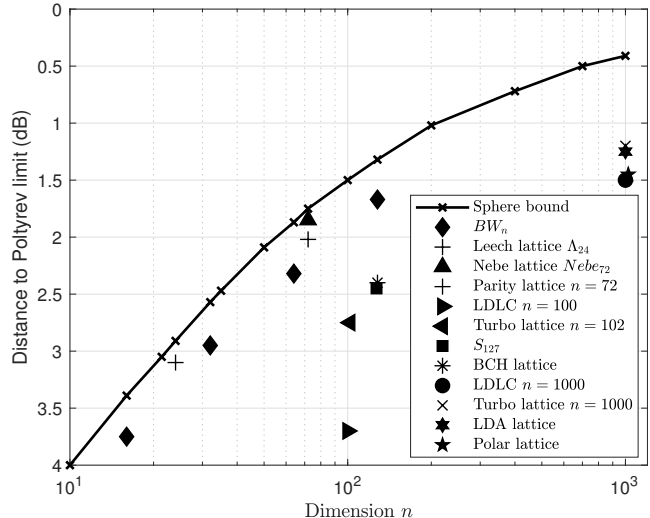}
\vspace{-2mm}
\caption{Performance of different lattices for normalized error probability $P_e=10^{-5}$.}
\label{fig_benchmark}
\end{figure}

\indent We compare the performance of lattices and decoders shown in the previous sections to existing schemes in the literature at $P_e=10^{-5}$.
For fair comparison at different dimensions, we let $P_e$ be either the symbol-error probability or the normalized error probability, which is equal to the point-error probability divided by the dimension (as done in e.g. \cite{Tarokh1999}).\\
First, several constructions have been proposed for block lengths around $n=100$ in the literature.
In \cite{Matsumine2018} a two-level construction based on BCH codes with $n=128$ achieves
this error probability at 2.4 dB. The decoding involves an OSD of order 4 with 1505883 candidates.
In \cite{Agrawal2000} the multilevel (non-lattice packing) $\mathcal{S}_{127}$ ($n=127$) has similar
performance but with much lower decoding complexity via generalized minimum distance decoding. 
In \cite{Sakzad2010} a turbo lattice with $n=102$ and in \cite{Sommer2008} a LDLC with $n=100$
achieve the error probability with iterative methods at respectively 2.75 dB, and 3.7 dB (unsurprisingly, these two schemes are efficient for larger block-lengths).
All these schemes are outperformed by $BW_{64}$, the 3-parity-Leech lattices, and $\mathscr{N}_{72}$, where $P_e=10^{-5}$ is reached at respectively 2.3 dB, 2.02 dB and 1.85 dB.
Moreover, $BW_{128}$ has $P_e=10^{-5}$ at $1.7$ dB, which is similar to many schemes with block length $n=1000$
such as the LDLC (1.5 dB) \cite{Sommer2008},  the turbo lattice (1.2 dB) \cite{Sakzad2010}, 
the polar lattice with $n=1024$ (1.45 dB) \cite{Yan2013}, and the LDA lattice (1.27 dB) \cite{diPietro2012}. 
This benchmark is summarized on Figure~\ref{fig_benchmark}.

\section{Conclusions}
\label{sec_conclu}

In this paper, we present a unified framework for building lattices. It relies on a simple parity check, which can be applied recursively and combined to repetition coding.
Famous lattices such as the Leech lattice in 24 dimensions, Nebe's extremal lattice in 72 dimensions, and Barnes-Wall lattices are obtained in this framework. 
A new decoding paradigm is established from this construction by taking into account the coset parity constraint. 
The paradigm leads to new bounded-distance decoders, list decoders, and quasi-optimal decoders on the Gaussian channel in terms of probability of error per lattice point. 
Quasi-optimal performance for $BW_{64}$, $\mathscr{N}_{72}$, and $BW_{128}$ are shown to be achievable at reasonable complexity. 
A new parity lattice $L_{3 \cdot 24}$ is also considered.
It offers an excellent performance-complexity trade-off.
The elegant single parity-check construction and its associated decoders are promising for the study of lattices in moderate and large dimensions.

\section{Appendix}
\label{sec_app_main}

\subsection{Proof of Theorem~\ref{th_turyn}}
\label{App_proof_Leech_turing}

The following proof is not new, but it enables to make a clear link between the $k$-ing construction and $\Lambda_{24}$ using our notations.
\begin{proof}
We let $E_8$ be scaled such that $d(E_8)=2$
and $\vol(E_8)=1$. This version of the Gosset lattice is even. Then, $S=\frac{1}{\sqrt{2}}E_8$ has $d(S)=1$, $\vol(S)=2^{-4}$
and $\vol(T)=\vol(T_{2\theta})=1$, $d(T)=d(T_{2\theta})=2$.
Also, $|\alpha|=|\beta|=2^4$ from~\eqref{equ_vol}.\\
Let $x=(a,b,c) \in \turing$.
Firstly, using Theorem~\ref{theo_min_dist}, we have $d(\turing)\geq3$. 
Then, assume that $a=m+t_1$ and $b=m + t_2$ (with the notations of~\eqref{equ_ning_2}) have both odd squared norms. 
This is equivalent to having the scalar products $\langle m, t_1 \rangle = \frac{\nu}{2}$ and
$\langle m, t_2 \rangle = \frac{\nu'}{2}$, where $\nu$ and $\nu'$ are integers.
Therefore, $\langle m, t_1+t_2 \rangle$ is integer and $c=m+t_1+t_2$ has an even squared norm. 
We just proved that $\turing$ is even.
This implies that $d(\turing)=4$.\\
The last step aims at proving that $\turing$ has a unit volume.
$\turing$ is obtained as the union of $|\alpha| |\beta|^2=2^{12}$ cosets of $(2S)^3$.
Hence, $\vol(\turing)=\vol((2S)^3) / 2^{12}=1$.  \\
Finally, $\Lambda_{24}$ is the unique lattice in dimension 24 with fundamental coding gain equal to 4. 
\end{proof}

\subsection{Proof of Theorem~\ref{theo_error_proba}}
\label{app_proof_error_proba}

If Step~14 is removed at the last recursive iteration of Algorithm~\ref{main_alg_pari_split_recu_noSp} the sent point $x=(x_1,x_2, ..., x_k)$ is not in the outputted list if 
\begin{itemize}
\item  $x_i \notin \mathcal{T}_i$ for at least two lists $\mathcal{T}_i$ (at Step~4 of Algorithm~\ref{main_alg_pari_split_recu_noSp}),
\item or if $x_1,..., x_{j\neq i },..., x_k \in \mathcal{T}_1,..., \mathcal{T}_j,...,\mathcal{T}_k$, and $x_i - (-\sum_{j \neq i} x_j)\notin \mathcal{V}_i$ (for at least one $i$).
\end{itemize}
Let the noise $w=(w_1,...,w_i,...,w_k)$. Due to the i.i.d property of the noise, we have $P(||w_1||^2> \frac{r}{2})=P(||w_i||^2> \frac{r}{2})$ for all $1\leq i \leq k$.
As a result, 
$P(x \notin \mathcal{T}) $ becomes 
\small
\begin{align}
\label{eq_perf_pred}
\begin{split}
P(x \notin \mathcal{T})  \leq &\binom{k}{2} P(||w_i||^2> \frac{r}{2})^2 + 	k P(||w_i||^2> r) P(||w_i||^2 < \frac{r}{2})^{k-1}, \\ 
=&\binom{k}{2} F(\frac{n}{2},\frac{r}{2},\sigma^2)^2 + 
k F(\frac{n}{2},r,\sigma^2) F(\frac{n}{2},\frac{r}{2},\sigma^2) ^{k-1}. 
\end{split}
\end{align}
\normalsize

More generally, we have
\small
\begin{align}
\label{equ_list_pb}
\begin{split}
 P(x \notin \mathcal{T})  \leq & \binom{k}{2}  P( x_j \notin \mathcal{T}_j)^2 + k P( x_i-(-\sum_{j \neq i} x_j) \notin \mathcal{V}_i) (1-P( x_j \notin \mathcal{T}_j))^{k-1}.
 \end{split}
 \end{align}
\normalsize
This idea can be recursively applied if we remove Step~14 at each recursion. 
Let $U_n(r,\sigma^2)$ denote an upper-bound of $P(x \notin \mathcal{T})$.
We have a recursion of the form

\small
\begin{align}
\label{equ_perf_noRem}
U_n(r,\sigma^2) = \binom{k}{2}  U_{\frac{n}{k}}(\frac{r}{2},\sigma^2)^2 + k \cdot U_{\frac{n}{k}}(r,\sigma^2)(1- U_{\frac{n}{k}}(\frac{r}{2},\sigma^2))^{k-1},
\end{align}
\normalsize
where we set $U_n(r,\sigma^2) =1$ if the right-hand term is greater than 1.

%

Note that $\vol(L_\frac{n}{k})^{\frac{2}{n/k}} = \vol(L_{n})^{\frac{2}{n}} / 2^{\frac{1}{k}}$, indeed:
\small
\begin{align}
\begin{split}
\label{equ_volume_pari_lat}
\vol( L_{kn})^{\frac{2}{kn}}&=\left( \frac{\vol(\theta  L_{n})^k}{|\beta|^{k-1}} \right)^{\frac{2}{kn}} 
= \left( \frac{(\vol( L_{n}) \cdot 2^{\frac{n}{2}})^k }{2^{(\frac{n}{2})^{(k-1)}}} \right)^{\frac{2}{kn}},\\
&=\vol( L_{n})^{\frac{2}{n}} \cdot 2^{\frac{1}{k}}. 
\end{split}
\end{align}
\normalsize
Moreover, we also have $d(\Gamma(V,\beta,k)_{\mathcal{P}})=d(V)=2d(T)$. 
Hence, if we express the recursion as a function of the VNR $\Delta = \frac{\vol(L_\frac{n}{k})^{\frac{2}{n/k}}}{2 \pi e \sigma^2}$ and the relative radius $\delta$, we get:
\small 
\begin{align}
\begin{split}
U_n(\delta,\Delta) = &\binom{k}{2} U_{\frac{n}{k}}(\delta,\frac{\Delta}{2^{\frac{1}{k}}})^2 +  k  U_{\frac{n}{k}}(\delta, 2^{\frac{k-1}{k}}\Delta)(1- U_{\frac{n}{k}}(\delta , \frac{\Delta}{2^{\frac{1}{k}}}))^{k-1}.
 \end{split}
\end{align}
\normalsize

With the first splitting strategy (but not the second splitting strategy) the error probability is bounded from above as

\small
\begin{align}
\begin{split}
 P(n,\sigma^2,x \notin \mathcal{T})  \leq  \binom{k}{2}  P( x_i \notin \mathcal{T}^{\delta})^2 
 &+ k  \Big[ P( x_{j } \notin \mathcal{T}^{\frac{2}{3}\delta}, \ x_{j} \in \mathcal{T}^{\delta} )^{k-1} P(x_i-(-\sum_{j \neq i} x_j) \notin \mathcal{V}^{2/3\delta}_i) \\
&+ P( x_i-(-\sum_{j \neq i} x_j) \notin \mathcal{V}^{\delta}_i)) P( x_{j } \in \mathcal{T}^{\frac{2}{3}\delta})^{k-1}\Big], 
 \end{split}
 \end{align}
\normalsize
where 
\small
\begin{align}
\begin{split}
P( x_{j } \notin \mathcal{T}^{\frac{2}{3}\delta}, \ x_{j} \in \mathcal{T}^{\delta} ) = & (1- \frac{ P(x_{j } \in \mathcal{T}^{\frac{2}{3}\delta} )}{ P(x_{j } \in \mathcal{T}^{\delta} )}) P(x_{j } \in \mathcal{T}^{\delta}) = P(x_{j } \in \mathcal{T}^{\delta}) - P(x_{j } \in \mathcal{T}^{\frac{2}{3}\delta} ).
 \end{split}
 \end{align}
\normalsize
\subsection{Proof of Lemma~\ref{lem_kiss_numb}}
\label{app_kissing}
\begin{proof}
The proof is similar to that of Theorem~3.3 in \cite{Nebe2012}.
The vectors of squared norm 8 in $\Gamma(V,T,3)_{\mathcal{P}}$ have only the following possible forms.
\begin{enumerate}
\item $(a,0,0), a \in V$ and $||a||^2 = 8$. The number of such vectors (counting the combinations) is $196560 \cdot 3$ vectors, i.e. the minimal vectors in $V^3$.
\item $(n_1,n_2,0)$, $n_1,n_2 \in T, n_1+n_2 \in V$ and $||n_1||^2=||n_2||^2=4$. The number of such vectors (counting the combinations) is $196560 \cdot 48 \cdot 3$. 
There are $196560$ possibilities for $n_1$. Given $n_1$ how many choices are they for $n_2$?
This is equivalent to asking the number of squared norm $8$ vectors in the coset $m+V$, which are therefore
congruent mod $V$. 
It is well-known (see Theorem~2 in \cite[Chap.12]{Conway1999}) that this number is 48, 24 mutually orthogonal pairs of vectors
(one can check that $|T/V|=2^{12}\cdot 48 = 196560$, the number of minimal vectors of $\Lambda_{24}$).
Hence, there are 48 choices for $n_2$. Finally, the factor 3 comes from the combinations. 
\end{enumerate}
\end{proof}

\subsection{Proof of Theorem~\ref{theo_main_complex_sp}} 
\label{app_proof_list}

To lighten the notations, we write $l(\delta)$ for $l(n/2,\delta)$ and $\mathfrak{C}(\delta)$ for $\mathfrak{C}(n/2,\delta)$. 

\noindent $\bullet$ If $\delta < \frac{3}{8}$, the decoder of \cite{Grigorescu2017}, whose complexity is given by \eqref{complex_algo_grigo}, yields $O(n^2)$.

%


\noindent $\bullet$ If $\frac{3}{8} \leq \delta <\frac{1}{2}$, the decoder of \cite{Grigorescu2017}, whose complexity is given by \eqref{complex_algo_grigo}, yields $O(n^2)$.

\noindent $\bullet$ If $\delta = \frac{3}{4} - \epsilon $, $0<\epsilon \leq 1/4$: 
Then, $\frac{2}{3} \delta < \frac{1}{2}$, $ \frac{\delta}{2} < \frac{3}{8}$, $\frac{\delta}{3} < \frac{1}{4} $. 
We have $l(\frac{2}{3}\delta) = l(\frac{\delta}{2}) = O(1)$, $l( \frac{\delta}{3})=1$. 
We get
\begin{align*}
\small
\begin{split}
\mathfrak{C}(n,\delta) = &[2l(\frac{2}{3} \delta)+2] \mathfrak{C}(\delta) + l(\delta)O(n^{2}) = l(\delta)O(n^{2}) \cdot   \sum_{i=0}^{\log_2 n} \left( \frac{2 l(\frac{2}{3} \delta)+2}{4} \right)^i, \\
=&l(\delta) O(n^{2 + \log_2 [\frac{l(\frac{2}{3} \delta)+1}{2}]} )= l(\delta) O(n^{1+\log_2 [\lfloor \frac{3}{4 \epsilon} \rfloor +1]} ).
\end{split}
\end{align*}

Consequently, if $\delta=\frac{1}{2}$, $l(\delta) \leq 2n$ and $\epsilon = \frac{1}{4}$. Then 
$\mathfrak{C}(n,\delta) = O(n^{4})$. If $\delta>\frac{1}{2}$, we have $l(\delta) = O(n^{\log_2 4 \lfloor \frac{3}{4\epsilon} \rfloor})$. Then,
$\mathfrak{C}(n,\delta) = O(n^{1+\log_2 4 \lfloor \frac{3}{4 \epsilon} \rfloor^2  })$,
where we assumed that $\frac{3}{4 \epsilon}>>1$.

\noindent $\bullet$ If $ \delta = \frac{3}{4}$:  See Appendix F in \cite{Corlay2020}  (long version on arXiv).

\noindent $\bullet$ If $\delta  = 1 - \epsilon$, $0<\epsilon<\frac{1}{4}$:  See Appendix F in \cite{Corlay2020}  (long version on arXiv).


\subsection{Proof of Theorem~\ref{theo_ave}}
\label{app_theo_ave}
The result on the complexity is obtained by adapting \eqref{equ_exple1}, \eqref{equ_exple2}, and the complexity formulas
in Theorem~\ref{theo_main_complex_sp}. 

For $3/8 <\delta \leq 9/16$, we use the fact $E_y[l'(\frac{n}{2},\frac{3}{8},y)] \geq E_y[l'(\frac{n}{4},\frac{3}{8},y)] \geq ...$.
\small
\begin{align}
\label{equ_base_recu_mod}
\begin{split}
E_y[l'(n,\delta,y)] \leq  2 \big[E_y[l'(\frac{3}{8},y)]  l(\delta ) + l(\delta)  E_y[l(\frac{3}{8},y)] ] \leq & 4  E_y[l'(\frac{3}{8},y)] l(\delta ) = O(n^{\log_2(4 E_y[l'(\frac{3}{8},y)])}), \\
 = &O(n^{2 + \log_2 E_y[l'(\frac{3}{8},y)]}). 
\end{split}
\end{align}
\normalsize

\subsection{A proof that $R(\mathscr{N}_{72})>\sqrt{2} \rho(\mathscr{N}_{72})$}
\label{app_cov}

\begin{lemma}
$R(\mathscr{N}_{72})>\sqrt{2} \rho(\mathscr{N}_{72})$.
\end{lemma}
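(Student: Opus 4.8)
The plan is to reduce the claim to a single numerical inequality via the elementary sphere-covering (density) bound, thereby avoiding any explicit description of the deep holes of $\mathscr{N}_{72}$.

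First I would pin down the two radii involved. By Theorem~\ref{turyn_nebe2}, $\mathscr{N}_{72}$ has fundamental coding gain $8$; being even unimodular it has $\vol(\mathscr{N}_{72})=1$, so $d(\mathscr{N}_{72}) = \gamma(\mathscr{N}_{72})\cdot \vol(\mathscr{N}_{72})^{2/72} = 8$ and hence $\rho(\mathscr{N}_{72}) = \sqrt{d(\mathscr{N}_{72})}/2 = \sqrt{2}$. Thus $\sqrt{2}\,\rho(\mathscr{N}_{72}) = 2$, and it suffices to prove the sharper-looking statement $R(\mathscr{N}_{72}) > 2$.

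Next I would invoke the covering bound. Writing $R = R(\mathscr{N}_{72})$, the definition of the covering radius says that every point of $\R^{72}$ lies within Euclidean distance $R$ of a lattice point, so the closed ball $\bar B_R(0)$ surjects onto the torus $\R^{72}/\mathscr{N}_{72}$ under the quotient map. Since that map is a local isometry, hence locally volume-preserving, comparing measures gives $V_{72}\,R^{72} \ge \vol(\mathscr{N}_{72}) = 1$, where $V_{72} = \pi^{36}/36!$ is the volume of the unit ball in dimension $72$. Therefore
\[
R \;\ge\; V_{72}^{-1/72} \;=\; \Big(\tfrac{36!}{\pi^{36}}\Big)^{1/72}.
\]

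It then remains to check $\big(36!/\pi^{36}\big)^{1/72} > 2$, i.e. $36! > 2^{72}\pi^{36} = (4\pi)^{36}$. This is the only computational step, and it holds with a wide margin: Stirling's approximation gives $\ln(36!) \approx 95.72$, whereas $36\ln(4\pi) \approx 36\cdot 2.531 = 91.12$, so $36!/(4\pi)^{36} = e^{4.60} \approx 10^{2}$. Consequently $R(\mathscr{N}_{72}) \ge (36!/\pi^{36})^{1/72} \approx 2.13 > 2$, which proves the lemma. I do not expect a genuine obstacle here: the whole argument is the sphere bound plus one inequality, and the factor of roughly $100$ in $36! > (4\pi)^{36}$ leaves no room for rounding concerns. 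The only alternative I can see—exhibiting an explicit deep hole of $\mathscr{N}_{72}$ at distance exceeding $2$ from the lattice using the description \eqref{equ_succ}—would be far more laborious and is unnecessary once the volume bound already clears the threshold.
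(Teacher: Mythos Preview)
Your argument is correct, and it is genuinely different from the paper's proof. The paper (reproducing an argument of Nebe) argues by contradiction via a coset count: if $R\le 2$ then every coset of $2\mathscr{N}_{72}$ in $\mathscr{N}_{72}$ contains a vector of squared norm at most $16$, but the theta series of $\mathscr{N}_{72}$ shows there are only about $1.08\times 10^{20}$ pairs $\pm w$ with $\|w\|^2\le 16$, which is fewer than the $2^{72}\approx 4.7\times 10^{21}$ cosets. Your route is strictly more elementary: the sphere-covering inequality $V_{72}R^{72}\ge \vol(\mathscr{N}_{72})=1$ uses only unimodularity and the dimension, no knowledge of the theta series at all, and in fact yields the slightly sharper numerical bound $R\gtrsim 2.13$. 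The trade-off is that your bound is generic---it holds for \emph{any} unimodular lattice in dimension $72$---whereas Nebe's coset argument is a template that, fed with more terms of the theta series or applied to finer sublattices, can in principle be pushed further for this specific lattice; but for the statement at hand your proof is both shorter and self-contained.
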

The proof of this lemma is due to Gabriele Nebe (private communication).
\begin{proof}
Let $\mathscr{N}_{72}$ be scaled such that $\rho(\mathscr{N}_{72}) = \sqrt{2}$.  
The proof is done by contradiction.
Assume 
that $R(\mathscr{N}_{72})~=~\sqrt{2}\rho(\mathscr{N}_{72})=2$.
Then, for any point $1/2 v \in 1/2 \mathscr{N}_{72}$,
there is a point $x \in \mathscr{N}_{72}$ with $|| x-1/2v ||  \le 2$.
Squaring leads to $|| 2x - v || ^2  \le 16 $.
So each of the $2^{72}$ cosets of $2\mathscr{N}_{72}$ in $\mathscr{N}_{72}$ has to contain
a point $w= 2x-v$ of squared norm  smaller or equal to 16. 

Now $\mathscr{N}_{72}$ has exactly 107502190683149087281 pairs ${ \pm w }$ of squared norm $\le 16$ (obtained from the theta series of $\mathscr{N}_{72}$).
This number is smaller than $|\mathscr{N}_{72}/2\mathscr{N}_{72}|$.
Hence the covering radius of $\mathscr{N}_{72}$ is strictly larger than~2.
\end{proof}

\newpage


\end{document}